\documentclass{amsart}
\usepackage{amsmath,amsfonts,amsthm}

\usepackage{cite}
\usepackage{color,xcolor}
\usepackage{comment}
\usepackage[normalem]{ulem}
\usepackage{array}
	\newcolumntype{L}{>{\displaystyle}l}
	\newcolumntype{C}{>{\displaystyle}c}
	\newcolumntype{R}{>{\displaystyle}r}
\usepackage[linktocpage]{hyperref}
\newcommand{\reff}[1]{\hyperref[#1]{\ref{#1}}}
\definecolor{dark-blue}{rgb}{0,0,.6}
\hypersetup{
  colorlinks= true,	
  urlcolor= blue,		
  linkcolor= blue,	
  citecolor= blue,	
  linkbordercolor=blue,
  citebordercolor=blue
}

\newcommand{\Z}{\ensuremath{\mathbb{Z}}}
\newcommand{\R}{\ensuremath{\mathbb{R}}}

\newcommand{\N}{\ensuremath{\mathbb{N}}}

\newcommand{\B}{\ensuremath{\mathbb{B}}}
\newcommand{\T}{\ensuremath{\mathbb{T}}}

\renewcommand{\S}{\ensuremath{\mathbb{S}}}

\newcommand{\calA}{\ensuremath{\mathcal{A}}}

\newcommand{\calL}{\ensuremath{\mathcal{L}}}
\newcommand{\calM}{\ensuremath{\mathcal{M}}}

\newcommand{\calU}{\ensuremath{\mathcal{U}}}
\newcommand{\calS}{\ensuremath{\mathcal{S}}}

\newcommand{\bfa}{\ensuremath{\mathbf{a}}}

\newcommand{\bfzero}{\ensuremath{\mathbf{0}}}
\newcommand{\bfone}{\ensuremath{\mathbf{1}}}

\newcommand{\bs}{\backslash}
\newcommand{\inv}{\ensuremath{^{-1}}}

\newcommand{\defword}[1]{{\it #1}}

\renewcommand{\epsilon}{\varepsilon}
\renewcommand{\phi}{\varphi}
\renewcommand{\th}{^{\textnormal{th}}}


\DeclareMathOperator{\Fix}{Fix}

\DeclareMathOperator{\tr}{tr}
\DeclareMathOperator{\Aut}{Aut}
\DeclareMathOperator{\End}{End}

\DeclareMathOperator{\GL}{GL}
\DeclareMathOperator{\SL}{SL}
\DeclareMathOperator{\supp}{supp}

\newcommand{\set}[1]{\ensuremath{\left\{ #1 \right\}}}
\newcommand{\brac}[1]{\ensuremath{\langle #1 \rangle}}

\newtheorem{theorem}{Theorem}[section]
\newtheorem{lemma}[theorem]{Lemma}
\newtheorem{proposition}[theorem]{Proposition}
\newtheorem{corollary}[theorem]{Corollary}

\newtheorem*{theorem*}{Theorem}
\newtheorem*{thm:linear}{Theorem \reff{T:LinearGapIntro}}
\newtheorem*{thm:hausdim}{Theorem \reff{T:HausDimIntro}}
\newtheorem*{thm:gaplabel}{Theorem \reff{T:GapLabelingIntro}}
\newtheorem*{thm:exactdim}{Theorem \reff{T:ExactDimIntro}}
\newtheorem*{thm:lessthandim}{Theorem \reff{T:LessThanDimIntro}}

\theoremstyle{definition}
\newtheorem{definition}[theorem]{Definition}

\theoremstyle{remark}
\newtheorem{remark}[theorem]{Remark}

\title[Primitive Invertible Substitution Schr\"odinger Operators]{Spectra of Discrete Schr\"odinger Operators with Primitive Invertible Substitution Potentials}
\author{May Mei}
\address{Mathematics \& Computer Science, Denison University, Granville, OH 43023-0810}
\email{meim@denison.edu}
\thanks{The author gratefully acknowledges the support of the Michele T. Myers PD Account through Denison University. The author was also partially supported by the NSF grants DMS-1301515 (PI: A. Gorodetski) and IIS-1018433 (PI: M. Welling, Co-PI: A. Gorodetski).}
\date{\today}
\begin{document}

\begin{abstract}
We study the spectral properties of discrete Schr\"odinger operators with potentials given by primitive invertible substitution sequences (or by Sturmian sequences whose rotation angle has an eventually periodic continued fraction expansion, a strictly larger class than primitive invertible substitution sequences). It is known that operators from this family have spectra which are Cantor sets of zero Lebesgue measure. We show that the Hausdorff dimension of this set tends to $1$ as coupling constant $\lambda$ tends to $0$. Moreover, we also show that at small coupling constant, all gaps allowed by the gap labeling theorem are open and furthermore open linearly with respect to $\lambda$. Additionally, we show that, in the small coupling regime, the density of states measure for an operator in this family is exact dimensional. The dimension of the density of states measure is strictly smaller than the Hausdorff dimension of the spectrum and tends to $1$ as $\lambda$ tends to $0$.
\end{abstract}

\maketitle

\tableofcontents


\section{Introduction}\label{S:Introduction}

Let $\lambda>0$, let $\alpha \in \T^1:=\R/\Z$, be irrational, and let $\omega \in \T^1$. The discrete Schr\"odinger operator $H_{\lambda, \alpha, \omega}$ is a bounded, self-adjoint operator on $\ell^2(\Z)$ given by
\begin{equation}\label{E:SchrodingerOperator}
(H_{\lambda, \alpha, \omega} \psi)(n)=\psi(n+1)+\psi(n-1)+\lambda \cdot R_{\alpha, \omega}(n) \cdot\psi(n),\quad \psi \in \ell^2(\Z),
\end{equation}
where
\begin{equation}\label{E:RotationSequence}
R_{\alpha, \omega}(n):=\chi_{[1-\alpha,1)}(n\alpha + \omega \mod 1)
\end{equation}
 is a \defword{rotation sequence} of \defword{angle} $\alpha$ and \defword{initial point} or \defword{phase} $\omega$. In general, this sequence in the third summand is referred to as the \defword{potential} and $\lambda$ is called the \defword{coupling constant}. 

This operator with this particular choice of potential is a popular model in the study of electronic properties of quasicrystals, whose discovery by Dan Shechtman earned him the 2011 Nobel Prize in Chemistry \cite{SBGC84}. Rotation sequences are \defword{Sturmian}, i.e. non-periodic sequences of the lowest possible complexity (a formal definition is given in Section~\reff{S:Sturmian}). We are motivated in the choice of this class of potentials by the heuristic that crystals are perfectly ordered (periodic) and quasicrystals are aperiodic, but still highly ordered. See \cite{DEG13} for a survey of results regarding Schr\"odinger operators (including higher dimensions) that model quasicrystals. Other related survey papers include \cite{Bel92b}, \cite{BG95}, \cite{Dam07}, and \cite{Sut95}.

Primitive invertible substitutions, which are rigorously defined in Section~\reff{S:Substitution}, are another method of generating Sturmian sequences, although not all Sturmian sequences can be achieved as a substitution sequence. Refer to the survey papers in the previous paragraph, and to \cite{Can09}.  After this paper was prepared, we learned that some related results were obtained by Arnaud Girand in \cite{Gir13}. The canonical example of a primitive invertible substitution is the Fibonacci substitution $s_F: 0 \mapsto 01, 1 \mapsto 0$. This substitution generates the sequence $R_{\frac{\sqrt{5}-1}{2}, 0}$ (details are provided in Section~\reff{S:Sturmian}). The spectral properties of the corresponding operator, the Fibonacci Hamiltonian, has been thoroughly studied in \cite{Cas86}, \cite{Sut87}, and \cite{Sut89}, and more recently in \cite{DEGT08}, \cite{DG09a}, \cite{DG09b}, \cite{DG11}, and \cite{DG12}.


In this paper, we generalize some results previously known for the Fibonacci Hamiltonian to any discrete Schr\"odinger operator with primitive invertible substitution sequence as its potential. Forthcoming joint work with with William Yessen \cite{MY14} extends these results to Jacobi operators. A standard argument, which uses the minimality of the left shift on the dynamical hull generated by the substitution sequence (see Section~\reff{S:Sturmian} of this paper and \cite{Dam07}) shows that the spectrum of $H_{\lambda, \alpha, \omega}$ is independent of initial point $\omega$. Thus, we drop $\omega$ from the notation and denote the spectrum of the operator $H_{\lambda, \alpha}$ by $\Sigma_{\lambda, \alpha}$.

It was shown in \cite{Sut89} that the Fibonacci Hamiltonian has measure zero Cantor spectrum. This results was then extended to all Sturmian potentials in \cite{BIST89}. Later, \cite{DL06}, established Cantor spectrum of zero Lebesgue measure for a large class of low-complexity potentials. See also \cite{Len01} and \cite{LTWW02}. It is natural to ask about the behavior of gaps in the Cantor spectrum as one varies the coupling constant $\lambda$.

\begin{theorem}\label{T:LinearGapIntro}
For $\alpha$ with eventually periodic continued fraction expansion and $\lambda>0$ sufficiently small, the boundary points of a gap in the spectrum $\Sigma_{\lambda,\alpha}$ depend smoothly on the coupling constant $\lambda$. Moreover, given any one-parameter continuous family $\set{U_{\lambda,\alpha}}_{\lambda>0}$ of gaps of $\Sigma_{\lambda,\alpha}$, we have that
$$\lim_{\lambda \to 0^+} \frac{|U_{\lambda,\alpha}|}{|\lambda|}$$
exists and belongs to $(0,\infty)$.
\end{theorem}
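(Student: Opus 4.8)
The plan is to reduce the spectral statement to a problem in hyperbolic dynamics via the trace map formalism, which is the standard machinery for Sturmian potentials. First I would recall that for a potential generated by a primitive invertible substitution (or more generally a Sturmian potential with eventually periodic continued fraction expansion of $\alpha$), the spectrum $\Sigma_{\lambda,\alpha}$ is described through the dynamics of a map on the trace variety: one associates to $H_{\lambda,\alpha}$ the sequence of transfer matrices over the canonical periodic approximants, and the traces $x_k = x_k(E)$ of these matrices satisfy a recursion. When the continued fraction expansion of $\alpha$ is eventually periodic, the composition of the substitution rules over one period of the expansion yields a single polynomial map $T\colon \mathbb{R}^3 \to \mathbb{R}^3$ (or an iterate thereof) which preserves the Fricke--Vogt invariant $I(x,y,z) = x^2+y^2+z^2 - 2xyz - 1 = \lambda^2/4$. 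The spectrum is then realized as the set of energies $E$ for which the trace orbit stays bounded, i.e. $\Sigma_{\lambda,\alpha} = \{E : (x_k(E))_k \text{ is bounded}\}$, and this in turn is governed by the stable set / nonwandering set of $T$ restricted to the invariant surface $S_\lambda = I^{-1}(\lambda^2/4)$.

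Next I would invoke (or reprove, citing the Fibonacci case in \cite{DG11}, \cite{Cas86}, \cite{Sut87}) the fact that for $\lambda > 0$ the relevant piece of the nonwandering set $\Omega_\lambda$ of $T$ on $S_\lambda$ is a locally maximal hyperbolic set, and that for small $\lambda$ this hyperbolic set depends analytically on $\lambda$ (structural stability of hyperbolic sets plus the fact that $T$ and $S_\lambda$ depend polynomially on $\lambda$). The boundary points of a gap $U_{\lambda,\alpha}$ correspond to energies $E^\pm(\lambda)$ which are endpoints of the complement; dynamically these are energies where the trace orbit is a specific bounded orbit lying on the stable manifold of a periodic point of $T$ (the band edges are characterized by the condition that some $|x_k(E)| \le 1$ boundary is achieved, i.e. the orbit lands on the curve of period-one or period-two behavior). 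Because the hyperbolic set and its local stable/unstable laminations vary analytically with $\lambda$, and because the band-edge condition is an analytic equation in $(E,\lambda)$ with non-degenerate $E$-derivative (which follows from the transversality of the energy curve $\{(x_0(E),y_0(E),z_0(E)) : E \in \mathbb{R}\}$ to the stable manifolds — a transversality that holds for all $\lambda \ge 0$ for invertible substitutions, and is the key input from the earlier sections and from \cite{DG11}), the implicit function theorem gives that $E^\pm(\lambda)$ are smooth (indeed analytic) functions of $\lambda$ on a neighborhood of $0$.

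The heart of the linear-opening claim is then the behavior as $\lambda \to 0^+$. At $\lambda = 0$ the potential is absent, $H_{0,\alpha}$ is the free Laplacian, $\Sigma_{0,\alpha} = [-2,2]$, and the surface $S_0 = I^{-1}(0)$ is the (singular) Cayley cubic; the curve $E \mapsto (x_0(E),y_0(E),z_0(E))$ at $\lambda=0$ parametrizes a curve inside $S_0$, and each gap $U_{\lambda,\alpha}$ collapses to a single point $E_0 \in [-2,2]$ (a "closed gap" of the free operator) as $\lambda \to 0$. To extract the linear rate, I would Taylor expand the two band-edge functions: writing $E^\pm(\lambda) = E_0 + \lambda c^\pm + O(\lambda^2)$, one has $|U_{\lambda,\alpha}| = |E^+(\lambda) - E^-(\lambda)| = \lambda|c^+ - c^-| + O(\lambda^2)$, so the limit $\lim_{\lambda\to 0^+} |U_{\lambda,\alpha}|/\lambda = |c^+ - c^-|$ exists automatically from smoothness. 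What must be shown is that this limit is \emph{strictly positive} and \emph{finite}: finiteness is immediate from smoothness, but positivity requires that $c^+ \ne c^-$, i.e. that the gap genuinely opens to first order rather than to higher order. This is exactly the statement that all gaps allowed by gap labeling are open \emph{linearly}, and it is the main obstacle. I expect to establish it by an explicit first-order perturbation computation: linearizing the trace recursion in $\lambda$ around $\lambda=0$, the band edges at level $k$ are determined by $x_k(E) = \pm 1$ (more precisely by the degenerate/non-degenerate bands of the $k$-th periodic approximant), and the first-order coefficients $c^\pm$ are given by ratios of derivatives of $x_k$ with respect to $E$ evaluated along the free orbit; the key is that the two relevant band edges of the periodic approximant that shrink down to the labeled gap have \emph{distinct} slopes in $\lambda$ because the corresponding periodic orbits of the free trace map are distinct hyperbolic (in the limit, parabolic/elliptic) points whose unstable directions are transverse. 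Concretely, one reduces to checking that a certain Wronskian-type determinant built from derivatives of the trace polynomials is non-zero at $\lambda = 0$; I would verify this either by the known computation for the Fibonacci case and then transfer it along the continued-fraction periodicity, or by a direct induction on the Ostrowski/continued-fraction levels using that invertible substitutions act on the trace variables by compositions of the elementary maps $(x,y,z)\mapsto(y,z,2yz-x)$ and permutations, whose linearizations are explicit. The uniformity in the choice of gap (over the countable family of gaps) follows since there are only finitely many "gap types" modulo the period of the continued fraction expansion, and the positivity constant can be taken locally uniform on compact parameter ranges.
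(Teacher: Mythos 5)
Your overall framework (trace maps, invariant surface $S_\lambda$, hyperbolicity of the nonwandering set, gap boundaries as intersections of the line of initial conditions $\ell_\lambda$ with the stable lamination, hence smooth dependence on $\lambda$) is the right one and matches the paper. You also correctly isolate the crux: the finiteness of $\lim |U_{\lambda,\alpha}|/\lambda$ is cheap once smoothness is known, and the real content is \emph{positivity}, i.e.\ that the gap opens to first order.

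Where the proposal has a genuine gap is precisely at that crux. You say you ``expect'' to establish $c^+ \ne c^-$ by an explicit first-order perturbation of the trace recursion, checking that a ``Wronskian-type determinant'' is nonzero, either transferring from the Fibonacci case or by induction on continued-fraction levels. That is not an argument; it is a statement of intent, and the two suggested routes are not obviously available (the Fibonacci computation in the literature leans on the explicit cubic form of a single trace map, and it is far from clear how to ``transfer it along the continued-fraction periodicity''). Your geometric picture of the degeneration is also slightly off: you describe the two band edges as coming from two ``distinct hyperbolic \ldots points whose unstable directions are transverse.'' In fact, as $\lambda \to 0$ the two boundary periodic points of a labeled gap coalesce into the \emph{same} conic singularity of the Cayley cubic $S_0$ (one of $P_1,\ldots,P_4$ or a preimage under $T_\alpha^{\mathrm{int}}$); there is only one periodic orbit in the limit, and the issue is the rate at which it splits, not a transversality between two distinct orbits.

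The paper's route is structurally different from what you propose and does not require any explicit trace-polynomial induction. It first shows (Section~\reff{S:PeriodicCurve}) that the central manifold of $T_a$ through the singularity $P_1$ is transversal to $S_\lambda$ for small $\lambda>0$. The mechanism is a quadratic-form argument: the differential $D(T_a)(P_1)$ preserves the quadratic part $J'$ of the shifted Fricke--Vogt invariant (Lemma~\reff{L:QF_Preserved}), one passes to the standard form $x^2+y^2-z^2$, and one shows that the central eigenvector must lie in the \emph{exterior} of the null cone (Proposition~\reff{P:NullCone}), since the associated bilinear form pairs it trivially with the unstable eigenvector. Transversality then survives adding the cubic terms (Lemma~\reff{L:TransversalPertubation}). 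The Morse lemma gives that the two intersection points $p_1(\lambda),p_1'(\lambda)$ of the periodic curve with $S_\lambda$ are separated by a distance $\sim \lambda$ (Lemma~\reff{L:LinearPertubation}). Finally, Lemma~\reff{L3.2:DG11} --- a projection lemma for a line meeting a $C^1$-foliated surface --- transfers this linear separation along the stable lamination of the periodic curve to the separation of the two gap endpoints on $\ell_\lambda$. This foliation lemma is the transfer step your proposal is missing; replacing it with a level-by-level perturbation of the $x_k$-recursion is not what the paper does and, as written, is not substantiated. If you want to complete your proof along your own lines, the quantity you would need to show is nonzero is exactly the transversality of the central eigendirection of $D(T_\alpha^{\mathrm{per}})(P_1)$ to the null cone of the linearized invariant, and the cleanest way to get that is the bilinear-form argument of Proposition~\reff{P:NullCone}.
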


The Cantor structure of the spectrum also motivates the study of fractal properties of the spectrum, namely Hausdorff dimension, denoted $\dim$ as there is no ambiguity in this paper, and \defword{thickness}, which is defined in Section~\reff{S:ContHausDim}. As a corollary to \cite{Can09}, $\Sigma_{\lambda, \alpha}$ is a \defword{dynamically defined} Cantor set for $\alpha$ with eventually periodic continued fraction expansion and $\lambda>0$ sufficiently small. From this, we can immediately see that the local Hausdorff dimension of the spectrum $\Sigma_{\lambda, \alpha}$ is equal to the global Hausdorff dimension and this is also equal to the box counting dimension. It was already known from \cite{Can09} that $\dim(\Sigma_{\lambda,\alpha})$ is strictly between 0 and 1 for $\lambda>0$. Furthermore, the Hausdorff dimension of the spectrum $\Sigma_{\lambda, \alpha}$  depends continuously on $\lambda$. Refer to Section 2 of \cite{DG09a} and the references contained therein or to \cite{PT93} for general background on relevant theorems from dynamical systems.

\begin{theorem}\label{T:HausDimIntro}
For $\alpha$ with eventually periodic continued fraction expansion,
$$\lim_{\lambda \to 0^+} \dim \Sigma_{\lambda, \alpha} = 1.$$
More precisely, there are constants $C_1, C_2 > 0$ such that
$$1-C_1 \lambda \leq \dim \Sigma_{\lambda, \alpha} \leq 1-C_2 \lambda$$
for $\lambda>0$ sufficiently small.
\end{theorem}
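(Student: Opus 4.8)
The plan is to exploit the fact, available from \cite{Can09}, that for $\alpha$ with eventually periodic continued fraction expansion and $\lambda>0$ small the spectrum $\Sigma_{\lambda,\alpha}$ is a dynamically defined Cantor set, arising as the hyperbolic set of a smooth expanding map (the trace map / renormalization dynamics associated with the substitution). For such Cantor sets the Hausdorff dimension equals the box-counting dimension and can be estimated above and below by the distortion and the derivatives of the branches of the defining map; equivalently one controls $\dim\Sigma_{\lambda,\alpha}$ by comparing the sizes of the bands at generation $n$ with those at generation $n+1$. Concretely, I would fix a fundamental domain on the trace surface $S_\lambda$ whose branches have derivatives comparable to the ``transfer matrix traces'' along the periodic part of the continued fraction expansion, and then use the standard two-sided estimate: if every branch of the $k$-th iterate expands by a factor in $[\underline{D}_k,\overline{D}_k]$ with bounded distortion, then $\log N / \log \overline{D}_k \le \dim \le \log N / \log \underline{D}_k$, where $N$ is the number of branches.

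The first step is to make the $\lambda$-dependence explicit. At $\lambda=0$ the operator is the free Laplacian, whose spectrum is $[-2,2]$; the Cantor set degenerates, the gaps close, and the relevant hyperbolic set becomes non-uniformly (in fact not) hyperbolic — the expansion rates of the renormalization map along the relevant orbit tend to $1$ as $\lambda\to0$. So I would carry out a perturbative analysis of the trace map near $\lambda=0$: the branch derivatives along the periodic cycle should have the form $1 + c\,\lambda + O(\lambda^2)$ with $c>0$, while the number of branches $N$ stays fixed (it is determined by the combinatorics of the substitution and the period of the continued fraction, not by $\lambda$). This is where one invokes Theorem \ref{T:LinearGapIntro}: the linear opening of gaps is exactly the statement that the ``defect'' from the free case is of order $\lambda$, and the same mechanism forces the contraction/expansion ratios to differ from $1$ by order $\lambda$.

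Given that, the dimension estimate is essentially an exercise in the thermodynamic formalism: $\dim\Sigma_{\lambda,\alpha}$ is the unique zero $s$ of a Bowen-type pressure equation $P(-s\log|D f_\lambda|)=0$, and since $\log|Df_\lambda| = \Theta(\lambda)$ uniformly on the hyperbolic set (with constants depending only on the substitution), one gets $1-s = \Theta(\lambda)$ by implicit-function-theorem-type reasoning, i.e. there exist $C_1,C_2>0$ with $1-C_1\lambda \le \dim\Sigma_{\lambda,\alpha}\le 1-C_2\lambda$. Alternatively, and more in the spirit of the Fibonacci literature (\cite{DG11}, \cite{DEGT08}), one can bypass pressure and argue directly with the band/gap covers: the gap labeling fixes the number of gaps at each generation, the linear-gap theorem controls their total length, and a packing argument turns ``total length of gaps at scale $n$ is $\asymp$ (something)$^n$'' into the two-sided dimension bound.

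The main obstacle I anticipate is the uniformity of all constants in $\lambda$ \emph{and} their dependence on $\alpha$ (equivalently on the eventual period of the continued fraction): one must show the expansion rates are bounded away from $1$ on the correct scale $\lambda$ with constants that do not blow up, which requires careful bookkeeping of how the renormalization operator composes over one full period of the continued fraction, and a quantitative bounded-distortion estimate (Koebe-type) that survives the $\lambda\to0$ limit where hyperbolicity degenerates. Handling the purely periodic versus eventually periodic case adds a routine but necessary reduction: the pre-periodic part of the continued fraction contributes only a fixed finite composition that affects the Cantor set by a bi-Lipschitz change of coordinates and hence does not change the dimension.
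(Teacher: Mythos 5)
Your proposal contains a genuine conceptual error in the identification of the mechanism driving $\dim \Sigma_{\lambda,\alpha}\to 1$. You write that ``the expansion rates of the renormalization map along the relevant orbit tend to $1$ as $\lambda\to0$'' and that ``the branch derivatives along the periodic cycle should have the form $1 + c\lambda + O(\lambda^2)$,'' and you then propose that $\log|Df_\lambda| = \Theta(\lambda)$ so Bowen's equation yields $1-s=\Theta(\lambda)$. This is false: the restriction of $T_\alpha^{\mathrm{per}}$ to the surface $\S$ at $\lambda=0$ is already pseudo-Anosov (a factor of the linear hyperbolic toral automorphism $M_\alpha^{\mathrm{per}}$ under the semi-conjugacy $F$ of Lemma~\reff{L:Semi-conjugacy}), so its unstable expansion rate equals $\mu>1$ (the large eigenvalue of $M_\alpha^{\mathrm{per}}$) and is \emph{bounded away from $1$} as $\lambda\to0$, except at the four conic singularities. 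The dimension tends to $1$ not because hyperbolicity degenerates uniformly, but because the gaps of $\Sigma_{\lambda,\alpha}$ close at rate $O(\lambda)$ (Theorem~\reff{T:LinearGapIntro}) while the bridges between them remain of unit order. With the derivative interpretation corrected, a thermodynamic-formalism argument could in principle be run, but the pressure defect then comes from the shrinking holes in the Markov partition, not from the derivatives---so the implicit-function-theorem computation you sketch would have to be reorganized around that.

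The paper bypasses pressure entirely and works with thickness $\tau$ and denseness $\theta$ (Definition~\reff{D:Thickness}). The heart of the proof is Theorem~\reff{T:Thickness}: $C_3\lambda^{-1}\le \tau(\Sigma_{\lambda,\alpha})\le\theta(\Sigma_{\lambda,\alpha})\le C_4\lambda^{-1}$, obtained from the distortion estimates of \cite{DG11} once one has (i) a smooth rectification $\Phi$ near the singularity $P_1$ straightening the center/stable/unstable manifolds (Lemma~\reff{L:Rectification}) and (ii) a Markov partition built from stable and unstable manifolds (Lemma~\reff{L:Partition}), which is exactly where the linear gap opening feeds in. Then the Newhouse--Palis--Takens bounds
$$\frac{\log 2}{\log\bigl(2+\tfrac{1}{\tau(C)}\bigr)} \le \dim_{\mathrm H} C \le \frac{\log 2}{\log\bigl(2+\tfrac{1}{\theta(C)}\bigr)}$$
(Proposition~\reff{P:ThicknessHausDim}) convert the $\Theta(\lambda^{-1})$ thickness estimate into the two-sided bound $1-C_1\lambda \le \dim\Sigma_{\lambda,\alpha}\le 1-C_2\lambda$. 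Your alternative ``band/gap cover'' remark gestures in the right direction, but a statement about ``total gap length'' is not enough; one needs the relative (bridge-to-gap) estimate, i.e. precisely the thickness, and one needs to check uniformity of distortion near the singularity, which is exactly what the rectification lemma supplies. I would recommend rewriting your argument to make thickness (or a correctly formulated local dimension estimate at the singular Markov box) the central object, and dropping the claim that the unstable multipliers degenerate to $1$.
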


Let $H^\mathfrak{I}_{\lambda, \alpha}$ be $H_{\lambda, \alpha}$ restricted to a finite interval $\mathfrak{I}$ with Dirichlet boundary conditions and let the \defword{integrated density of states} be given by
\begin{equation}\label{E:IDOS}
N_{\lambda, \alpha}(E)=\lim_{L \to \infty} \frac{1}{L} \left|\left\{\text{eigenvalues of }H_{\lambda, \alpha}^{[1,L]} \text{ that lie in }(-\infty,E)\right\}\right|.
\end{equation}
The existence of this limit has been shown in a general setting in \cite{LS05}. It has also been studied for general potentials, random potentials, and analytic quasi-periodic potentials, see Section 5.5 of \cite{DEG13} for references. This will be developed more formally in Section~\reff{S:IDOS}, but for now we simply state that $N_{\lambda, \alpha}$ is the cumulative distribution function of the averaged in $\omega$ spectral measure supported on $\Sigma_{\lambda, \alpha}$, which we denote $dN_{\lambda, \alpha}$ and call the \defword{density of states measure}. 

\begin{theorem}\label{T:ExactDimIntro}
For $\alpha$ with eventually periodic continued fraction expansion, there exists $0<\lambda_0\leq \infty$ such that for $\lambda \in (0, \lambda_0)$, there is $d_{\lambda, \alpha} \in (0,1)$ so that $dN_{\lambda, \alpha}$ is of exact dimension $d_{\lambda, \alpha}$, that is, for $dN_{\lambda, \alpha}$-almost every $E \in \R$,
$$\lim_{\epsilon \to 0^+} \frac{\log(N_{\lambda, \alpha}(E+\epsilon)-N_{\lambda,\alpha}(E-\epsilon))}{\log \epsilon}=d_{\lambda, \alpha}.$$
Moreover, in $(0, \lambda_0)$, $d_{\lambda, \alpha}$ is a $C^\omega$ function of $\lambda$ and $$\lim_{\lambda \to 0^+} d_{\lambda, \alpha}=1.$$
\end{theorem}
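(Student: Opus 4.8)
The plan is to realize $\Sigma_{\lambda,\alpha}$ and the density of states measure through the dynamics of the trace map, exploiting the fact (from \cite{Can09}) that for $\alpha$ with eventually periodic continued fraction expansion and small $\lambda$, the spectrum is a dynamically defined Cantor set arising as the nonwandering set of a smooth uniformly hyperbolic map $f$ on a bounded piece of the invariant surface $S_\lambda$. I would first set up the thermodynamic formalism: the density of states measure $dN_{\lambda,\alpha}$ corresponds, under the conjugacy given by the integrated density of states, to the measure of maximal entropy $\mu_\lambda$ for $f$ restricted to this hyperbolic set $\Omega_\lambda$ (this is the standard identification, going back to the Fibonacci case in \cite{DEGT08} and \cite{DG12}). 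Then the exact-dimensionality of $dN_{\lambda,\alpha}$ reduces to exact-dimensionality of $\mu_\lambda$ as a measure on a conformal-on-each-direction hyperbolic Cantor set; since the Cantor set here is one-dimensional (sitting on a curve transverse to the stable foliation), $f$ restricted to it is effectively a $C^{1+}$ expanding map of an interval with finitely many branches, so $\mu_\lambda$ being the equilibrium state of a Hölder potential is automatically exact dimensional with
\[
d_{\lambda,\alpha}=\frac{h_{\mu_\lambda}(f)}{\int \log|f'|\,d\mu_\lambda}
\]
by the classical results of Young and of Ledrappier. I would invoke these (citing Section 2 of \cite{DG09a} and \cite{PT93} for the hyperbolic-dynamics background) rather than reprove them.

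Next I would address the analytic dependence on $\lambda$. The family $f=f_\lambda$ depends analytically on $\lambda$ (the trace map is polynomial and the surfaces $S_\lambda$ vary analytically), so the hyperbolic set $\Omega_\lambda$ and its holonomies move analytically, and the pressure function $(t,\lambda)\mapsto P(f_\lambda, -t\log|f_\lambda'|)$ is jointly real-analytic by Ruelle's analyticity of pressure together with the analytic dependence of the transfer operator on parameters. The entropy $h_{\mu_\lambda}(f_\lambda)$ equals the topological entropy of $f_\lambda$ on $\Omega_\lambda$, which is constant (equal to $\log$ of the Perron root of the substitution's incidence/transition structure, as the symbolic dynamics is locally constant in $\lambda$), and $\lambda\mapsto\int\log|f_\lambda'|\,d\mu_\lambda$ is real-analytic by differentiating the equilibrium state against an analytic family of potentials. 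Hence $d_{\lambda,\alpha}=h/\!\int\log|f_\lambda'|\,d\mu_\lambda$ is $C^\omega$ on the interval $(0,\lambda_0)$ where uniform hyperbolicity and the dynamical description persist.

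For the limit $d_{\lambda,\alpha}\to 1$ as $\lambda\to 0^+$, I would combine the variational characterization with Theorem~\reff{T:HausDimIntro}. The Hausdorff dimension $s_\lambda:=\dim\Sigma_{\lambda,\alpha}$ is the unique zero of $t\mapsto P(f_\lambda,-t\log|f_\lambda'|)$ (Bowen's formula), and $d_{\lambda,\alpha}$ is the ratio above for the measure of maximal entropy. By convexity of the pressure in $t$ and the fact that the equilibrium state for $-s_\lambda\log|f_\lambda'|$ is the one achieving $\dim$, one has in general $d_{\lambda,\alpha}\le s_\lambda$, with equality iff $\log|f_\lambda'|$ is cohomologous to a constant on $\Omega_\lambda$. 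Since $s_\lambda\to 1$ by Theorem~\reff{T:HausDimIntro}, and $d_{\lambda,\alpha}\le 1$ trivially (it is a dimension of a measure on $\R$), it suffices to show $d_{\lambda,\alpha}\ge 1-C\lambda$. This I would get from a Koebe-type distortion bound: as $\lambda\to 0$ the branches of the expanding map become uniformly close to affine with derivatives tending to those of the $\lambda=0$ (free) model, so $\log|f_\lambda'|$ has oscillation $O(\lambda)$ on $\Omega_\lambda$, forcing both $s_\lambda$ and $d_{\lambda,\alpha}$ into a window of size $O(\lambda)$ around $1$; quantitatively, if $\mathrm{osc}(\log|f_\lambda'|)\le C\lambda$ then $|d_{\lambda,\alpha}-s_\lambda|\le C'\lambda$, and the two-sided bound on $s_\lambda$ transfers.

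The main obstacle, I expect, is the uniformity across the (infinitely many, for varying eventually-periodic $\alpha$) substitution systems: one must control the period of the continued fraction and the corresponding iterate of the trace map, showing that the distortion and hyperbolicity estimates degrade in a controlled way so that the constants in $d_{\lambda,\alpha}\ge 1-C\lambda$ can be taken uniform, or at least that for each fixed $\alpha$ the argument goes through with $\lambda_0=\lambda_0(\alpha)$. A secondary technical point is verifying that the identification of $dN_{\lambda,\alpha}$ with the measure of maximal entropy — which is classical for Fibonacci — goes through verbatim for a general primitive invertible substitution; this rests on the fact that $N_{\lambda,\alpha}$ maps gaps to the "correct" dyadic-type values dictated by gap labeling and is a homeomorphism between $\Sigma_{\lambda,\alpha}$ and the attractor of the symbolic system, which I would extract from the gap-labeling analysis underlying Theorem~\reff{T:LinearGapIntro} together with \cite{LS05}.
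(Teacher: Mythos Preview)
Your overall architecture --- identify $dN_{\lambda,\alpha}$ with the measure of maximal entropy $\mu_\lambda$ for the trace-map hyperbolic set, read off exact-dimensionality and the formula $d_{\lambda,\alpha}=h_{\mu_\lambda}/\int\log|f_\lambda'|\,d\mu_\lambda$ from Young--Ledrappier, and obtain $C^\omega$ dependence from Ruelle's analyticity of pressure --- is correct and is essentially the content of the \cite{DG12} argument that the paper defers to. The paper's own contribution is a slightly different packaging of the measure identification: rather than invoking the Fibonacci case as ``standard,'' it constructs $\nu_{\lambda,\alpha}$ as the projection of normalized Lebesgue measure along the stable lamination onto $\ell_\lambda$, and checks directly that (i) $\supp\nu_{\lambda,\alpha}=\Sigma_{\lambda,\alpha}$, (ii) $\nu_{0,\alpha}=dN_0$, (iii) continuity in $\lambda$, and (iv) the $\nu_{\lambda,\alpha}$-measure between any two continuous families of gaps is $\lambda$-independent. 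Properties (ii)--(iv) force $\nu_{\lambda,\alpha}=dN_{\lambda,\alpha}$ via gap labeling, which is a cleaner route than the one you sketch in your final paragraph.

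There is, however, a genuine gap in your argument for $\lim_{\lambda\to 0^+}d_{\lambda,\alpha}=1$. You claim that as $\lambda\to 0$ the oscillation of $\log|f_\lambda'|$ on $\Omega_\lambda$ is $O(\lambda)$, so that $|d_{\lambda,\alpha}-s_\lambda|\le C'\lambda$. This is false. The proof of Theorem~\reff{T:LessThanDimIntro} in the paper computes that the unstable multiplier of $T_\alpha^{\mathrm{per}}$ at the singularity $P_1$ is the \emph{square} of the multiplier at a generic periodic point (because the semi-conjugacy $F$ is quadratic at $(0,0)$). Hence $\mathrm{osc}(\log|f_\lambda'|)\ge\log\mu$ for all small $\lambda$, where $\mu>1$ is the Perron root of $M_\alpha^{\mathrm{per}}$; the potential $\log|f_\lambda'|$ is never close to cohomologous to a constant, and indeed $d_{\lambda,\alpha}<s_\lambda$ strictly for every $\lambda>0$. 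The correct argument for $d_{\lambda,\alpha}\to 1$ does not go through distortion. One shows instead that $\int\log|f_\lambda'|\,d\mu_\lambda\to h_{\mathrm{top}}$ as $\lambda\to 0$: on the torus the toral automorphism has constant unstable derivative $\mu$ and Lebesgue is the measure of maximal entropy, and under the semi-conjugacy $F$ the unstable log-derivative on $\S$ differs from $\log\mu$ by the coboundary $\log|DF_{A(\cdot)}v|-\log|DF_{(\cdot)}v|$, which is Lebesgue-integrable (logarithmic singularities at $\tilde P_i$) and integrates to zero by $A$-invariance of Lebesgue. Thus $\mathrm{Lyap}^u(\mu_{\max,0})=\log\mu=h_{\mathrm{top}}$, and continuity of the Lyapunov exponent in $\lambda$ (which you already have from analyticity) gives $d_{\lambda,\alpha}\to 1$. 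The point is that the anomalous multipliers sit on a set of $\mu_{\max}$-measure zero, which a sup-norm oscillation bound cannot see.
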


\begin{theorem}\label{T:LessThanDimIntro}
For $\alpha$ with eventually periodic continued fraction expansion and for $\lambda>0$ sufficiently small, we have $$d_{\lambda, \alpha}<\dim \Sigma_{\lambda, \alpha}.$$
\end{theorem}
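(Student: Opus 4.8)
The plan is to compare the density of states measure with the equilibrium (measure of maximal dimension) measure on the dynamically defined Cantor set $\Sigma_{\lambda,\alpha}$, and to show that these two measures are genuinely different. Recall that by Theorem~\reff{T:HausDimIntro} (and the underlying result of \cite{Can09}) $\Sigma_{\lambda,\alpha}$ is a dynamically defined Cantor set, so there is a hyperbolic transfer-operator / thermodynamic-formalism description of it: $\dim\Sigma_{\lambda,\alpha}$ is the unique zero $s=s(\lambda)$ of the pressure function $P(-s\log|T'|)=0$, where $T$ is the (uniformly hyperbolic, $C^\omega$) map coding the Cantor set, and the measure realizing this dimension is the unique equilibrium state $\mu_{\mathrm{eq}}$ for the potential $-s(\lambda)\log|T'|$. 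On the other hand, the density of states measure $dN_{\lambda,\alpha}$ pushes forward, under the same coding, to a specific explicit measure: it is the image of the measure of maximal entropy (equivalently, the natural "uniform" Bernoulli measure) on the subshift governing the trace-map dynamics, because the integrated density of states is determined by how the gaps/bands subdivide at each level of the hierarchical (renormalization) structure and each band at level $n$ carries equal mass $\sim 2^{-n}$ (more precisely the mass dictated by the substitution's incidence structure). So $dN_{\lambda,\alpha}$ is, up to the coding, an equilibrium state for the constant potential (or for $-h_{\mathrm{top}}$-normalized potential), i.e. the measure of maximal entropy $\mu_{\max}$ of $T$ restricted to $\Sigma_{\lambda,\alpha}$. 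Its dimension $d_{\lambda,\alpha}$ equals $h_{\mathrm{top}}(T)/\!\int \log|T'|\,d\mu_{\max}$ by the dimension formula for invariant measures of conformal expanding maps; the exact-dimensionality and analyticity claimed in Theorem~\reff{T:ExactDimIntro} are exactly this.

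With both measures identified as equilibrium states of the same expanding map $T$ — one for the potential $\phi_1=-s(\lambda)\log|T'|$ and one for $\phi_2\equiv -h_{\mathrm{top}}(T)$ — the inequality $d_{\lambda,\alpha}<\dim\Sigma_{\lambda,\alpha}$ becomes the statement $\mu_{\max}\neq\mu_{\mathrm{eq}}$. The second step is therefore to invoke the variational principle rigidity: $\dim\Sigma_{\lambda,\alpha}=s(\lambda)$ is attained \emph{only} by $\mu_{\mathrm{eq}}$, so if $\mu_{\max}\neq\mu_{\mathrm{eq}}$ then $\dim\mu_{\max}=d_{\lambda,\alpha}<s(\lambda)=\dim\Sigma_{\lambda,\alpha}$ strictly. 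The classical criterion (e.g. from Bowen's theory / \cite{PT93}, or Section~2 of \cite{DG09a} and references therein) is that $\mu_{\max}=\mu_{\mathrm{eq}}$ iff $\log|T'|$ is cohomologous to a constant, equivalently iff $|T'|$ is constant along all periodic orbits of equal period — i.e. iff the Cantor set $\Sigma_{\lambda,\alpha}$ is "linear" / self-similar with a single contraction ratio. So the whole theorem reduces to: \emph{for small $\lambda>0$, $\Sigma_{\lambda,\alpha}$ is not an exactly self-similar Cantor set}, i.e. $\log|T'|$ is non-cohomologous to a constant.

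The third step — and the main obstacle — is to verify this non-degeneracy. The natural route is a perturbative computation at two distinct periodic orbits of the trace map. One uses the explicit structure of the trace map associated to the substitution (the Fibonacci-type map $(x,y,z)\mapsto(y,z,2yz-x)$ in the canonical case, with its invariant surface $S_\lambda$ and the periodic orbits corresponding to the periodic part of the continued fraction expansion of $\alpha$): at $\lambda=0$ the trace map restricted to $S_0$ is uniformly hyperbolic and one can write down its periodic points and their derivatives explicitly, and then show that at least two periodic orbits of the same period already have distinct Lyapunov weights — or, failing that at $\lambda=0$ where some accidental symmetry might force equality, expand to first order in $\lambda$ and show the weights split. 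Concretely, I would take the two shortest periodic orbits available in the relevant subshift, compute $\log|T'|$ summed around each (this is a finite explicit product of $2\times 2$ matrix norms coming from the Sutherland/transfer-matrix cocycle), and check they are unequal; by $C^\omega$ dependence on $\lambda$ the inequality then persists on an interval $(0,\lambda_0)$. The delicate point is ruling out a conspiracy in which \emph{all} periodic weights coincide despite the map not being linear — but this cannot happen for a genuinely nonlinear $C^\omega$ expanding map, and in any case is excluded by exhibiting one mismatched pair; the bookkeeping of the matrix products for a general primitive invertible substitution (rather than just Fibonacci) is where the real work lies, handled via the finiteness of the continued-fraction period and the semigroup generated by the associated $\SL(2,\Z)$ matrices. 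Finally, assembling: $d_{\lambda,\alpha}=\dim\mu_{\max}\le \dim\Sigma_{\lambda,\alpha}$ always, with equality forcing cohomological triviality of $\log|T'|$, which we have just excluded for $\lambda\in(0,\lambda_0)$; hence $d_{\lambda,\alpha}<\dim\Sigma_{\lambda,\alpha}$, which is the claim.
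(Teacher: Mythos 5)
Your overall reduction is right and matches the paper's strategy in spirit: $d_{\lambda,\alpha}$ is the dimension of the (pushed-forward) measure of maximal entropy, $\dim\Sigma_{\lambda,\alpha}$ is realized by the equilibrium state for $-s\log|T'|$, and strict inequality amounts to $\log|T'|$ not being cohomologous to a constant, i.e.\ periodic orbit multipliers are not all determined by the period. The paper states this same reduction by citing the proof of Theorem~1.2 of \cite{DG12} (``average multipliers over all periodic points must be equal'').

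The genuine gap is your third step. You propose to take two regular periodic orbits of $T_\alpha^{\text{per}}|_{\S}$ at $\lambda=0$ and show their multipliers differ, remarking that a ``conspiracy'' where all periodic weights coincide ``cannot happen for a genuinely nonlinear $C^\omega$ expanding map.'' But at $\lambda=0$ that conspiracy \emph{does} happen for the regular periodic orbits: by Lemma~\reff{L:Semi-conjugacy} the map on $\S$ is a factor of the linear toral automorphism $M_\alpha^{\text{per}}$ via the semi-conjugacy $F$, and at any periodic point where $DF$ is invertible the multiplier on $\S$ is literally the same as the (constant, period-proportional) multiplier on $\T^2$. So comparing two regular periodic orbits at $\lambda=0$ yields no mismatch, and your fallback of a first-order expansion in $\lambda$ is left entirely as a plan. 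The paper's proof supplies exactly the missing ingredient: the semi-conjugacy $F$ is a branched cover that is \emph{not} a local diffeomorphism at the singularity $P_1=(1,1,1)$, and because the Taylor expansion of $\cos$ near $0$ is $1-x^2/2$, the pushforward along the expanding eigenline $L(t)=(t,\alpha t)$ is quadratic, so the multiplier of the singular fixed point on $\S$ is the \emph{square} of the toral one. This single mismatch (singular fixed point vs.\ any regular periodic point) already breaks cohomological triviality at $\lambda=0$, and analytic dependence on $\lambda$ then carries the strict inequality to $(0,\lambda_0)$. Without noticing the role of the singularity, your proposal does not close this step.
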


The gap labeling theorem \cite{BBG92} provides a set of canonical labels for the cumulative distribution function of the spectral measure which correspond to gaps in the spectrum. See also \cite{Bel92a} for an extended summary of results regarding these labelings in a more general setting. It is of interest to know whether all possible gaps are open as the density of the labels indicates a Cantor spectrum.

\begin{theorem}\label{T:GapLabelingIntro}
For $\alpha$ with eventually periodic continued fraction expansion, there is $\lambda_0>0$ such that for every $\lambda \in (0, \lambda_0]$, all gaps of $H_{\lambda, \alpha}$ allowed by the gap labeling theorem are open. 
\end{theorem}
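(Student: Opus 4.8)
The plan is to realize every gap label through the rational periodic approximants of $H_{\lambda,\alpha}$, to use first-order perturbation theory at $\lambda=0$ to see that the corresponding gap reopens, and to transfer the resulting quantitative estimates to the limiting operator by means of the hyperbolic trace-map dynamics that makes $\Sigma_{\lambda,\alpha}$ dynamically defined.

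First I would set up the approximation scheme. Write $\alpha=[a_1,a_2,\dots]$ with eventually periodic continued fraction expansion and let $p_k/q_k$ be its convergents, so that $H_{\lambda,p_k/q_k}$ is periodic of period $q_k$, its spectrum is a union of at most $q_k$ bands, and $\Sigma_{\lambda,p_k/q_k}\to\Sigma_{\lambda,\alpha}$ as $k\to\infty$. On the dynamical side, the half-traces $x_n(E,\lambda)$ of the transfer matrices over the canonical substitution words satisfy a recursion which, by eventual periodicity of the continued fraction, reduces to iteration of a fixed polynomial automorphism $T$ preserving the Fricke--Vogt invariant (a constant of size comparable to $\lambda^2$ on the relevant surface). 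By \cite{Can09}, for $\lambda$ small $T$ has a locally maximal hyperbolic invariant set, $\Sigma_{\lambda,\alpha}$ is the set of energies with bounded forward orbit, and the gaps of $\Sigma_{\lambda,\alpha}$ are exactly the maximal energy intervals on which the orbit escapes after a prescribed symbolic itinerary; via the Markov partition each gap is the image, under a branch of some $T^{-n}$, of one of finitely many ``outermost'' gap regions.

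Next I would match labels to combinatorics. By the gap labeling theorem \cite{BBG92}, $N_{\lambda,\alpha}$ is constant on each gap with value in the countable frequency module $\calM:=(\Z+\alpha\Z)\cap[0,1]$, and the assertion to be proved is precisely that every $\gamma\in\calM$ is attained on a nonempty (open) gap. For fixed $\gamma$ and $k$ large, the continued-fraction combinatorics singles out two consecutive bands of $\Sigma_{\lambda,p_k/q_k}$ between which the integrated density of states equals a value converging to $\gamma$; call the (a priori possibly degenerate) gap between them $G_k(\gamma)$. As $k\to\infty$, $G_k(\gamma)$ converges to the putative gap of $\Sigma_{\lambda,\alpha}$ labeled $\gamma$, and by the dynamical description this gap is, for all large $k$, the image of one fixed outermost region under a branch of $T^{-n}$ with $n=n(\gamma)$ independent of $k$; in the trace-map picture the content of the theorem is that the curve of initial conditions parametrized by $E$ meets the stable lamination of the hyperbolic set transversally in every such region.

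The crux, and the main obstacle, is this nondegeneracy. At $\lambda=0$ one has $\Sigma_{0,p_k/q_k}=[-2,2]$, so $G_k(\gamma)$ collapses to a point $E_0=2\cos(\pi\ell/q_k)$ for the appropriate $\ell$, and the question is the rate at which it reopens. A first-order perturbation computation for the periodic operator expresses $\frac{d}{d\lambda}\big|_{\lambda=0}|G_k(\gamma)|$ in terms of a Fourier-type coefficient of the period-$q_k$ sampling of the rotation potential, and one must show this coefficient never vanishes --- here the specific arithmetic of $\alpha$ (equivalently, invertibility of the substitution) should rule out the accidental cancellations that would keep a gap closed. The more delicate issue is uniformity: a priori the opening rate could deteriorate as $k\to\infty$ and permit the limiting gap to collapse. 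I would control this using the hyperbolicity from \cite{Can09}: since each $G_k(\gamma)$ lies in the image of a single Markov rectangle under iterates of a uniformly hyperbolic map with bounded distortion, a uniform lower bound on $|G_k(\gamma)|/\lambda$ for $\lambda$ small passes to the limit, and once the limiting gap labeled $\gamma$ is known to be open, Theorem~\reff{T:LinearGapIntro} moreover shows its width is asymptotically linear in $\lambda$. I expect the non-vanishing of the first-order splitting coefficient, uniformly enough over the countably many labels to survive the limit $k\to\infty$, to be the technical heart of the argument.
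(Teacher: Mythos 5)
Your strategy---rational approximants $H_{\lambda,p_k/q_k}$, first-order perturbation theory at $\lambda=0$, hyperbolicity to pass to the limit---is genuinely different from the paper's, and you correctly flag the two places where it is incomplete. The paper sidesteps both of them. It works directly on the semi-conjugate torus picture (Lemma~\reff{L:Semi-conjugacy}): the preimages $\tilde P_1,\dots,\tilde P_4$ of the singularities are periodic points of a linear hyperbolic toral automorphism, their stable manifolds are straight lines of slope determined by Lemma~\reff{L:ExpandingDirection}, and these lines meet the preimage of $\ell_0$ exactly at the phases $\set{m\alpha}$, $\set{m\alpha+\tfrac12}$, $\set{m\alpha+\tfrac{\alpha}{2}}$, $\set{m\alpha+\tfrac12+\tfrac{\alpha}{2}}$, which are precisely the gap labels. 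For $\lambda>0$ the central manifold of each singularity meets $S_\lambda$ in two points (Section~\reff{S:PeriodicCurve}), whose stable manifolds cut $\ell_\lambda$ in two nearby points with an interval of escaping energies between them; combining this with continuity of $N_{\lambda,\alpha}$ in $\lambda$ and constancy of $N_{\lambda,\alpha}$ across gaps pins the label of that gap to the $\lambda=0$ value $\set{m\alpha}$. The transversal splitting that opens every labelled gap is thus a single fact, Proposition~\reff{P:CM_Transversal}, rather than a sequence of coefficient computations.

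The gaps in your proposal are exactly the points you yourself describe as ``the technical heart.'' First, the non-vanishing of the first-order splitting coefficient for the $q_k$-periodic samplings is asserted (``should rule out accidental cancellations'') but not argued; it is a concrete claim about a finite Fourier sum over one period of a Sturmian word and needs a proof, not an expectation. Second, even granting non-vanishing for each $(k,\gamma)$, you need a lower bound on $|G_k(\gamma)|/\lambda$ that is uniform in $k$ before the limit can be taken; you propose bounded distortion on a fixed Markov rectangle, but the rectangle containing $G_k(\gamma)$ is a priori $k$-dependent, and showing that it stabilizes is precisely what would have to be established. As written the argument does not close, whereas in the paper's geometric route the gap labels and the gap openings are read off from the same stable lamination at once, so neither issue arises.
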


The famous ``Ten Martini Problem'' is showing that the spectrum of the almost Mathieu operator is a Cantor set. This was done by Avila and Jitomirskaya in 2009 \cite{AJ09}. See also \cite{Dam08} for a summary of previous work leading up to this result. A related question is the so-called ``Dry Ten Martini Problem,'' showing that all gaps are open.  In the above theorem, we solve the ``dry'' version of this problem for primitive invertible substitution potentials.

At this point, one may ask whether the results in this paper can be extended either to all Sturmian potentials or outside of a small coupling constant regime. To answer the former question, in \cite{LPW07}, it was shown that there exists an increasing sequence of integers $\set{m_k}_{k \in \N}$ such that for rotation angle $\tilde{\alpha}=[0;a_1, a_2,...]$ with $a_n=1$ if $m_{2k-1} \leq n < m_{2k}$ and $a_n=3$ if $m_{2k} \leq n < m_{2k+1}$ and for sufficiently large coupling constant $\lambda$, the Hausdorff dimension and box counting dimension of $\Sigma_{\lambda, \tilde{\alpha}}$ is not equal, implying that $\Sigma_{\lambda, \tilde{\alpha}}$ is not a dynamically defined Cantor set. Thus the results of this paper certainly do not hold for all Sturmian potentials. As to the latter question, the value of $\lambda_0$ in Theorems, for example ~\reff{T:ExactDimIntro}, corresponds to the largest value for which transversailty of the line of initial conditions can be shown. It may be the case that $\lambda_0=+\infty$, but we were unable to show this.


Now let us outline the structure of the paper. Section~\reff{S:Background} provides background on Sturmian sequences and substitution sequences. See \cite{Fog02} for a more thorough introduction to Sturmian sequences, substitution sequences, and related dynamical properties. It has long been known that any Sturmian sequence can be achieved as a rotation sequence of some angle and some initial point \cite{MH40}. It is also known that the rotation sequences with initial point 0 that are fixed by a nontrivial substitution are exactly those whose rotation angle $\alpha$ has a particular periodic continued fraction expansion \cite{CMPS93}, restated in Theorem~\reff{T:SturmianInvariant}. In fact, the results in this paper hold for all Sturmian sequences whose rotation angle has \emph{any} eventually periodic continued fraction expansion, a strictly larger class than primitive invertible substitution sequences.

Section~\reff{S:TraceMaps} introduces \defword{trace maps}, which are dynamical systems that appears in a natural way in the study of spectral properties of discrete Schr\"odniger operators with substitution potential. This was initially introduced by \cite{KKT83}, see also \cite{OPRSS83}, for the case of the Fibonacci Hamiltonian.

Section~\reff{S:HyperbolicityNW} establishes the hyperbolicity of the trace map and we get as a corollary to \cite{Can09} that that $\Sigma_{\lambda, \alpha}$ is a \defword{dynamically defined} Cantor set for $\lambda>0$ sufficiently small. The notion of a dynamically defined Cantor set is explored in \cite{PT93}.

Section~\reff{S:PeriodicCurve} demonstrates the existence of a curve of periodic points. In the Fibonacci case, this theorem was established through a computation that relied on the explicit form of the Fibonacci trace map. To generalize this result, we approach this problem for the class of operators with potentials given by rotation angles with eventually periodic continued fraction expansions by considering an application of bilinear forms. This is needed for the proof of Theorems~\reff{T:LinearGapIntro} and \reff{T:HausDimIntro}.

In Section~\reff{S:LinearGapOpening}, we prove Theorem~\reff{T:LinearGapIntro}, that is the gaps of the Cantor spectrum open linearly with respect to coupling constant. This is the analogue of Theorem 1.3 from \cite{DG11} for the Fibonacci Hamiltonian. This proof heavily utilizes the construction from Section~\reff{S:PeriodicCurve}.

Section~\reff{S:ContHausDim} is dedicated to the proof of Theorem ~\reff{T:HausDimIntro}, that the Hausdorff dimension of the spectrum is continuous for $\lambda=0$, which heavily utilizes thickness and the related \defword{denseness}. This is the analogue of Theorem 1.1 from \cite{DG11}. Quantitative properties of Cantor sets such as thickness and denseness have been widely used in in the setting of dynamical systems theory (see, for example, \cite{New70}, \cite{New79}, and \cite{dM73}). This result presents a nice application of thickness to a problem arising in mathematical physics.

In Section~\reff{S:IDOS}, we provide the proofs of Theorem~\reff{T:ExactDimIntro} and Theorem~\reff{T:LessThanDimIntro}, which are generalizations of results obtained for the Fibonacci Hamiltonian in \cite{DG12}. We also prove Theorem~\reff{T:GapLabelingIntro}, that all gaps allowed by the gap labeling theorem open. This is the analogue of Theorem 1.5 from \cite{DG11}.

\emph{Acknowledgements}: The author would like to acknowledge the invaluable contributions of Anton Gorodetski and David Damanik. The author would also like to thank William Yessen and Christoph Marx for their many helpful conversations.


\section{Background on Symbolic and Combinatorial Properties}\label{S:Background}
\subsection{Substitution Maps}\label{S:Substitution}
Let $\calA$ be an \defword{alphabet} of finitely many symbols, and let $\calA^*$ and $\tilde{\calA}$ denote the free monoid and free group generated by $\calA$, respectively. We will restrict ourselves to the case that $|\calA|=2$, in which case we will denote its members $\calA=\set{\bfzero, \bfone}$. Any $\Psi \in \End(\tilde{\calA})$ is determined by the images of $\bfzero$ and $\bfone$, which we will denote $w_\bfzero$ and $w_\bfone$, respectively. If $w \in \tilde{\calA}$, we say that $w$ is a \defword{word} and define $|w|_\bfa$ to be the sum of the powers of $\bfa$, i.e. the number of occurrences of the letter $\bfa$ in $w$ minus the number of occurrences of $\bfa\inv$ in $w$. In this way, one may associate $w$ with an element of $\Z^2$ and thus, $\Psi$ with a matrix
$$M_\Psi=\begin{pmatrix}|w_\bfzero|_\bfzero & |w_\bfone|_\bfzero\\ |w_\bfzero|_\bfone&|w_\bfone|_\bfone\end{pmatrix}$$
The map $\Psi \mapsto M_\Psi$ induces a homomorphism from $\Aut(\tilde{\calA}) \to \GL(2,\Z)$. Observe that if $\Psi$ is invertible then $\det(M_\Psi)=\pm 1$.

In the case that $w_\bfzero, w_\bfone \in \calA^*$, the map $s: \calA \to \calA^*$ given by $s:\bfzero \mapsto w_\bfzero$, $\bfone \mapsto w_\bfone$ is called a \defword{substitution}. This extends uniquely to an endomorphism of $\tilde{\calA}$ which, by abuse of notation, we also call a substitution and denote $s$. If $s$ is an automorphism of $\tilde{\calA}$, we will say that $s$ is an \defword{invertible} substitution. If there is $k$ such that $(M_s)^k$ is strictly positive, we will say that $s$ is a \defword{primitive} substitution. We call $M_s$ a \defword{substitution matrix} and look at the space of matrices corresponding to invertible substitutions
$$\calM=\left\{\begin{pmatrix}p&q\\r&s\end{pmatrix}:p,q,r,s \in \N \textnormal{ and } ps-qr=\pm 1\right\} \subset \GL(2,\Z).$$

\begin{lemma}[Wen, Wen 1994 \cite{WW94}]\label{L:Matrix_Generators}
$\calM$ is generated as a monoid by
$$h_1=\begin{pmatrix}0&1\\1&0\end{pmatrix}, \qquad h_2=\begin{pmatrix}1&1\\1&0\end{pmatrix}.$$\end{lemma}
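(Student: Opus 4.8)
The plan is to induct on the entry-sum $\sigma(M) := p+q+r+s$ of a matrix $M = \begin{pmatrix} p & q \\ r & s \end{pmatrix} \in \calM$, showing that every such $M$ lies in the monoid generated by $h_1$ and $h_2$. Since $\det M = ps-qr = \pm 1 \neq 0$, no row or column of $M$ vanishes, so $\sigma(M) \geq 2$; and one checks directly that the only members of $\calM$ with $\sigma(M) = 2$ are $I$ and $h_1$. As $I = h_1^2$, this is the base of the induction.

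For the inductive step I would set up four ``descent moves.'' If the first row of $M$ entrywise dominates the second (that is, $p \geq r$ and $q \geq s$), then $M = h_2 M'$ with $M' = \begin{pmatrix} r & s \\ p-r & q-s \end{pmatrix}$, which lies in $\calM$ (its entries are nonnegative integers and $\det M' = -\det M = \mp 1$) and has $\sigma(M') = p+q < \sigma(M)$, using $r+s \geq 1$. If the second row dominates the first, apply this to $h_1 M$ (same entry-sum, rows interchanged) to get $M = h_1 h_2 M''$ with $\sigma(M'') < \sigma(M)$. The column analogues are symmetric: column one dominating column two gives $M = M' h_2$ with $M' = \begin{pmatrix} q & p-q \\ s & r-s \end{pmatrix}$, and column two dominating gives $M = M'' h_2 h_1$, again with a strict drop in $\sigma$. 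Hence, as soon as one row dominates the other or one column dominates the other, the induction hypothesis applied to $M'$ (resp. $M''$) concludes, since left- or right-multiplying a product of $h_1$'s and $h_2$'s by the words $h_2$, $h_1 h_2$, or $h_2 h_1$ stays in the monoid.

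The crux --- and the step I expect to be the real work --- is the combinatorial dichotomy: if $\sigma(M) \geq 3$, then some row dominates the other or some column dominates the other. I would argue the contrapositive. ``Neither row dominates'' is equivalent to $(p-r)(q-s) < 0$, and ``neither column dominates'' to $(p-q)(r-s) < 0$ (if one of these differences is $0$, one of the two rows, respectively columns, automatically dominates). Assuming both inequalities, there are four sign patterns to consider. Two of them yield a cyclic chain of strict inequalities among $p,q,r,s$ (for instance $p < q < s < r < p$), an immediate contradiction. In the other two, one of the pairs $\{p,s\}$, $\{q,r\}$ consists of two entries each strictly greater than both entries of the complementary pair; letting $m \geq 1$ be the smaller of these two ``large'' entries, if $m \geq 2$ then $|\det M| \geq m^2 - (m-1)^2 = 2m-1 \geq 3$, which is impossible, so $m = 1$, forcing both ``small'' entries to be $0$ and hence $M \in \{I, h_1\}$, i.e. $\sigma(M) = 2$. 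This contradicts $\sigma(M) \geq 3$, establishing the dichotomy and completing the induction.
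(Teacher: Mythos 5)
The paper does not give a proof of this lemma; it is stated with a citation to Wen and Wen (1994), so there is no in-paper argument to compare against. Your proof is correct and complete. The induction on the entry-sum $\sigma(M)=p+q+r+s$ with the base case $\{I,h_1\}$, the four Euclidean descent moves $M=h_2M'$, $M=h_1h_2M''$, $M=M'h_2$, $M=M''h_2h_1$ (each verified to keep the cofactor in $\calM$ and strictly reduce $\sigma$), and the dichotomy argument are all sound. In particular the crux step is handled correctly: in the two non-cyclic sign patterns the estimate $|ps-qr|\ge m^2-(m-1)^2=2m-1$, where $m\ge 1$ is the smaller of the two dominant entries, forces $m=1$ and hence $M\in\{I,h_1\}$ with $\sigma(M)=2$, contradicting $\sigma(M)\ge 3$. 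This Euclidean descent is the standard (and essentially the only natural) approach to the Wen--Wen generation theorem, and is consistent with the structured factorization the paper later references in Remark~\reff{R:SubstitutionFactorization}.
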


\begin{theorem}[Wen, Wen 1994 \cite{WW94}]\label{T:InvertSubs_Generators}
The monoid of invertible substitutions is generated by $\pi$, $\sigma$, and $\rho$ where
$$\pi: \bfzero \mapsto \bfone, \bfone \mapsto \bfzero \quad \sigma: \bfzero \mapsto \bfzero\bfone, \bfone \mapsto \bfzero \quad \rho: \bfzero \mapsto \bfone\bfzero, \bfone \mapsto \bfzero.$$
\end{theorem}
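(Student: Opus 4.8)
The plan is to induct on the total length $|s| := |w_\bfzero| + |w_\bfone|$ of an invertible substitution $s\colon \bfzero \mapsto w_\bfzero,\ \bfone \mapsto w_\bfone$, peeling a generator off the right at each step. Since $s$ is an automorphism of $\tilde{\calA}$ it is injective, so $w_\bfzero$ and $w_\bfone$ are nonempty and distinct; hence $|s|\ge 2$, and $|s|=2$ forces $\{w_\bfzero,w_\bfone\}=\{\bfzero,\bfone\}$, i.e.\ $s$ is the identity (the empty product) or $s=\pi$. For the inductive step assume $|s|\ge 3$. Then $|w_\bfzero|\neq|w_\bfone|$: if both had length $k$, each column of $M_s$ would sum to $k$, so $\det M_s\in k\Z$, forcing $k=1$ and $|s|=2$. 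Replacing $s$ by $s\circ\pi$ interchanges $w_\bfzero$ and $w_\bfone$ and does not affect membership in the monoid generated by $\pi,\sigma,\rho$, so I may assume $|w_\bfzero|>|w_\bfone|\ge 1$.

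The engine of the induction is the combinatorial claim that \emph{whenever $\{w_\bfzero,w_\bfone\}$ is a free basis of $\tilde{\calA}$ consisting of positive words with $|w_\bfzero|>|w_\bfone|$, the word $w_\bfone$ is a prefix of $w_\bfzero$ or a suffix of $w_\bfzero$.} Granting this, I write $w_\bfzero=w_\bfone u$ (prefix case) or $w_\bfzero=u\,w_\bfone$ (suffix case) with $u$ a nonempty positive word, and set $s'\colon \bfzero\mapsto w_\bfone,\ \bfone\mapsto u$. Comparing images of $\bfzero$ and $\bfone$ shows $s=s'\circ\sigma$ in the prefix case and $s=s'\circ\rho$ in the suffix case. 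Since $\sigma,\rho$ are automorphisms, $s'$ (equal to $s\circ\sigma^{-1}$ or $s\circ\rho^{-1}$) is an automorphism, and as it sends letters to positive words it is an invertible substitution, with $|s'|=|w_\bfzero|<|s|$. The inductive hypothesis writes $s'$ as a product of $\pi,\sigma,\rho$, hence so is $s$.

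The combinatorial claim is the one genuinely delicate step, and I would prove it via the boundary commutator. Because $\{w_\bfzero,w_\bfone\}$ is a basis, $[w_\bfzero,w_\bfone]=w_\bfzero\,w_\bfone\,w_\bfzero\inv w_\bfone\inv=s\bigl(\bfzero\bfone\bfzero\inv\bfone\inv\bigr)$; by the classical fact that every automorphism of a rank-two free group sends $\bfzero\bfone\bfzero\inv\bfone\inv$ to a conjugate of $(\bfzero\bfone\bfzero\inv\bfone\inv)^{\pm1}$ (a standard consequence of $\Aut(F_2)\twoheadrightarrow\GL(2,\Z)$ together with the once-punctured-torus picture), this element is cyclically reduced to length exactly $4$. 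On the other hand, in the length-$2(|w_\bfzero|+|w_\bfone|)$ cyclic word $w_\bfzero\,w_\bfone\,w_\bfzero\inv w_\bfone\inv$, no cancellation can occur at the junctions $w_\bfzero\mid w_\bfone$ or $w_\bfzero\inv\mid w_\bfone\inv$ (each abuts two letters of the same sign), so all cancellation occurs at $w_\bfone\mid w_\bfzero\inv$, which removes the longest common suffix of $w_\bfzero$ and $w_\bfone$, of length $\ell_s$ say, and cyclically at $w_\bfone\inv\mid w_\bfzero$, which removes the longest common prefix, of length $\ell_p$. Assuming $w_\bfone$ is neither a prefix nor a suffix of $w_\bfzero$, one has $\ell_p,\ell_s<|w_\bfone|$, and after these two maximal cancellations the resulting cyclic word is reduced, so its length is $2(|w_\bfzero|+|w_\bfone|)-2(\ell_p+\ell_s)=4$; but then $\ell_p+\ell_s=|w_\bfzero|+|w_\bfone|-2\le 2|w_\bfone|-2$, contradicting $|w_\bfzero|>|w_\bfone|$. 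Hence $w_\bfone$ is a prefix or a suffix of $w_\bfzero$.

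Two points call for care in the write-up. First, the bookkeeping that the reduced cyclic word really has length $2(|w_\bfzero|+|w_\bfone|)-2(\ell_p+\ell_s)$, i.e.\ that all four resulting junctions are stuck, which uses $\ell_p,\ell_s<|w_\bfone|\le|w_\bfzero|$ so that the relevant pairs of letters exist and disagree. Second, the boundary instance $|w_\bfone|=1$, where $\ell_p$ or $\ell_s$ might be $0$: there $\det M_s=\pm1$ forces $w_\bfzero$ to be a string of copies of the letter $w_\bfone$ with exactly one copy of the other letter inserted, and such a word necessarily begins or ends with $w_\bfone$. Everything else — the induction, the base case, the determinant computation, and the verifications $s=s'\circ\sigma$, $s=s'\circ\rho$ — is routine. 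One could instead run the induction directly on $M_s$ using Lemma~\reff{L:Matrix_Generators} together with $M_\pi=h_1$ and $M_\sigma=M_\rho=h_2$, but the non-injectivity of $\Psi\mapsto M_\Psi$ means precisely this combinatorial claim is what decides between $\sigma$ and $\rho$ at each stage, so it cannot be avoided.
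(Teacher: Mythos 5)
The paper states this theorem with a citation to Wen and Wen (1994) and does not reproduce a proof, so there is no in-paper argument to compare against; reading yours on its own terms, it is correct. The induction on $|w_\bfzero|+|w_\bfone|$ with the prefix-or-suffix dichotomy as the engine, the determinant computation ruling out $|w_\bfzero|=|w_\bfone|$ once $|s|\ge 3$, the normalization by post-composing with $\pi$, and the verifications $s=s'\circ\sigma$ and $s=s'\circ\rho$ are all right, and $|s'|=|w_\bfzero|<|s|$ closes the induction. The one load-bearing external input is Nielsen's theorem that every automorphism of $\tilde{\calA}$ sends $\bfzero\bfone\bfzero\inv\bfone\inv$ to a conjugate of itself or its inverse; in a final write-up you should state this precisely with a citation (Nielsen 1917, or Lyndon--Schupp, or Magnus--Karrass--Solitar), since the once-punctured-torus gloss is motivation rather than a reference and the phrase ``a standard consequence of $\Aut \twoheadrightarrow \GL(2,\Z)$'' undersells what is being used. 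Your cancellation bookkeeping is correct: with $\ell_p,\ell_s<|w_\bfone|<|w_\bfzero|$ all four blocks of the cyclic word $w_\bfzero w_\bfone w_\bfzero\inv w_\bfone\inv$ stay nonempty after trimming, the two same-sign junctions never cancel, maximality of $\ell_p,\ell_s$ locks the other two, and the cyclically reduced length is exactly $2(|w_\bfzero|+|w_\bfone|)-2(\ell_p+\ell_s)=4$, contradicting $\ell_p+\ell_s\le 2|w_\bfone|-2<|w_\bfzero|+|w_\bfone|-2$. One small simplification: the $|w_\bfone|=1$ boundary case is already absorbed by this count (there $\ell_p=\ell_s=0$ forces $|w_\bfzero|=1$), so your separate determinant discussion for it is a consistency check rather than a needed extra case. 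Your closing remark that the matrix induction via Lemma~\reff{L:Matrix_Generators} cannot by itself distinguish $\sigma$ from $\rho$ because of the kernel of $\Psi\mapsto M_\Psi$ (cf.\ Theorem~\reff{T:Kernel=InnerAut}) is exactly the right point to emphasize and explains why a genuinely word-level lemma is unavoidable.
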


\begin{remark}\label{R:SubsMatrix_Generators}
Note that $M_\pi=h_1$ and $M_\sigma=M_\rho=h_2$. The non-uniqueness of this matrix representation leads one to ask questions answered by the next theorem.
\end{remark}

\begin{theorem}[Peyri\`ere, Wen, Wen 1993 \cite{PWW93}]\label{T:Kernel=InnerAut}
An element of the kernel of the map $\Aut(\tilde{\calA}) \to \GL(2,\Z)$, $\Psi \mapsto M_\Psi$ is either an inner automorphisms, $$\iota_w: \bfzero \mapsto w\bfzero w\inv, \bfone \mapsto w\bfone w\inv \text{ for } w \in \tilde{\calA}$$  or an inner automorphism composed with the involution $\bfzero \mapsto \bfzero \inv, \bfone \mapsto \bfone \inv$.
\end{theorem}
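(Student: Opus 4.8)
The plan is to reduce everything to the classical description of the kernel of the abelianization homomorphism on $\Aut(\tilde\calA)$, reading the statement's ``kernel'' as the preimage of the center $\{\pm I\}\subseteq\GL(2,\Z)$ — this is forced by the statement itself, since the involution $j\colon\bfzero\mapsto\bfzero\inv,\ \bfone\mapsto\bfone\inv$ has $M_j=-I$ while every inner automorphism $\iota_w$ has $M_{\iota_w}=I$. Since $j^2=\mathrm{id}$ and $\theta\,\iota_w\,\theta\inv=\iota_{\theta(w)}$ for all $\theta\in\Aut(\tilde\calA)$, we get $j\circ\iota_w=\iota_{j(w)}\circ j$, so the coset $\mathrm{Inn}(\tilde\calA)\circ j$ is exactly $\{\iota_w\circ j:w\in\tilde\calA\}$. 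Thus if $M_\Psi\in\{\pm I\}$, then either $M_\Psi=I$, or $M_{\Psi\circ j}=M_\Psi M_j=I$ and one analyzes $\Psi\circ j$ instead; so it suffices to prove the \emph{key claim} that $M_\Psi=I$ implies $\Psi\in\mathrm{Inn}(\tilde\calA)$.

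To prove the key claim I would use the ``boundary word'' $c:=\bfzero\bfone\bfzero\inv\bfone\inv$. Put $u:=\Psi(\bfzero)$, $v:=\Psi(\bfone)$; then $\{u,v\}$ is a basis of $\tilde\calA$ with $|u|_\bfzero=|v|_\bfone=1$ and $|u|_\bfone=|v|_\bfzero=0$, and $\Psi(c)=uvu\inv v\inv$. The first ingredient is the classical fact — provable via Whitehead's peak-reduction algorithm ($c$ is Whitehead-minimal), or via the once-punctured-torus picture — that the orbit of the conjugacy class of $c$ under $\Aut(\tilde\calA)$ is $\{[c],[c\inv]\}$, and that $\Psi(c)$ is conjugate to $c$ precisely when $\det M_\Psi=1$. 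Since $\det M_\Psi=\det I=1$, we may write $\Psi(c)=g\,c\,g\inv$; replacing $\Psi$ by $\iota_{g\inv}\circ\Psi$ (still a kernel element, as $\mathrm{Inn}(\tilde\calA)$ lies in the kernel and does not alter $M_\Psi$), we may assume $\Psi(c)=c$ exactly.

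The second ingredient is a rigidity statement: the stabilizer of the element $c$ in $\Aut(\tilde\calA)$ is the infinite cyclic group $\langle\iota_c\rangle$. (Here $c$ is not a proper power in the free group $\tilde\calA$, so its centralizer is $\langle c\rangle$; the content is that $\iota_c$ generates the entire stabilizer.) I would establish this by a direct Nielsen-reduction argument: if $\phi(c)=c$ then $\{\phi(\bfzero),\phi(\bfone)\}$ is a basis whose commutator equals $c$ letter for letter, and bounding the cancellation in $\phi(\bfzero)\phi(\bfone)\phi(\bfzero)\inv\phi(\bfone)\inv$ against the length-four word $c$, together with the resulting abelianization constraints, forces $\phi(\bfzero)$ and $\phi(\bfone)$ to be $d\,\bfzero\,d\inv$ and $d\,\bfone\,d\inv$ for a common $d\in\langle c\rangle$. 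Granting the rigidity statement, the reduced $\Psi$ equals $\iota_{c^k}$ for some $k\in\Z$, so the original $\Psi$ equals $\iota_g\circ\iota_{c^k}=\iota_{g c^k}\in\mathrm{Inn}(\tilde\calA)$, which is the key claim.

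The main obstacle is the rigidity step (equivalently, the identity $\ker(\Aut(\tilde\calA)\to\GL(2,\Z))=\mathrm{Inn}(\tilde\calA)$, a rank-two phenomenon that fails for free groups of rank $\geq3$ by Magnus's construction of $\mathrm{IA}$-automorphisms): surjectivity of $\Psi\mapsto M_\Psi$ is routine — the elementary Nielsen automorphisms $\bfzero\leftrightarrow\bfone$, $\bfzero\mapsto\bfzero\inv$, $\bfzero\mapsto\bfzero\bfone$ map onto a generating set of $\GL(2,\Z)$, in the spirit of Lemma~\reff{L:Matrix_Generators} — but pinning down the kernel requires the Nielsen--Whitehead reduction machinery for rank-two free groups. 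A shorter alternative is to quote Nielsen's theorem $\mathrm{Out}(\tilde\calA)\cong\GL(2,\Z)$ as a black box and carry out only the first paragraph; I would nonetheless include the sketch above to keep the argument self-contained modulo standard combinatorial group theory.
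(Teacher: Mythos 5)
The paper cites this result to \cite{PWW93} without reproducing a proof, so there is no in-paper argument to compare yours against; I assess the proposal on its own terms. Your reading of ``kernel'' as the preimage of $\{\pm I\}$ is the sensible one (forced by $M_j=-I$), the reduction modulo $j$ is correct, and if you are content to quote Nielsen's theorem $\Out(\tilde\calA)\cong\GL(2,\Z)$ as a black box, your first paragraph already finishes the job.

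The self-contained version, however, has a genuine error. You assert as your rigidity lemma that the stabilizer of the exact word $c=\bfzero\bfone\bfzero\inv\bfone\inv$ in $\Aut(\tilde\calA)$ is $\langle\iota_c\rangle$. That is false. The automorphism $\phi\colon\bfzero\mapsto\bfzero\bfone,\ \bfone\mapsto\bfone$ satisfies
$$\phi(c)=(\bfzero\bfone)\,\bfone\,(\bfzero\bfone)\inv\,\bfone\inv=\bfzero\bfone\,\bfone\,\bfone\inv\bfzero\inv\,\bfone\inv=\bfzero\bfone\bfzero\inv\bfone\inv=c,$$
yet $\phi(\bfone)=\bfone$, so $\phi\neq\iota_{c^k}$ for any $k\neq 0$; and $M_\phi=\left(\begin{smallmatrix}1&0\\1&1\end{smallmatrix}\right)\neq I$. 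Likewise $\bfzero\mapsto\bfzero,\ \bfone\mapsto\bfone\bfzero$ fixes $c$, and these two generate a subgroup of $\mathrm{Stab}(c)$ surjecting onto $\mathrm{SL}(2,\Z)$, consistent with the once-punctured-torus picture (boundary-fixing mapping classes). Your Nielsen-reduction sketch refers to ``the resulting abelianization constraints,'' but the single hypothesis $\phi(c)=c$ imposes no abelianization constraint whatsoever, so the sketch cannot be repaired without strengthening the statement. What you actually need is: if $\phi(c)=c$ \emph{and} $M_\phi=I$, then $\phi=\iota_{c^k}$. With both hypotheses the counterexample is excluded and a Nielsen-reduction on $(\phi(\bfzero),\phi(\bfone))$ — reduced words whose commutator is the length-four $c$ and whose abelianizations are $\bar\bfzero,\bar\bfone$ — does go through; but that combined statement is essentially Nielsen's kernel theorem itself, recast after the inner conjugation, so the detour through $c$ does not save you the core reduction work. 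Either carry the hypothesis $M_\Psi=I$ explicitly into the rigidity lemma and prove that version, or simply cite Nielsen's theorem as in your proposed shortcut.
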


\begin{remark}\label{R:LocIso}
As a consequence of the above, if $M_s=M_{s'}$, then $\exists w \in A^*$ such that $s'=\iota_{w}s$ or $\iota_{w \inv} s$. If this is the case, we will say $s$ and $s'$ are \defword{conjugate}. We say two sequences $u, v \in \calA^\N$ are \defword{locally isomorphic} if whenever $w$ is a word of $u$, $w$ is a word of $v$. If $s$ and $s'$ are conjugate and have fixed points $u$ and $v \in \calA^\N$, respectively, then $u$ and $v$ are locally isomorphic (see~\cite{Fog02}). We say that a fixed point $u \in \calA^\N$ of a substitution is a \defword{substitution sequence}.
\end{remark}

\begin{remark}\label{R:SubstitutionFactorization}
It can be extracted from the proof of Lemma~\reff{L:Matrix_Generators} that a substitution matrix can be decomposed as blocks of $h_2h_1$, possibly separated by $h_1$, and $h_2h_1$ is the substitution matrix for $\sigma \circ \pi$ or $\rho \circ \pi$. Hence $(h_2h_1)^n$ corresponds to
$$(\sigma \circ \pi)^{l_1} \circ (\rho \circ \pi)^{m_1} \circ ... \circ (\sigma \circ \pi)^{l_k} \circ (\rho \circ \pi)^{m_k}$$
where $\sum (l_i + m_i)=n$. We denote a sequence of this form as $\delta^{(n)}$.
\end{remark}

\begin{lemma}[Tan, Wen 2003 \cite{TW03}]\label{L:PrimInvertSubs_Decomp}
If $s$ is a primitive invertible substitution and is of the form $\delta^{(n_1)} \circ \pi \circ ... \circ \pi \circ \delta^{(n_k)}$ where $n_1, n_k \geq 0$ and $n_2,...,n_{k-1} \geq 1$, and $(1,\beta)$ is an eigenvector of $M_s$ where $\beta>0$, then 
$$\beta=\cfrac{1}{n_1+\cfrac{1}{\ddots+\cfrac{1}{n_k+\frac{1}{\beta}}}}.$$
\end{lemma}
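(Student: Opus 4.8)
The plan is to reduce the claim to an elementary fact about fractional‑linear actions on the projective line $\P^1$.

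First I would make the substitution matrix of $s$ explicit. By Remark~\reff{R:SubstitutionFactorization}, $M_{\delta^{(n)}}=(h_2h_1)^n$, and by Remark~\reff{R:SubsMatrix_Generators}, $M_\pi=h_1$; since $\Psi\mapsto M_\Psi$ is a homomorphism and $s=\delta^{(n_1)}\circ\pi\circ\delta^{(n_2)}\circ\pi\circ\cdots\circ\pi\circ\delta^{(n_k)}$ with one $\pi$ between consecutive $\delta$‑blocks, this gives
\[
M_s=(h_2h_1)^{n_1}\,h_1\,(h_2h_1)^{n_2}\,h_1\,\cdots\,h_1\,(h_2h_1)^{n_k},
\]
with $k-1$ interior factors $h_1$. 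A direct computation records $h_1=\left(\begin{smallmatrix}0&1\\1&0\end{smallmatrix}\right)$ and $h_2h_1=\left(\begin{smallmatrix}1&1\\0&1\end{smallmatrix}\right)$.

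Next I would pass to the fractional‑linear action of $\GL(2,\Z)$ on $\P^1=\R\cup\{\infty\}$ in the chart $z=x_1/x_2$, namely $\left(\begin{smallmatrix}a&b\\c&d\end{smallmatrix}\right)\colon z\mapsto \frac{az+b}{cz+d}$, a homomorphism onto the group of M\"obius transformations. In this chart $(h_2h_1)^n$ is the translation $z\mapsto z+n$ and $h_1$ is the inversion $z\mapsto 1/z$, so, composing the factors above with the innermost applied first, $M_s$ acts by
\[
\phi(z)=n_1+\cfrac{1}{n_2+\cfrac{1}{\ddots+\cfrac{1}{n_k+z}}}.
\]
An invertible $2\times2$ matrix fixes a line $\R v$ — equivalently $v$ is an eigenvector, the eigenvalue being automatically nonzero here because $\det M_s=\pm1$ — exactly when the projective class of $v$ is a fixed point of its M\"obius map. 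Since $\beta>0$, the eigenvector $(1,\beta)$ (viewed as a column vector) has finite class $1/\beta$, so $\phi(1/\beta)=1/\beta$, and taking reciprocals of both sides is exactly the asserted identity
\[
\beta=\cfrac{1}{n_1+\cfrac{1}{\ddots+\cfrac{1}{n_k+\frac1\beta}}}.
\]
This uses only that $(1,\beta)$ is \emph{an} eigenvector (not necessarily the Perron one), and all partial denominators are legitimate: $\beta>0$ gives $n_k+\tfrac1\beta>0$, then $n_2,\dots,n_{k-1}\ge1$ keep every interior denominator positive, and $n_1\ge0$ causes no trouble at the outermost level.

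The only real obstacle is bookkeeping, and the sketch above has already fixed the correct conventions: one must use the chart $z=x_1/x_2$, in which $h_2h_1$ is a translation (the other chart is useless here), and read $M_s$ with $\delta^{(n_1)}$ on the left while treating $(1,\beta)$ as a \emph{right} eigenvector — the opposite choices produce the continued fraction with $n_1,\dots,n_k$ in reversed order, which is not the stated formula. It is also worth recording why the hypotheses are exactly what is needed: if $k=1$ then $M_s=(h_2h_1)^{n_1}$ is parabolic with sole eigenline $\R(1,0)$, forcing $\beta=0$; but such an $s$ fails to be primitive, so the standing assumptions exclude this case, and for $k\ge2$ the positivity of $\beta$ is precisely what places $1/\beta$ at a finite point of $\P^1$. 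As a sanity check, the Fibonacci substitution $\sigma$ equals $\delta^{(1)}\circ\pi\circ\delta^{(0)}$, whence $M_\sigma=(h_2h_1)h_1=h_2$ and $\phi(z)=1+\tfrac1z$; the equation $\phi(1/\beta)=1/\beta$ reads $\beta=1/(1+\beta)$, i.e. $\beta=\tfrac{\sqrt5-1}{2}$, the rotation angle of the Fibonacci sequence.
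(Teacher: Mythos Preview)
The paper does not prove this lemma itself; it is quoted from Tan--Wen \cite{TW03} and stated without argument. Your proposal therefore cannot be compared against a proof in the paper, but it stands on its own as a correct and clean derivation. The reduction to the M\"obius action of $\GL(2,\Z)$ on $\P^1$ is the natural way to see the continued-fraction identity: once one records that $h_2h_1$ acts as $z\mapsto z+1$ and $h_1$ as $z\mapsto 1/z$ in the chart $z=x_1/x_2$, the eigenvector condition $M_s(1,\beta)^T\in\R(1,\beta)^T$ becomes exactly the fixed-point equation $\phi(1/\beta)=1/\beta$, and the nested form of $\phi$ is forced by the factorization of $M_s$.

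Your bookkeeping is correct and your remarks about it are to the point: the chart, the side on which the eigenvector sits, and the left-to-right reading of the factorization must all be chosen consistently, and you have done so. The Fibonacci check confirms the orientation. The only cosmetic comment is that the aside on $k=1$ (parabolic $M_s$, $\beta=0$, excluded by primitivity) is not strictly needed for the argument---the hypothesis already hands you a $\beta>0$---but it is a reasonable remark on why the hypotheses are coherent.
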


\begin{remark}
The continued fraction expansion of $\beta$ is
$$\begin{array}{rl}
[0;n_1,\overline{n_2,n_3,...n_{k-1},n_k+n_1} ] & \textnormal{ if } n_1 \geq 1,\\
\textcolor{white}{\{}[n_2; \overline{n_3,n_4,...,n_k,n_2}] & \textnormal{ if } n_1=0 \textnormal{ and } n_k \geq 1,\\
\textcolor{white}{\{}[n_2; \overline{n_3, n_4,...,n_{k-2},n_{k-1}+n_2}] & \textnormal{ if } n_1=n_k=0.
\end{array}$$
\end{remark}


\subsection{Sturmian Sequences and Substitution Sequences}\label{S:Sturmian}
Let $u$ be a sequence in $\calA^\N$ and recall that $w \in \tilde{\calA}$ is called a \defword{word}. Define $\calL_n(u)$ be the set of all words of $u$ of length $n$ and $\calL(u):=\bigcup_{n \in \N} \calL_n(u)$. This is the \defword{language} of $u$. We say the \defword{complexity} (of $u$) is the function $p_u(n):=|\calL_n(u)|$. It is clear that $p_u(n)$ is non-decreasing. If $u$ is eventually periodic, then $\exists n$ such that $p_u(n)=p_u(n+1)$. The converse is also true. Further, if $u$ is not eventually periodic, then $p_u(1) \geq 2$. Hence, if $u$ is not eventually periodic, then $p_u(n) \geq n+1$. We say $u$ is a \defword{Sturmian sequence} if $p_u(n)=n+1\ \forall n \in \N$. In the frame of this definition, we will only work with a two letter alphabet $\calA=\set{\bfzero,\bfone}$.

If $X$ is a collection of sequences, then we say $\calL(X):=\bigcup_{x \in X} \calL(x)$. Let $S$ be the standard \defword{left shift}, i.e. $(Su)_k=u_{k+1}$ for $k \geq 0$. Equip $\calA$ with the discrete topology and $\calA^\Z$ with the product topology. Extend $u$ to left arbitrarily so that it is a bi-infinite sequence and denote this new sequence $u'$. Define the \defword{dynamical hull} of $u$ to be the set
\begin{equation}\label{E:DynamicalHull}
\Omega_u \subset \calA^\Z:=\set{x: \exists n_k \to \infty\text{ such that }\lim_{k \to \infty} S^{(n_k)}(u')=x}.
\end{equation}
For a general dynamical system, the construction in \eqref{E:DynamicalHull} is called an \defword{$\omega$-limit set}, see for example \cite{BS02}. It follows from Remark~\reff{R:LocIso} that there is a one-to-one correspondence between substitution matrices and dynamical hulls of substitution sequences.

For the reader's convenience, we state some standard results about Sturmian sequences. Proofs of Propositions \reff{P:Frequency}, \reff{P:Freq_DynamicalHull}, \reff{P:SturmianRotationCutting}, and \reff{P:RotationCutting_DynamicalHull} can be found in \cite{Fog02}.

\begin{proposition}\label{P:Frequency}
The \defword{frequency} of \bfone's in a Sturmian sequence $u$,
$$\lim_{n \to \infty} \frac{|u_0 u_1 ... u_{n-1}|_\bfone}{n},$$
is well-defined and irrational. Further, the frequency of $u$ depends only on $\calL(u)$.
\end{proposition}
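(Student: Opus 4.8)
The plan is to prove Proposition \reff{P:Frequency} by combining the classical uniform balance property of Sturmian words with the fact that a Sturmian sequence is itself a natural coding of an irrational rotation. First I would establish that any Sturmian sequence $u$ is \emph{balanced}: for any two words $v, w \in \calL_n(u)$ of the same length $n$, one has $\left||v|_\bfone - |w|_\bfone\right| \leq 1$. This is a standard characterization of Sturmian sequences (equivalent to the complexity condition $p_u(n) = n+1$), and I would either cite it from \cite{Fog02} or sketch the short argument: if $u$ were not balanced, there would be a shortest ``unbalanced'' factor, and analyzing its structure forces $u$ to contain both $\bfzero w \bfzero$ and $\bfone w \bfone$ for some word $w$, which in turn forces $p_u(|w|+2) \geq |w|+4$, contradicting Sturmianity.

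Given balance, the existence of the limit is immediate by a subadditivity/Fekete-type argument. Set $a_n = \min_{w \in \calL_n(u)} |w|_\bfone$ (or just count along the prefix $u_0 \cdots u_{n-1}$). For any two prefixes of lengths $m$ and $n$ with, say, $m \le n$, balance applied to the two length-$m$ windows $u_0\cdots u_{m-1}$ and $u_n \cdots u_{n+m-1}$ shows that the counting function $f(n) := |u_0\cdots u_{n-1}|_\bfone$ is \emph{quasi-additive}: $\left|f(m+n) - f(m) - f(n)\right| \leq 1$ (the window $u_m \cdots u_{m+n-1}$ has $\bfone$-count within $1$ of $f(n)$). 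A standard lemma then gives that $f(n)/n$ converges; call the limit $\mu \in [0,1]$.

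Next I would show $\mu$ is irrational. Suppose $\mu = p/q$ in lowest terms. Using balance again, for each block length $kq$ the count $f(\text{any window of length } kq)$ lies within $1$ of $k p$; more carefully, balance forces every length-$q$ window to have $\bfone$-count in $\{p-1, p, p+1\}$ and a pigeonhole/averaging argument over long stretches pins the count of almost all length-$q$ windows to exactly $p$. But if ``too many'' consecutive length-$q$ windows have count exactly $p$, one can extract a periodic pattern, or more directly, one shows $p_u(n)$ fails to grow, contradicting $p_u(n) = n+1$. (The cleanest route is actually via the rotation model: a balanced aperiodic sequence is, by the Morse–Hedlund / rotation coding correspondence — which the paper already invokes in Section \reff{S:Sturmian} — a coding $\chi_{[1-\alpha,1)}(n\alpha + \omega)$ of an irrational rotation by some $\alpha$, and then Weyl equidistribution gives $f(n)/n \to \alpha$ with $\alpha$ irrational precisely because $u$ is not eventually periodic.) I would present the rotation-model argument as the main line since it simultaneously yields convergence, irrationality, and the formula $\mu = \alpha$.

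Finally, that the frequency depends only on $\calL(u)$: if $\calL(u) = \calL(u')$, then every prefix of $u'$ is a word in $\calL(u)$, and by balance its $\bfone$-count is within $1$ of $f(n) = |u_0\cdots u_{n-1}|_\bfone$, so $f'(n)/n$ and $f(n)/n$ have the same limit. I expect the main obstacle to be the irrationality claim: establishing convergence is routine subadditivity, but ruling out a rational limit requires either carefully leveraging the exact complexity bound $p_u(n) = n+1$ (rather than mere aperiodicity) or passing cleanly through the rotation-coding theorem; I would lean on the latter, citing \cite{Fog02} and \cite{MH40} as already referenced in the paper.
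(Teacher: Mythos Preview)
The paper does not actually prove this proposition: it states immediately before Proposition~\reff{P:Frequency} that ``Proofs of Propositions \reff{P:Frequency}, \reff{P:Freq_DynamicalHull}, \reff{P:SturmianRotationCutting}, and \reff{P:RotationCutting_DynamicalHull} can be found in \cite{Fog02}.'' So there is nothing to compare your argument against beyond the standard reference.

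Your outline is correct and is essentially the standard argument one finds in \cite{Fog02}: balance gives quasi-additivity of the prefix count, Fekete gives the limit, and the language-only dependence is immediate from balance. One caution on the irrationality step: invoking the rotation-coding correspondence (Proposition~\reff{P:SturmianRotationCutting}, i.e.\ Morse--Hedlund) risks circularity, since in many treatments that correspondence is \emph{established} by first constructing the frequency $\alpha$ and then showing $u$ codes rotation by $\alpha$. Your direct sketch (rational frequency plus balance forces eventual periodicity, contradicting $p_u(n)=n+1$) is the safer self-contained route; it would be worth tightening that paragraph rather than deferring to the rotation model. Otherwise the argument is sound and more detailed than what the paper provides.
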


\begin{proposition}\label{P:Freq_DynamicalHull}
Let $u$ and $v$ be Sturmian sequences. Then $u$ and $v$ have the same frequency if and only if they have the same language. If this is the case, they generate the same dynamical hull.
\end{proposition}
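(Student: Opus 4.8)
The plan is to separate the easy implication from the combinatorial heart of the statement. The ``only if'' direction is immediate: if $u$ and $v$ have the same language then they have the same frequency by Proposition~\reff{P:Frequency}. So the work lies in the ``if'' direction, and, together with the assertion about hulls, it all reduces to a single claim: \emph{the language of a Sturmian sequence is completely determined by its frequency.}

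To prove that claim I would first record that Sturmian sequences are \emph{balanced}, i.e. any two words $w,w'\in\calL_n(u)$ of a common length $n$ satisfy $\,|\,|w|_\bfone-|w'|_\bfone\,|\le 1$. This can be derived from $p_u(n)=n+1$ by a count of special factors: if balance failed there would be a shortest word $z$ with $\bfzero z\bfzero,\bfone z\bfone\in\calL(u)$ (or the mixed analogue), and examining left and right extensions of the factors of length $|z|+2$ forces $p_u(|z|+2)\ge|z|+3$, a contradiction; alternatively one may simply invoke the classical equivalence ``Sturmian $\iff$ aperiodic and balanced''. Combining balance with the hypothesis that the frequency of $\bfone$'s equals a fixed irrational $\alpha$ pins down, for each $n$, the number of $\bfone$'s in every length-$n$ factor: decomposing long prefixes of $u$ into blocks of length $n$ shows $\lfloor n\alpha\rfloor\le|w|_\bfone\le\lceil n\alpha\rceil$ for every $w\in\calL_n(u)$, and $\lceil n\alpha\rceil=\lfloor n\alpha\rfloor+1$ since $\alpha$ is irrational.

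The remaining — and, I expect, the genuinely substantive — step is to identify $\calL_n(u)$ exactly, since balance controls only the \emph{shape} of factors, not their precise set. The cleanest route is the cutting-sequence description: the length-$n$ factors of any Sturmian sequence of frequency $\alpha$ are exactly the words $R_{\alpha,\omega}(0)R_{\alpha,\omega}(1)\cdots R_{\alpha,\omega}(n-1)$ obtained as the phase $\omega$ ranges over $\T^1$, and by the three-distance theorem the $n+1$ points $\{-k\alpha\bmod 1\}_{k=0}^{n}$ partition $\T^1$ into $n+1$ arcs, producing exactly $n+1$ such words — which matches $p_u(n)=n+1$ and manifestly depends only on $\alpha$. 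This uses the rotation representation of Sturmian sequences; if one prefers not to invoke it here, the same conclusion follows by induction along the continued fraction expansion of $\alpha$, repeatedly desubstituting $u$ (balance lets one write a shift of $u$ as $\sigma$ or $\rho$ applied to a Sturmian sequence of smaller slope) and observing that each fixed $\calL_n(u)$ is already determined by finitely many partial quotients. Either way, $\calL(u)=\calL(v)$ whenever $u$ and $v$ share their frequency.

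Finally, for the dynamical hulls I would show $\Omega_u=\set{x\in\calA^\Z:\calL(x)\subseteq\calL(u)}$, from which $\calL(u)=\calL(v)\Rightarrow\Omega_u=\Omega_v$ is immediate. The inclusion ``$\subseteq$'' holds because one only shifts $u'$ forward, so every finite block of a limit $x=\lim S^{n_k}(u')$ equals a block of $u$ itself; the reverse inclusion uses uniform recurrence of Sturmian sequences (equivalently, minimality of their hulls): any $x$ with $\calL(x)\subseteq\calL(u)$ has each central block $x_{-N}\cdots x_N$ occurring in $u$ at arbitrarily late positions, so a diagonal choice of $n_k\to\infty$ realizes $x$ as $\lim S^{n_k}(u')$. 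The main obstacle throughout is the exact identification of $\calL_n(u)$; once that is available, both the language equality and the hull equality are routine.
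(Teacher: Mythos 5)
The paper does not actually prove this proposition; it refers the reader to \cite{Fog02} for Propositions~\reff{P:Frequency}, \reff{P:Freq_DynamicalHull}, \reff{P:SturmianRotationCutting}, and \reff{P:RotationCutting_DynamicalHull} collectively. Your argument is a correct reconstruction of the standard textbook proof along exactly the lines one finds there: balance plus irrationality of $\alpha$ pins the number of $\bfone$'s in each length-$n$ factor to $\{\lfloor n\alpha\rfloor,\lceil n\alpha\rceil\}$, the rotation/cutting description together with the three-distance theorem identifies $\calL_n(u)$ exactly from $\alpha$, and uniform recurrence yields the hull equality (correctly handling the irrelevance of the arbitrary left extension $u'$). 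Two small remarks. First, your identification of $\calL_n(u)$ silently uses Proposition~\reff{P:SturmianRotationCutting} (every Sturmian sequence of frequency $\alpha$ lies in the hull of $R_\alpha$); this is logically fine since that proposition is cited independently, but it means \reff{P:Freq_DynamicalHull} is really a corollary of \reff{P:SturmianRotationCutting} and \reff{P:RotationCutting_DynamicalHull}, not a stepping stone toward them. Second, in your balance sketch the displayed inequality $p_u(|z|+2)\ge|z|+3$ is not a contradiction with $p_u(|z|+2)=|z|+3$; you need $\ge|z|+4$, and reaching that requires the additional classical fact that a \emph{minimal} such $z$ is a palindrome, so both $z$'s left and right extensions are ambiguous. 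Your fallback to the Morse--Hedlund ``aperiodic and balanced'' characterization is cleaner and entirely adequate.
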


Define the \defword{rotation sequence} of irrational angle $\alpha$ in the 1-torus $\T^1$ and initial point $\omega \in \T^1$, 
$$R_{\alpha, \omega}(n):=\chi_{[1-\alpha,1)}(n\alpha + \omega \mod 1).$$
As a matter of convention, $R_\alpha:=R_{\alpha,0}$.

Let us denote by $C_{\beta, b}$ the \defword{cutting sequence} with irrational slope $\beta$ and intercept $b$ defined as follows: In the first quadrant of $\R^2$, construct a grid with horizontal and vertical lines through each lattice point. Starting from the origin, label each intersection of $y=\beta x + b$ with a $\bfzero$ if the gridline crossed is vertical and $\bfone$ if the gridline crossed is horizontal. Again, we will use the convention that $C_\beta := C_{\beta, 0}$.

\begin{proposition}\label{P:SturmianRotationCutting}
The following are equivalent:
\begin{enumerate}
\item The sequence $u$ is a Sturmian sequence of frequency $\alpha$.
\item The sequence $u$ is in the dynamical hull of $R_\alpha$. (see also \cite{MH40})
\item The sequence $u$ is in the dynamical hull of $C_\beta$ for $\alpha=\frac{\beta}{1+\beta}$.
\end{enumerate}
\end{proposition}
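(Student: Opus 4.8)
The plan is to reduce everything to one concrete computation — that the rotation sequence $R_\alpha$ is itself Sturmian with frequency $\alpha$ — and then get the two equivalences cheaply: (1)$\Leftrightarrow$(2) from Proposition~\reff{P:Freq_DynamicalHull} together with minimality of the rotation subshift, and (2)$\Leftrightarrow$(3) from an explicit change of coordinates between a cutting line and a circle rotation. So first I would establish that $R_\alpha$ is Sturmian of frequency $\alpha$. Fix $n$; for $t\in\T^1$ write $\Phi_n(t)=\big(\chi_{[1-\alpha,1)}(t),\,\chi_{[1-\alpha,1)}(t+\alpha),\,\dots,\,\chi_{[1-\alpha,1)}(t+(n-1)\alpha)\big)$, so that the length-$n$ words occurring in $R_{\alpha,\omega}$ are exactly the values taken by $\Phi_n$ (put $t=m\alpha+\omega$). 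The map $\Phi_n$ is locally constant, changing value only when some $t+j\alpha$ lands on an endpoint of $[1-\alpha,1)$, i.e. at $t\in\bigcup_{j=0}^{n-1}\{-j\alpha,\,1-\alpha-j\alpha\}$; since $1-\alpha-j\alpha\equiv-(j+1)\alpha\pmod 1$, this set is just $\{0,-\alpha,-2\alpha,\dots,-n\alpha\}$, which has $n+1$ elements because $\alpha$ is irrational. Hence $\T^1$ is split into $n+1$ arcs and $p_{R_\alpha}(n)\le n+1$; on the other hand $R_\alpha$ is not eventually periodic, because equidistribution of $\{j\alpha \bmod 1\}$ forces the frequency of $\bfone$'s to equal $|[1-\alpha,1)|=\alpha\notin\Q$, so $p_{R_\alpha}(n)\ge n+1$ by the complexity bound recalled in Section~\reff{S:Sturmian}. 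Thus $p_{R_\alpha}(n)=n+1$, so $R_\alpha$ is Sturmian of frequency $\alpha$; and since $t\mapsto t+\alpha$ is minimal the subshift $\Omega_{R_\alpha}$ is minimal, so every point of $\Omega_{R_\alpha}$ has language $\calL(R_\alpha)$ and is Sturmian of frequency $\alpha$ — this is (2)$\Rightarrow$(1).

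For (1)$\Rightarrow$(2): if $u$ is Sturmian with frequency $\alpha$, then by the above $R_\alpha$ is Sturmian with the same frequency, so Proposition~\reff{P:Freq_DynamicalHull} yields $\calL(u)=\calL(R_\alpha)$ and $\Omega_u=\Omega_{R_\alpha}$; since a Sturmian sequence is uniformly recurrent it lies in its own (minimal) hull, so $u\in\Omega_u=\Omega_{R_\alpha}$.

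For (2)$\Leftrightarrow$(3) I would use the identity $\chi_{[1-\alpha,1)}(x)=\lfloor x+\alpha\rfloor-\lfloor x\rfloor$, valid off the countable set $\Z-\alpha\Z$: it shows that $R_{\alpha,\omega}$ agrees, up to finitely many boundary exceptions, with the mechanical sequence $n\mapsto\lfloor(n+1)\alpha+\omega\rfloor-\lfloor n\alpha+\omega\rfloor$, whose $\bfone$'s occupy the positions of a Beatty sequence of density $\alpha$. On the cutting side, the $k$-th symbol of $C_{\beta,b}$ (say $0<b<1$) records whether the $k$-th gridline met by $y=\beta x+b$ is horizontal; since that line meets the vertical line $x=m$ after meeting exactly $\lfloor\beta m+b\rfloor$ horizontal lines, the $\bfzero$'s of $C_{\beta,b}$ sit at the integers $m+\lfloor\beta m+b\rfloor$, a Beatty sequence of density $1/(1+\beta)$, so the $\bfone$'s form a Beatty sequence of density $\beta/(1+\beta)$. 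Setting $\alpha=\beta/(1+\beta)$ and choosing $\omega=\omega(b)$ to match the phases, a direct comparison of the two Beatty sequences shows $C_{\beta,b}$ coincides with $R_{\alpha,\omega}$ outside finitely many indices; passing to $\omega$-limit sets then absorbs the shift, the finitely many exceptions, and — for $C_\beta=C_{\beta,0}$, whose line passes through the single lattice point $(0,0)$ since $\beta$ is irrational — the ambiguity at the origin, while minimality makes $\Omega_{R_{\alpha,\omega}}$ independent of $\omega$. Hence $\Omega_{C_\beta}=\Omega_{R_\alpha}$ for $\alpha=\beta/(1+\beta)$, i.e. (2)$\Leftrightarrow$(3).

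The hard part is this last step: pinning down $\omega(b)$ and verifying the symbol-by-symbol (or at least language-level) agreement of $C_{\beta,b}$ with a genuine rotation sequence through the floor-function bookkeeping, while keeping the degenerate lattice-point case under control; everything in the earlier steps is routine once Proposition~\reff{P:Freq_DynamicalHull} and minimality of the rotation subshift are in hand. If that matching becomes more painful than expected, I would instead prove directly that $C_\beta$ is Sturmian with frequency $\beta/(1+\beta)$ — the same $(n+1)$-point count as in the first step, applied to the rotation underlying a cutting sequence — and then close the loop with a second application of Proposition~\reff{P:Freq_DynamicalHull}.
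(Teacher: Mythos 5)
The paper does not prove this proposition: it is stated as standard background with a pointer to \cite{Fog02} (and to \cite{MH40} for the rotation characterization), so there is no in-paper argument to compare against. Measured against the textbook proof it aims to reproduce, your sketch is essentially correct and follows the standard route. The $n+1$-arc count for the word map $\Phi_n$, combined with equidistribution to rule out eventual periodicity, does give that $R_\alpha$ is Sturmian of frequency $\alpha$, and the reductions of (1)$\Leftrightarrow$(2) to Proposition~\reff{P:Freq_DynamicalHull} together with minimality and uniform recurrence of the rotation subshift are the right moves. Your weakest link is the Beatty-sequence comparison in (2)$\Leftrightarrow$(3): carrying the phase $\omega(b)$ and the lattice-point convention (Remark~\reff{R:RotationCutting_Ambiguous}) through the floor-function bookkeeping is more delicate than the sketch suggests, particularly at $b=0$ where the cutting line passes through $(0,0)$ at the very start. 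But, as you already note, the robust version of this step is to run the same $n+1$-arc count directly on $C_\beta$ --- the discontinuity set of the analogous word map is again a finite orbit segment of an irrational circle rotation of angle $\beta/(1+\beta)$ --- to conclude that $C_\beta$ is Sturmian of frequency $\beta/(1+\beta)$, and then close the loop with a second application of Proposition~\reff{P:Freq_DynamicalHull}. Commit to that route; it subsumes the explicit Beatty matching and avoids the boundary casework entirely.
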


\begin{proposition}\label{P:RotationCutting_DynamicalHull}\quad
\begin{enumerate}
\item The dynamical hull of $R_\alpha$ is $\bigcup_{\omega \in \T^1} R_{\alpha, \omega}$.
\item The dynamical hull of $C_\beta$ is $\bigcup_{b \in [0, 1)} C_{\beta, b}$.
\end{enumerate}
\end{proposition}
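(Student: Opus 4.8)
The plan is to use the fact that the left shift $S$ acts on rotation sequences simply by rotating the phase. Extending $R_{\alpha,\omega}$ to a bi-infinite sequence by the same formula, \eqref{E:RotationSequence} gives immediately
$$(S R_{\alpha,\omega})(n)=R_{\alpha,\omega}(n+1)=\chi_{[1-\alpha,1)}\big(n\alpha+(\omega+\alpha)\big)=R_{\alpha,\omega+\alpha}(n),$$
so $S R_{\alpha,\omega}=R_{\alpha,\omega+\alpha}$ and hence the forward $S$-orbit of $R_\alpha=R_{\alpha,0}$ is exactly $\set{R_{\alpha,n\alpha}:n\ge 0}$. The only external inputs are two standard facts about irrational rotations: $\set{n\alpha\bmod 1:n\ge 0}$ is dense in $\T^1$, and it accumulates at every point of $\T^1$ from both sides. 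The other thing I would record at the outset is that, for each fixed $n$, the coordinate map $\omega\mapsto R_{\alpha,\omega}(n)=\chi_{[1-\alpha,1)}(n\alpha+\omega)$ is right-continuous on $\T^1$ (the interval in \eqref{E:RotationSequence} is closed on the left, open on the right), with a jump only where $n\alpha+\omega\in\set{1-\alpha,0}\pmod 1$; hence $\omega\mapsto R_{\alpha,\omega}$ is right-continuous into $\calA^\Z$, and it is continuous except at the countable set $\Phi$ of phases $\omega$ for which $n\alpha+\omega\in\Z$ for some $n\in\Z$ (at such an $\omega$ there are exactly two jump coordinates, at consecutive indices).

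For the inclusion $\bigcup_{\omega\in\T^1}R_{\alpha,\omega}\subseteq\Omega_{R_\alpha}$, fix $\omega$ and choose $n_k\to\infty$ with $n_k\alpha\to\omega$ from above, which is possible by the two-sided density; then $S^{n_k}R_\alpha=R_{\alpha,n_k\alpha}\to R_{\alpha,\omega}$ in $\calA^\Z$ by right-continuity, so $R_{\alpha,\omega}\in\Omega_{R_\alpha}$. For the reverse inclusion, let $x\in\Omega_{R_\alpha}$, so $x=\lim_k R_{\alpha,n_k\alpha}$ with $n_k\to\infty$; by compactness of $\T^1$, pass to a subsequence along which $n_k\alpha$ converges to some $\omega$ from one side. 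If it converges from above, right-continuity gives $x=R_{\alpha,\omega}$. If from below, then $x(n)=\lim_k\chi_{[1-\alpha,1)}\big((n+n_k)\alpha\big)=\chi_{(1-\alpha,1]}(n\alpha+\omega)$, which agrees with $R_{\alpha,\omega}(n)$ at every $n$ with $n\alpha+\omega\notin\set{1-\alpha,0}\pmod 1$, i.e.\ at all but the two exceptional indices, and $x$ is then the left-limit sequence $\lim_{\omega'\uparrow\omega}R_{\alpha,\omega'}$. This establishes $\Omega_{R_\alpha}=\overline{\bigcup_{\omega}R_{\alpha,\omega}}$; the closure adds only the countably many left-limit sequences attached to phases in $\Phi$, and under the customary identification of each of these with its phase, $\set{R_{\alpha,\omega}:\omega\in\T^1}$ is read as already closed, giving part~(1) as stated.

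Part~(2) I would deduce from part~(1) together with Proposition~\reff{P:SturmianRotationCutting}, which already identifies the hull of $C_\beta$ with the hull of $R_\alpha$ for $\alpha=\beta/(1+\beta)$. What remains is the static comparison $\bigcup_{b\in[0,1)}C_{\beta,b}=\bigcup_{\omega\in\T^1}R_{\alpha,\omega}$: the cutting sequence $C_{\beta,b}$ coincides with $R_{\alpha,\omega}$ under an explicit affine bijection $b\leftrightarrow\omega$ between $[0,1)$ and $\T^1$ --- the same dictionary behind the equivalence (1)$\Leftrightarrow$(3) of Proposition~\reff{P:SturmianRotationCutting}, obtained by comparing, as one traverses $y=\beta x+b$, the "vertical vs.\ horizontal" code of the $n$-th gridline met with the rotation coding. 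Since $b\mapsto C_{\beta,b}$ is again right-continuous with jumps on a countable set, rerunning the same one-sided approximation argument in $b$ yields $\bigcup_{b\in[0,1)}C_{\beta,b}$ up to the identical conventional identification.

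The main obstacle is precisely this bookkeeping around the discontinuities of $\omega\mapsto R_{\alpha,\omega}$: because the map is not continuous one cannot merely pass rotation orbits to their limits, and the hull is genuinely a countable "doubling" of the circle of phases rather than a continuous image of $\T^1$. Isolating the right-continuity of the half-open coding up front is what makes each step go through: every "take $n_k\alpha\to\omega$ and pass to the limit" can be steered onto the intended sequence by approaching $\omega$ from above, and the two-sided density of $\set{n\alpha}$ guarantees that approach is available; the only sequences not reached this way are the left-limit sequences over $\Phi$, which are exactly those the statement silently absorbs into $\set{R_{\alpha,\omega}:\omega\in\T^1}$.
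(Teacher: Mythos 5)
The paper does not prove this proposition itself: it points to \cite{Fog02} for the proof, and then separately addresses the subtle point about the two-sided coding in Remark~\ref{R:RotationCutting_Ambiguous}. Your argument is a correct, self-contained alternative: the identity $SR_{\alpha,\omega}=R_{\alpha,\omega+\alpha}$, the two-sided density of $\set{n\alpha \bmod 1}$, and the right-continuity of $\omega\mapsto R_{\alpha,\omega}$ into $\calA^\Z$ (in the product topology, finite-coordinate-wise) together give both inclusions. What is worth noting is that your analysis of the failure of two-sided continuity rediscovers precisely the caveat the paper handles separately in Remark~\ref{R:RotationCutting_Ambiguous}: limits of $R_{\alpha,n_k\alpha}$ with $n_k\alpha\uparrow\omega$ produce the $\chi_{(1-\alpha,1]}$ coding rather than the $\chi_{[1-\alpha,1)}$ coding, and the statement's $\bigcup_\omega R_{\alpha,\omega}$ is read as including that countable collection of ``doubled'' sequences. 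For part~(2) you reduce to part~(1) via Proposition~\ref{P:SturmianRotationCutting} and a static $b\leftrightarrow\omega$ dictionary between cutting and rotation codings; that dictionary is only sketched, but it is the same device underlying equivalence (2)$\Leftrightarrow$(3) of Proposition~\ref{P:SturmianRotationCutting}, so the omission is reasonable. In short, where the paper buys brevity by delegating to \cite{Fog02}, your direct proof buys transparency, and in particular makes visible the mechanism behind the remark on the ambiguity of the half-open coding.
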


\begin{remark}\label{R:RotationCutting_Ambiguous}
The following caveat must be added to Proposition~\reff{P:RotationCutting_DynamicalHull}: A rotation sequence, $R_{\alpha, \omega}(n)$ may also be defined using $\chi_{(1-\alpha,1]}$. When we say ``$\bigcup_{\omega \in \T^1} R_{\alpha, \omega}$,'' we include this countable collection of sequences. Similarly, in the case of a cutting sequence, there is an ambiguity in the above definition if $y=\beta x + b$ crosses a lattice point. In this case, we consider both the sequence with $\bfzero\bfone$ and the sequence with $\bfone\bfzero$ at this crossing. Details of this can be found in \cite{Dam07}.
\end{remark}

The fixed point of a primitive invertible substitution is Sturmain, i.e. $C_{\beta,b}$ for some irrational $\beta$ and some $b$. Recall that substitution $s$ is called primitive if there is $k$ such that $(M_s)^k$ is strictly positive. From this, we may apply the Perron-Frobenius theorem to conclude that $M_s$ has two distinct eigenvalues $\mu>1$ and $\mu\inv$, and there is an eigenvector corresponding to $\mu$ is in the first quadrant. The slope $\beta$ is the frequency of the letter $\bfone$ per one letter $\bfzero$. Hence, it is exactly the second coordinate of the eigenvector we previously found in Lemma~\reff{L:PrimInvertSubs_Decomp}.

\begin{samepage}
\begin{proposition}[Tan, Wen 2003 \cite{TW03}]\label{P:ZeroIntercept}
Suppose the fixed point of $s$ is $C_{\beta,b}$. The following are equivalent:
\begin{enumerate}
\item $s \in \brac{\pi, \sigma}$, that is, $s$ is a product of $\pi$ and $\sigma$.
\item $b=0$.
\end{enumerate}
\end{proposition}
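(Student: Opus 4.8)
The plan is to establish the equivalence (1)$\iff$(2) by tracking the effect of the generating substitutions $\pi$ and $\sigma$ (versus $\rho$) on the intercept of the associated cutting sequence. First I would recall the precise relationship, laid out in Section~\reff{S:Sturmian}, between a primitive invertible substitution $s$, its matrix $M_s$, the Perron–Frobenius eigenvector $(1,\beta)$ in the first quadrant, and the fixed point $C_{\beta,b}$: the slope $\beta$ is forced by $M_s$ (it is the frequency of $\bfone$ per $\bfzero$, i.e.\ the second coordinate of the normalized eigenvector computed in Lemma~\reff{L:PrimInvertSubs_Decomp}), while the intercept $b$ records the ``phase'' of the sequence and is the genuinely new piece of data. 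The key observation is that $\sigma:\bfzero\mapsto\bfzero\bfone,\ \bfone\mapsto\bfzero$ and $\rho:\bfzero\mapsto\bfone\bfzero,\ \bfone\mapsto\bfzero$ have the \emph{same} matrix $h_2$ (Remark~\reff{R:SubsMatrix_Generators}) but differ by an inner automorphism (here $\rho=\iota_\bfone\sigma$ up to the involution, per Theorem~\reff{T:Kernel=InnerAut}); this is exactly the difference between starting the cutting sequence at the origin and starting it at a nonzero intercept.

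For the direction (1)$\imp$(2): if $s$ is a word in $\pi$ and $\sigma$ only, I would argue by induction on the length of this word that the fixed point is $C_{\beta,0}$. The base point is that $\sigma$ itself fixes $C_{\beta_0,0}$ where $\beta_0=\frac{\sqrt5-1}{2}$ (this is the Fibonacci case with zero intercept), and $\pi$ simply swaps the two letters, which corresponds to a reflection of the cutting-sequence picture about the line $y=x$ and preserves the property of having intercept $0$ (the line $y=\beta x$ through the origin maps to $y=\beta\inv x$ through the origin). For the inductive step, writing $s=s'\circ g$ with $g\in\{\pi,\sigma\}$: since $g$ has zero intercept in the above sense and $s'$ (by induction) preserves the zero-intercept property, the composition does too — here I would invoke the standard fact (see \cite{Fog02} and Remark~\reff{R:LocIso}) that applying a substitution to $C_{\beta,b}$ yields $C_{\beta',b'}$ where $b'$ depends linearly/affinely on $b$, and check that $b=0$ is a fixed value of this action for the generators $\pi$ and $\sigma$. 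Since $C_{\beta,0}$ is the \emph{unique} fixed point of $s$ with that slope, $b=0$.

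For the converse (2)$\imp$(1): by Theorem~\reff{T:InvertSubs_Generators} every invertible substitution is a word in $\pi$, $\sigma$, $\rho$, so I would argue contrapositively — if $s\notin\brac{\pi,\sigma}$, then any expression for $s$ must contain at least one letter $\rho$. I would then show that the presence of a $\rho$ forces the fixed point to have nonzero intercept. The cleanest route is to use the factorization machinery of Remark~\reff{R:SubstitutionFactorization} and Lemma~\reff{L:PrimInvertSubs_Decomp}: rewrite $s$ in the canonical form $\delta^{(n_1)}\circ\pi\circ\cdots\circ\pi\circ\delta^{(n_k)}$, so that being a word in $\pi,\sigma$ corresponds precisely to the condition $n_1\geq 1$ (compare the three cases in the Remark following Lemma~\reff{L:PrimInvertSubs_Decomp}, where $n_1=0$ is exactly where $\rho$ is unavoidable). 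Then I would identify the intercept $b$ explicitly in terms of the $n_i$ and $\beta$ and observe that $b=0$ forces $n_1\geq 1$. Concretely, one computes that the fixed word begins with $\bfzero$ exactly when $n_1\geq 1$, and a cutting sequence $C_{\beta,b}$ begins with $\bfzero$ for small positive $b$ but with $\bfone$ when $b=0$ is pushed the other way — more precisely, $b=0$ is characterized among the phases by the sequence being a \emph{two-sided} fixed point symmetric in the appropriate sense, which the $\rho$-factor breaks.

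The main obstacle I anticipate is making the ``intercept of $s(C_{\beta,b})$'' bookkeeping fully rigorous: one must nail down the affine action of each generator $\pi,\sigma,\rho$ on the intercept parameter $b$ (modulo the identifications in Remark~\reff{R:RotationCutting_Ambiguous} about lattice-point crossings), and verify that $\pi$ and $\sigma$ fix $b=0$ while $\rho$ sends $b=0$ to a strictly positive value. Once that lemma is in hand, both implications follow by a short induction as above; the rest is routine translation between the cutting-sequence, rotation-sequence, and continued-fraction descriptions already assembled in this section. Since this is a cited result of Tan and Wen, I would likely present the argument in the streamlined continued-fraction form (via Lemma~\reff{L:PrimInvertSubs_Decomp}) rather than the geometric one, to keep the proof short.
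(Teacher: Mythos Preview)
The paper does not contain a proof of this proposition: it is stated with attribution to Tan and Wen \cite{TW03} and used as a black box, so there is nothing to compare your approach against. Your plan to track the action of the generators $\pi,\sigma,\rho$ on the intercept parameter is a natural line of attack and is in the spirit of the combinatorial arguments in \cite{TW03}, but as written it remains a sketch rather than a proof.

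The main soft spot you already flagged is the right one: the statement that applying a substitution to $C_{\beta,b}$ yields $C_{\beta',b'}$ with $b'$ an explicit (affine) function of $b$, and that $\pi,\sigma$ fix $b=0$ while any occurrence of $\rho$ moves it off zero, is doing all the work and is not established. Two concrete warnings. First, your correspondence ``$s\in\brac{\pi,\sigma}$ iff $n_1\geq 1$ in the canonical factorization'' is not quite right as stated: the three cases in the remark after Lemma~\reff{L:PrimInvertSubs_Decomp} are trichotomized by $n_1$ \emph{and} $n_k$, and the $\delta^{(n)}$ blocks themselves can be built from either $\sigma\circ\pi$ or $\rho\circ\pi$ (Remark~\reff{R:SubstitutionFactorization}), so ``no $\rho$ appears'' does not reduce to a single inequality on $n_1$. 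Second, the contrapositive direction needs care because the generating set $\{\pi,\sigma,\rho\}$ is not free: a given $s$ may admit several factorizations, some containing $\rho$ and some not, so ``any expression for $s$ must contain a $\rho$'' is a nontrivial claim about the monoid, not an immediate consequence of $s\notin\brac{\pi,\sigma}$. You would either need a normal-form result for invertible substitutions or, more directly, an intrinsic characterization of $\brac{\pi,\sigma}$ (for instance via the first letters of $s(\bfzero)$ and $s(\bfone)$, or via standard/Christoffel words) that bypasses the factorization ambiguity.
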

\end{samepage}

\begin{remark}
Due to Remark~\reff{R:LocIso} and Proposition~\reff{P:SturmianRotationCutting}, we need only classify one representative element of each dynamical hull. Proposition~\reff{P:ZeroIntercept} shows that classifying the elements $\set{C_\beta}_{\beta \in \R^{\geq 0}}$ that are fixed by a substitution effectively classifies all Sturmian sequences that are fixed by a substitution.
\end{remark}

\begin{theorem}[Crisp, Moran, Pollington, Shiue 1993 \cite{CMPS93}]\label{T:SturmianInvariant}\quad
\begin{itemize}
\item If $\beta>1$ is irrational, $C_\beta$ is invariant under some non-trivial substitution $s$ if and only if $\beta=[b_0;\overline{b_1,...,b_n}]$, where $b_n \geq b_0 \geq 1$. Further, if $n$ is minimal, then $s$ is a power of
$$\pi \circ (\sigma \circ \pi)^{b_0} \circ \pi \circ ... \circ \pi \circ (\sigma \circ \pi)^{b_{n-1}} \circ \pi \circ (\sigma \circ \pi)^{b_n-b_0} \circ \pi.$$
\item If $0<\beta<1$ is irrational, $C_\beta$ is invariant under some non-trivial substitution $s$ if and only if $\beta=[0;b_0,\overline{b_1,...,b_n}]$ where $b_n \geq b_0$. Further, if $n$ is minimal, then $s$ is a power of
$$(\sigma \circ \pi)^{b_0} \circ \pi \circ ... \circ \pi \circ (\sigma \circ \pi)^{b_{n-1}} \circ \pi \circ (\sigma \circ \pi)^{b_n-b_0}.$$\end{itemize}
\end{theorem}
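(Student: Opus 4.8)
The plan is to recast ``$C_\beta$ is invariant under $s$'' as a statement about the Perron eigenvector of the substitution matrix $M_s$, and then to read the continued fraction expansion of $\beta$ off of the $\{h_1,h_2\}$-factorization of $M_s$ supplied by Lemma~\reff{L:Matrix_Generators} and Remark~\reff{R:SubstitutionFactorization}. The inequality $b_n\ge b_0$ and the occurrence of $\sigma\circ\pi$ in place of $\rho\circ\pi$ should turn out to be, respectively, the shadow of the continued fraction formula following Lemma~\reff{L:PrimInvertSubs_Decomp} and of Proposition~\reff{P:ZeroIntercept}.

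\emph{Necessity.} Suppose $C_\beta=C_{\beta,0}$ is fixed by a non-trivial substitution $s$. Since $C_\beta$ is Sturmian, hence not eventually periodic, $s$ is primitive and invertible (classical; see \cite{Fog02}), so $M_s\in\calM$. For any sequence $u$ the letter-frequency vectors of $u$ and of $s(u)$ are related by $M_s$, so applying this to $u=C_\beta=s(C_\beta)$, whose frequency vector is proportional to $(1,\beta)$ with $\beta>0$, shows that $(1,\beta)$ is an eigenvector of $M_s$, necessarily the Perron one since it lies in the positive cone. Because the fixed point has intercept $0$, Proposition~\reff{P:ZeroIntercept} gives $s\in\brac{\pi,\sigma}$; moreover $\Psi\mapsto M_\Psi$ is injective on $\brac{\pi,\sigma}$, since by Theorem~\reff{T:Kernel=InnerAut} its kernel consists of inner automorphisms (possibly composed with letter inversion), and the only one of these that is a substitution is the identity. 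By Lemma~\reff{L:Matrix_Generators} and Remark~\reff{R:SubstitutionFactorization}, $M_s$ lifts to a substitution of the form $\delta^{(n_1)}\circ\pi\circ\cdots\circ\pi\circ\delta^{(n_k)}$ with $\delta^{(n_i)}=(\sigma\circ\pi)^{n_i}$, $n_1,n_k\ge0$, $n_2,\dots,n_{k-1}\ge1$, lying in $\brac{\pi,\sigma}$, and by the injectivity just noted this lift must equal $s$. Lemma~\reff{L:PrimInvertSubs_Decomp} and the remark after it now identify $\beta$, the Perron slope, as $[0;n_1,\overline{n_2,\dots,n_{k-1},n_k+n_1}]$, $[n_2;\overline{n_3,\dots,n_k,n_2}]$, or $[n_2;\overline{n_3,\dots,n_{k-2},n_{k-1}+n_2}]$ according to whether $n_1$ and $n_k$ vanish; in each case the final digit of the period is at least the leading displayed digit, and $\beta>1$ exactly when $n_1=0$ --- which is precisely the stated alternative.

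\emph{Sufficiency and uniqueness.} Conversely, given $\beta$ of the stated form, let $s_0$ be the explicit substitution in the statement; cancelling the trivial factors $\pi\circ\pi$ puts $s_0$ in the canonical form above with $\delta^{(n_i)}=(\sigma\circ\pi)^{n_i}$ and interior digits $\ge1$, so a suitable power $s_0^j$ is primitive and invertible, hence (as recalled in Section~\reff{S:Sturmian}) has a cutting-sequence fixed point $C_{\beta',b'}$, and Lemma~\reff{L:PrimInvertSubs_Decomp} gives $\beta'=\beta$. Since $s_0\in\brac{\pi,\sigma}$, Proposition~\reff{P:ZeroIntercept} forces $b'=0$, so the fixed point is $C_{\beta,0}=C_\beta$, which is thus invariant under $s_0^j$. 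Finally, when $n$ is minimal the $\{h_1,h_2\}$-word representing $M_{s_0}$ is not a proper power, so by the necessity argument every non-trivial substitution fixing $C_\beta$ has matrix a power of $M_{s_0}$, whence, by injectivity of $\Psi\mapsto M_\Psi$ on $\brac{\pi,\sigma}$, that substitution is a power of $s_0$.

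The step I expect to be the main obstacle is the combinatorial bookkeeping linking the monoid structure of $\calM$ --- the $\{h_1,h_2\}$-word representing $M_s$ --- to the continued fraction algorithm: keeping straight the three cases according to whether the endpoint exponents $n_1,n_k$ vanish, the corresponding split between $\beta>1$ and $0<\beta<1$, and the off-by-one between the period length $n$ and the number of $h_2h_1$-blocks, which is exactly what the inequality $b_n\ge b_0$ records. The reduction steps --- primitivity and invertibility of $s$, and intercept $0\Rightarrow s\in\brac{\pi,\sigma}$ --- are comparatively routine given the cited results.
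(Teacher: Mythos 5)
The paper does not prove this theorem at all; it is imported verbatim from \cite{CMPS93} as a citation, with the remark afterward merely noting consistency with Lemma~\reff{L:PrimInvertSubs_Decomp}. So there is no internal proof to compare against. Your sketch is a sensible attempt to re-derive the statement from the surrounding toolkit, and the architecture (Perron eigenvector gives the continued fraction of $\beta$ via the $\{h_1,h_2\}$-decomposition; intercept $0$ forces $s\in\brac{\pi,\sigma}$ by Proposition~\reff{P:ZeroIntercept}) is the right kind of bookkeeping. But you should be aware that routing through Lemma~\reff{L:PrimInvertSubs_Decomp} (Tan--Wen 2003) to prove a 1993 result is historically circular: \cite{TW03} builds on \cite{CMPS93}, so what you have is a consistency check rather than an independent proof.

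Two steps are genuinely gapped. First, the opening reduction --- that a non-trivial $s$ fixing a Sturmian $C_\beta$ must be primitive and invertible --- is carrying most of the weight and is dismissed as ``classical.'' Primitivity is cheap (a reducible $2\times2$ substitution matrix forces an eventually periodic fixed point), but invertibility is exactly the Mignosi--S\'e\'ebold/Berstel--S\'e\'ebold characterization of Sturmian morphisms, contemporaneous with and intertwined with \cite{CMPS93} itself; you cannot treat it as a free import if the goal is a proof. Second, your injectivity argument for $\Psi\mapsto M_\Psi$ on $\brac{\pi,\sigma}$ does not go through as written: if $M_s=M_{s'}$ then $s'\circ s^{-1}$ lies in the kernel of Theorem~\reff{T:Kernel=InnerAut}, but it is an automorphism of $\tilde\calA$, not a priori a substitution, so the observation that no non-identity inner automorphism is a substitution says nothing about it. The conclusion is true --- two substitutions with the same matrix fixing the same one-sided Sturmian $u$ agree, by comparing equal-length prefixes in $s(u)=u=s'(u)$ out to the first occurrence of each letter --- but the stated justification needs to be replaced. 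Similarly, the final uniqueness assertion (that any fixing $s$ has matrix a power of $M_{s_0}$ because the $\{h_1,h_2\}$-word is ``not a proper power'') is exactly the combinatorial obstacle you flagged and is left unargued; since $h_1^2=I$ the monoid $\calM$ is not free on $h_1,h_2$, so one must work in the normal form of Remark~\reff{R:SubstitutionFactorization} and use minimality of the continued-fraction period to pin down the word.
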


\begin{remark}
Observe that these correlate exactly to the conclusion of Lemma~\reff{L:PrimInvertSubs_Decomp}. As stated in Remark \reff{R:SubstitutionFactorization}, substitutions can be factored in a canonical way and then associated to a rotation sequence. For example, the substitution 
$$\pi \circ (\sigma \circ \pi)^{n_2} \circ \pi ... \circ \pi \circ (\sigma \circ \pi)^{n_k}$$
fixes the Sturmain sequences
$$C_\beta, \quad \beta=[n_2; \overline{n_3,n_4,...,n_k,n_2}]$$
which is equal to
$$R_\alpha, \quad \alpha=[0; 1, \overline{n_2,n_3,n_4,...,n_k}].$$
\end{remark}

\begin{remark}
It is also worth noting that not all rotation sequences, even those with eventually periodic continued fraction expansions, are fixed by a substitution.
\end{remark}

\begin{theorem}[Brown 1991 \cite{Bro91}]\label{T:SturmianInvariant_CE}
Let $\alpha=[0; 5, \overline{1}]$. No non-trivial substitution fixes $R_\alpha$.
\end{theorem}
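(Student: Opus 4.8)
The plan is to reduce the claim to the classification of substitution–invariant cutting sequences recorded in Theorem~\reff{T:SturmianInvariant}, applied to the intercept-$0$ cutting sequence that shares a dynamical hull with $R_\alpha$. First I would translate the hypothesis into continued fractions: by Proposition~\reff{P:SturmianRotationCutting}, $R_\alpha$ is Sturmian of frequency $\alpha$ and hence lies in the dynamical hull of $C_\beta$ with $\alpha=\beta/(1+\beta)$, i.e. $1/\alpha=1+1/\beta$; thus $\beta=[0;a_1-1,a_2,\dots]$ whenever $\alpha=[0;a_1,a_2,\dots]$. For $\alpha=[0;5,\overline 1]$ this gives $\beta=[0;4,\overline 1]$ (equivalently, with $\varphi=[\,\overline 1\,]=\tfrac{1+\sqrt5}{2}$ one has $\alpha=\tfrac{1}{4+\varphi}$, so $\beta=\tfrac{\alpha}{1-\alpha}=\tfrac{1}{3+\varphi}$ and $1/\beta=3+\varphi=[4;\overline 1]$). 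In the notation of Theorem~\reff{T:SturmianInvariant} the leading partial quotient is forced to be $b_0=4$ while the purely periodic tail $\overline 1$ forces $b_n=1$, so the inequality $b_n\ge b_0$ fails and $C_\beta$ is invariant under no non-trivial substitution.

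It remains to transfer this from $C_\beta$ to $R_\alpha$. Suppose for contradiction that a non-trivial substitution $s$ fixes $R_\alpha$. Since $R_\alpha$ is non-periodic, $s$ is non-erasing, and a non-trivial substitution with a Sturmian fixed point is primitive and invertible (see the structure theory of Section~\reff{S:Substitution} and \cite{Fog02}), so $\det M_s=\pm1$ and, by Lemma~\reff{L:Matrix_Generators}, $M_s$ is a word in $h_1,h_2$. Replacing each $h_1$ by $\pi$ and each $h_2$ by $\sigma$ produces $s'\in\brac{\pi,\sigma}$ with $M_{s'}=M_s$. By Theorem~\reff{T:Kernel=InnerAut} (the involution branch cannot occur for genuine substitutions) together with Remark~\reff{R:LocIso}, $s$ and $s'$ are conjugate, so the fixed point of $s'$ is locally isomorphic to $R_\alpha$; by Proposition~\reff{P:ZeroIntercept} that fixed point has intercept $0$, hence equals $C_{\beta'}$ where $(1,\beta')$ spans the Perron eigenline of $M_{s'}=M_s$, and $\beta'$ --- the frequency of $\bfone$ per $\bfzero$ of a sequence locally isomorphic to $R_\alpha$ --- equals $\alpha/(1-\alpha)=\beta$. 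Since $M_{s'}=M_s\ne I$, the substitution $s'$ is non-trivial and fixes $C_\beta$, contradicting the previous paragraph.

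I expect the continued-fraction bookkeeping to be routine; the substance is just that the periodic tail $\overline 1$ cannot keep up with the unavoidably large leading quotient $b_0=4$. The one step that requires genuine care is the transfer argument: one must justify that a non-trivial substitution fixing a Sturmian sequence is primitive and invertible (so that its matrix lies in $\calM$ and the generators of Lemma~\reff{L:Matrix_Generators} and Theorem~\reff{T:InvertSubs_Generators} are available) and that the $\brac{\pi,\sigma}$-conjugate realizes precisely the intercept-$0$ sequence of the correct slope. Alternatively one can bypass the intercept-$0$ reduction and apply Lemma~\reff{L:PrimInvertSubs_Decomp} directly: after conjugation $s$ would have Perron slope $\gamma=[0;4,\overline 1]$ of the form $[0;n_1,\overline{n_2,\dots,n_{k-1},n_k+n_1}]$, forcing $n_1=4$ and hence a final period term $n_k+n_1\ge 4$, which can never equal the required value $1$ --- the same contradiction.
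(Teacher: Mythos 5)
The paper cites this result to Brown (1991) and does not reproduce a proof, so there is no in-paper argument to compare against; I will just assess the attempt on its own terms. The arithmetic core is right: $\alpha=\beta/(1+\beta)$ gives $\beta=\alpha/(1-\alpha)=[0;4,\overline{1}]$, and the uniqueness of the continued-fraction expansion forces $b_0=4$, $b_n=1$ in any representation $[0;b_0,\overline{b_1,\dots,b_n}]$, so the criterion of Theorem~\reff{T:SturmianInvariant} fails.

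The first transfer argument, however, has a gap at the point where Proposition~\reff{P:ZeroIntercept} is invoked: after manufacturing $s'\in\brac{\pi,\sigma}$ with $M_{s'}=M_s$ and concluding via Remark~\reff{R:LocIso} that $s$ and $s'$ are conjugate, you appeal to ``the fixed point of $s'$.'' But conjugacy of the substitutions does not by itself guarantee that $s'$ has a one-sided fixed point in $\calA^\N$ (compare $\pi$, which has none), and both Remark~\reff{R:LocIso} and Proposition~\reff{P:ZeroIntercept} explicitly presuppose that a fixed point exists. The fix is routine — the first-letter map of $s'$ on $\set{\bfzero,\bfone}$ either has a fixed letter or is the transposition, so $(s')^2$ always admits a fixed point, and replacing $(s,s')$ by $(s^2,(s')^2)$ preserves non-triviality, the matrix, and membership in $\brac{\pi,\sigma}$ — but as written the step is unjustified. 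A second, smaller issue: the claim that a non-trivial substitution fixing an aperiodic Sturmian sequence is automatically primitive and invertible (hence $M_s\in\calM$) is correct but load-bearing; it should be tied to the classification of Sturmian morphisms rather than to ``the structure theory of Section~\reff{S:Substitution},'' which states the generators but not this closure property.

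Your alternative route — read the Perron slope $\beta=[0;4,\overline{1}]$ directly off $M_s$, bring $s$ into the canonical form $\delta^{(n_1)}\circ\pi\circ\cdots\circ\pi\circ\delta^{(n_k)}$ via Remark~\reff{R:SubstitutionFactorization}, and apply Lemma~\reff{L:PrimInvertSubs_Decomp} to force $n_1=4$ while the periodic tail demands $n_k+n_1=1$ — bypasses the fixed-point-of-$s'$ issue altogether and is the cleaner argument. I would promote it to be the main proof and drop the $C_\beta$ detour, keeping only the frequency computation that identifies the Perron slope.
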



\section{Trace Maps}\label{S:TraceMaps}
\subsection{Trace Maps Corresponding to Substitution Hamiltonians}\label{S:SubstitutionTrace}
Suppose a substitution $s$ has a fixed point $u$ and further suppose $v$ is in the dynamical hull $\Omega_u$ of this substitution sequence as described in Section~\reff{S:Sturmian}. We define a substitution Hamiltonian $H$ with \defword{coupling constant} $\lambda>0$ to be the following bounded self-adjoint operator on $\ell^2(\Z)$,
\begin{equation}\label{E:SubstitutionOperator}
(H_\lambda \psi)(n)=\psi(n-1)+\psi(n+1)+\lambda \cdot v(n) \cdot \psi(n).
\end{equation}
Note that the operator in \eqref{E:SubstitutionOperator} depends on the substitution $s$, but we omit $s$ from the notation as the substitution map remains fixed. Having fixed $s$, $H$ is independent of the choice of $v$ due to the minimality of $(\Omega_u, S)$ from \eqref{E:DynamicalHull}, see for example \cite{Dam00}.

A standard tool for studying the spectra of such operators is the \defword{transfer matrix} $A$, which we derive from the formal eigenvalue equation
\begin{equation}\label{E:EigenSubstitution}
\psi(n+1)+\psi(n-1)+\lambda \cdot v(n)\cdot\psi(n)=E\cdot\psi(n).
\end{equation}
Observe that $\psi$ satisfies \eqref{E:EigenSubstitution} if and only if the following holds:
\begin{equation*}
\overbrace{\begin{pmatrix}
E-\lambda \cdot v(n) & -1\\
1& 0
\end{pmatrix}}^{A(\lambda \cdot v(n),E)}
\begin{pmatrix}
\psi(n)\\
\psi(n-1)
\end{pmatrix}
=
\begin{pmatrix}
\psi(n+1)\\
\psi(n)
\end{pmatrix}
\end{equation*}
Furthermore, we must have:
\begin{equation*}
A(\lambda \cdot v(n-1),E) \circ ... \circ A(\lambda \cdot v(0),E)
\left(\begin{array}{c}
\psi(0)\\
\psi(-1)
\end{array}\right)
=
\left(\begin{array}{c}
\psi(n)\\
\psi(n-1)
\end{array}\right).
\end{equation*}

\begin{definition}\label{D:CocycleSubs}
Let $\calA=\set{\bfzero,\bfone}$. Formally, define a map $A(\lambda,E):\mathcal{A} \to \SL(2, \mathbb{R})$
$$A(\lambda,E)(\bfa) := A(\lambda \cdot \bfa, E) = 
\left(\begin{array}{cc}
E-\lambda \cdot \bfa & -1\\
1 & 0
\end{array} \right) \text{ for } \bfa \in \calA
$$
and extend this to a mapping on $\mathcal{A}^*$ by
$$A(\lambda,E)(\bfa_1...\bfa_n)=A(\lambda,E)(\bfa_n) \cdot ... \cdot A(\lambda,E)(\bfa_1).$$
The induced map $$(S, A(\lambda,E)): \Omega_u \times \R^2 \to \Omega_u \times \R^2 \text{ given by}$$
$$\left(v, \left(\begin{array}{c}
\psi(0)\\
\psi(-1)
\end{array}\right)\right) \mapsto \left(Sv, A(\lambda,E)(v(0))
\left(\begin{array}{c}
\psi(0)\\
\psi(-1)
\end{array}\right)
\right)$$
form a family of $\SL(2, \R)$ cocycles associated with $H_\lambda$. These are referred to in the literature as \defword{Schr\"odingier cocycles}. 
\end{definition}

\begin{definition}\label{D:TraceMapSubs}
Define
\begin{equation*}x_k(E)=\frac{1}{2} \tr (A(\lambda,E)(s^k(\bfzero\bfone))),\end{equation*}
\begin{equation*}y_k(E)=\frac{1}{2} \tr (A(\lambda,E)(s^k(\bfone))),\end{equation*}
\begin{equation*}z_k(E)=\frac{1}{2} \tr (A(\lambda,E)(s^k(\bfzero))).\end{equation*}
with the convention that $s^0$ is the identity substitution. We leave the dependence on $\lambda$ implicit for now. For any invertible substitution, one may derive an invertible polynomial mapping $T_s: \mathbb{R}^3 \to \mathbb{R}^3$ called the \defword{trace map} such that
$$T_s: \R^3 \to \R^3, \quad
\begin{pmatrix} x_k(E)\\y_k(E)\\z_k(E) \end{pmatrix} \mapsto
\begin{pmatrix} x_{k+1}(E)\\y_{k+1}(E)\\z_{k+1}(E) \end{pmatrix}.$$
\end{definition}
For more background on a trace map associated with a substitution, see \cite{Dam00}.


\subsection{Trace Maps Corresponding to Sturmian Hamiltonians}\label{S:SturmianTrace}

Proceeding as before, let $\lambda>0$, $\alpha \in \T^1$ irrational, and $\omega \in \T^1$. We consider the bounded self-adjoint operator on $\ell^2(\Z)$ defined at the beginning of this paper,
\begin{equation*}\tag{\reff{E:SchrodingerOperator}}
(H_{\lambda, \alpha, \omega} \psi)(n)=\psi(n+1)+\psi(n-1)+\lambda \cdot R_{\alpha, \omega}(n) \cdot\psi(n),\quad \psi \in \ell^2(\Z),
\end{equation*}
We say this is the \defword{Sturmian Hamiltonian} with \defword{potential} given by $R_{\alpha, \omega}$.
See \cite{BIST89} for background on the trace map associated with an irrational rotation. As was previously mentioned in Section~{\reff{S:Introduction}}, we will denote its spectrum by $\Sigma_{\lambda, \alpha}$.
Let 
\begin{equation}\label{E:ContFrac}
\alpha=[0; a_1, a_2, ... ]
\end{equation}
be the continued fraction expansion of $\alpha$ and let $p_n/q_n=[0; a_1, ... , a_n]$ be the $n\th$ rational approximant to $\alpha$. It is a standard result in continued fractions (see, for example, \cite{Khi97}) that
\begin{equation}\label{E:RationalApproximantP}
p_{k+1}=a_{k+1}p_k + p_{k-1},\quad p_1 = 1,\quad  q_0 = 0
\end{equation}
\begin{equation}\label{E:RationalApproximantQ}
q_{k+1}=a_{k+1}q_k + q_{k-1},\quad q_1 = a_1,\quad  q_0 = 1
\end{equation}

\begin{definition}\label{D:TraceMap}
For $\lambda>0$, $\alpha \in \T^1$ irrational, and $k \geq 0$, define
\begin{equation*}x_k (E) = \frac{1}{2} \tr (A_k(E)A_{k-1}(E)),\end{equation*}
\begin{equation*}y_k (E) = \frac{1}{2} \tr (A_k(E)),\end{equation*}
\begin{equation*}z_k (E) = \frac{1}{2} \tr (A_{k-1}(E)).\end{equation*}
where in keeping of the notation of Definition \reff{D:CocycleSubs}
$$\begin{array}{RCL}
A_k(E) &=&	\left\{ \begin{array}{ll}
			A(\lambda, E)(R_\alpha(q_k)) \cdot ... \cdot A(\lambda, E)(R_\alpha (1)) & k \geq 1\\
			\begin{pmatrix}
			E & -1\\
			1 & 0
			\end{pmatrix}, & k = 0\\
			\begin{pmatrix}
			1 & -\lambda\\
			1 & 0
			\end{pmatrix}, & k=-1.
			\end{array} \right.
\end{array}$$
In direct analogy to Definition~\reff{D:TraceMapSubs}, let us to define the \defword{trace map}
$$T_\alpha: \R^3 \to \R^3, \quad
\begin{pmatrix} x_k(E)\\y_k(E)\\z_k(E) \end{pmatrix} \mapsto
\begin{pmatrix} x_{k+1}(E)\\y_{k+1}(E)\\z_{k+1}(E) \end{pmatrix}.$$
\end{definition}

\begin{remark}\label{R:TraceMap}
Definition~\reff{D:TraceMap} is a direct generalization of Definition \reff{D:CocycleSubs}. Note that Equation (\reff{E:RationalApproximantQ}) is reflected in the recursion $A_{k+1}(E)=A_{k-1}(E)A_k(E)^{a_{k+1}}$. We leave the dependence of $x_k(E)$, $y_k(E)$, $z_k(E)$, and $A_k(E)$ on $\lambda$ and $\alpha$ implicit. We remark also that $T_\alpha$ is formally not a mapping as each iteration is dependent on $k$, but omit this from the notation. 
\end{remark}

We state and prove the following well-known lemma for the reader's convenience.

\begin{lemma}\label{L:Chebyshev}
For $A \in \SL(2,\R)$ and $a \in \N$,
\begin{equation}\label{E:ChebyshevBaseCase} A^2=\tr(A)A-I,\end{equation}
\begin{equation*} A^a = U_{a-1}(x)\cdot A - U_{a-2}(x) \cdot I, \end{equation*}
where $x=\frac{1}{2}\tr(A)$ and $U_a(x)$ are Chebyshev polynomials of the second kind given by the following recursive relationship:
$$\begin{array}{RCL}
U_{-1}(x)&=& 0\\
U_{0}(x)&=& 1\\
U_{a+1}(x)&=& U_a(x) \cdot 2x - U_{a-1}(x).
\end{array}$$
\end{lemma}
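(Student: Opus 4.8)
The plan is to prove both identities by induction on the exponent $a$, using the Cayley--Hamilton theorem as the base case. First I would establish \eqref{E:ChebyshevBaseCase}: since $A \in \SL(2,\R)$, its characteristic polynomial is $t^2 - \tr(A)\,t + \det(A) = t^2 - \tr(A)\,t + 1$, so by Cayley--Hamilton $A^2 - \tr(A)A + I = 0$, which rearranges to $A^2 = \tr(A)A - I$. Writing $x = \tfrac12 \tr(A)$, this reads $A^2 = 2x\,A - I$, which is precisely the claimed formula $A^a = U_{a-1}(x)A - U_{a-2}(x)I$ for $a = 2$, since $U_1(x) = 2x$ and $U_0(x) = 1$. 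For the base cases $a = 0$ and $a = 1$ one checks directly that $U_{-1}(x) = 0$, $U_0(x) = 1$, $U_{-2}(x)$ (extending the recursion backward, $U_{-2}(x) = -1$) give $A^0 = I$ and $A^1 = A$ respectively; alternatively one simply starts the induction at $a = 1$ with $a = 2$ as the first genuine step.

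For the inductive step, assume $A^a = U_{a-1}(x)A - U_{a-2}(x)I$ holds for some $a \geq 2$ (and that $A^{a-1} = U_{a-2}(x)A - U_{a-3}(x)I$ holds as well). Then
\begin{align*}
A^{a+1} &= A \cdot A^a = A\bigl(U_{a-1}(x)A - U_{a-2}(x)I\bigr) = U_{a-1}(x)A^2 - U_{a-2}(x)A\\
&= U_{a-1}(x)\bigl(2x\,A - I\bigr) - U_{a-2}(x)A = \bigl(2x\,U_{a-1}(x) - U_{a-2}(x)\bigr)A - U_{a-1}(x)I.
\end{align*}
By the defining recursion $U_a(x) = 2x\,U_{a-1}(x) - U_{a-2}(x)$, the coefficient of $A$ is exactly $U_a(x)$, and the coefficient of $I$ is $-U_{a-1}(x) = -U_{(a+1)-2}(x)$, completing the induction.

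Honestly, there is no real obstacle here: the only thing to be careful about is bookkeeping the index shifts in the Chebyshev recursion and making sure the base cases ($a = 0, 1, 2$) are consistent with the convention $U_{-1} = 0$, $U_0 = 1$. One could alternatively invoke the diagonalization of $A$ (when $\tr(A) \neq \pm 2$) and the classical closed form $U_{a-1}(\cos\theta) = \sin(a\theta)/\sin\theta$, then extend to the parabolic case by continuity, but the Cayley--Hamilton induction is cleaner and avoids case distinctions, so that is the route I would take.
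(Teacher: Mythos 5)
Your proposal is correct and is essentially the same argument as the paper's: Cayley--Hamilton (using $\det A = 1$) gives the base case $A^2 = \tr(A)A - I = 2xA - I$, and then the inductive step multiplies by $A$, substitutes $A^2 = 2xA - I$, and collects terms to recover the Chebyshev recursion. The extra care you take with the base cases $a = 0, 1$ and the aside about diagonalization are fine but not needed; otherwise this matches the paper's proof line for line.
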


\begin{proof}
Since $A$ is unimodular, Equation (\reff{E:ChebyshevBaseCase}) follows by a direct application of the Cayley-Hamilton theorem. Proceeding by induction, we have that
$$\begin{array}{RCL}
A^{a+1} &=& A \cdot A^a\\
&=& A\left(U_{a-1}\left(\frac{1}{2}\tr(A)\right)\cdot A - U_{a-2}\left(\frac{1}{2}\tr(A)\right) \cdot I\right)\\
&=& U_{a-1}\left(\frac{1}{2}\tr(A)\right)\cdot A^2 - U_{a-2}\left(\frac{1}{2}\tr(A)\right) \cdot A\\
&=& U_{a-1}\left(\frac{1}{2}\tr(A)\right)\left(2\cdot\frac{1}{2}\tr(A)A-I\right) - U_{a-2}\left(\frac{1}{2}\tr(A)\right) \cdot A\\
&=& \left(U_{a-1}\left(\frac{1}{2}\tr(A)\right)\left(2\cdot\frac{1}{2}\tr(A)\right) - U_{a-2}\left(\frac{1}{2}\tr(A)\right)\right) \cdot A - U_{a-1}\left(\frac{1}{2}\tr(A)\right) \cdot I\\
&=& U_{n}(x)\cdot A - U_{a-1}(x) \cdot I
\end{array}$$
\end{proof}

\begin{definition}\label{D:T_a}
We define, for $a \in \N$,
$$ T_a: \mathbb{R}^3 \to \mathbb{R}^3, \quad 
\begin{pmatrix} x\\ y\\ z \end{pmatrix} \to
\begin{pmatrix}x U_a (y) - z U_{a-1}(y)\\xU_{a-1}(y) - z U_{a-2}(y) \\ y \end{pmatrix}.$$
\end{definition}

\begin{definition}\label{D:UP}
We also define auxiliary maps
$$U: \mathbb{R}^3 \to \mathbb{R}^3, \quad 
\begin{pmatrix} x\\ y\\ z \end{pmatrix} \to
\begin{pmatrix} 2xz- y \\ x \\ z \end{pmatrix},$$
$$P: \mathbb{R}^3 \to \mathbb{R}^3, \quad 
\begin{pmatrix} x\\ y\\ z \end{pmatrix} \to
\begin{pmatrix}x \\ z \\ y \end{pmatrix}.$$
\end{definition}

\begin{lemma}\label{L:TraceMap_Generators}
$UP=T_1$ and $T_a=UT_{a-1}=U^{a-1}T_1=U^aP$.
\end{lemma}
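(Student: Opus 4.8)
The plan is to verify the two claimed identities $UP = T_1$ and $T_a = UT_{a-1} = U^{a-1}T_1 = U^aP$ directly from the explicit formulas in Definitions~\reff{D:T_a} and \reff{D:UP}, using the recursion for the Chebyshev polynomials $U_a$ established in Lemma~\reff{L:Chebyshev}. The key point is that everything here is an explicit polynomial map on $\R^3$, so the proof is a finite bookkeeping computation once one identifies the right Chebyshev identities; there is no conceptual obstacle, only the need to organize the induction cleanly.

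First I would dispense with $UP = T_1$. Applying $P$ to $(x,y,z)^\top$ gives $(x,z,y)^\top$, and then applying $U$ gives $(2xy - z,\, x,\, y)^\top$. On the other hand $T_1(x,y,z)^\top = (xU_1(y) - zU_0(y),\, xU_0(y) - zU_{-1}(y),\, y)^\top$; since $U_{-1} = 0$, $U_0 = 1$, and $U_1(y) = 2y$, this equals $(2xy - z,\, x,\, y)^\top$. So the two agree. This also serves as the base case $a=1$ of the identity $T_a = U^aP$ (for $a=1$, $U^1 P = UP = T_1$).

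Next I would prove $T_a = U T_{a-1}$ for $a \geq 2$ by a direct computation using the Chebyshev recursion $U_{a+1}(y) = 2y\,U_a(y) - U_{a-1}(y)$. Write $T_{a-1}(x,y,z)^\top = \bigl(xU_{a-1}(y) - zU_{a-2}(y),\ xU_{a-2}(y) - zU_{a-3}(y),\ y\bigr)^\top =: (p, q, y)^\top$. Applying $U$ yields $(2py - q,\, p,\, y)^\top$. Now
$$2py - q = 2y\bigl(xU_{a-1}(y) - zU_{a-2}(y)\bigr) - \bigl(xU_{a-2}(y) - zU_{a-3}(y)\bigr) = x\bigl(2yU_{a-1}(y) - U_{a-2}(y)\bigr) - z\bigl(2yU_{a-2}(y) - U_{a-3}(y)\bigr),$$
and by the Chebyshev recursion the bracketed expressions are $U_a(y)$ and $U_{a-1}(y)$ respectively, so $2py - q = xU_a(y) - zU_{a-1}(y)$, the first coordinate of $T_a$. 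The second coordinate $p = xU_{a-1}(y) - zU_{a-2}(y)$ is exactly the second coordinate of $T_a$, and the third coordinate is $y$ in both cases. Hence $UT_{a-1} = T_a$. From this, the remaining identities follow by induction: $T_a = UT_{a-1} = U(UT_{a-2}) = \cdots = U^{a-1}T_1$, and combining with $T_1 = UP$ gives $T_a = U^{a-1}(UP) = U^a P$.

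The only place requiring any care is making sure the Chebyshev index shifts line up, in particular handling the small cases where $U_{-1} = 0$ appears (i.e.\ $a = 1$ and the third-coordinate terms), which is why I would treat $a = 1$ separately as the base case and run the inductive step only for $a \geq 2$. I do not anticipate a genuine obstacle here; this lemma is purely a normalization/bookkeeping statement that repackages the trace-map data in terms of the generators $U$ and $P$, and its content is entirely in the Chebyshev recursion already recorded in Lemma~\reff{L:Chebyshev}.
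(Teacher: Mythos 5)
Your proof is correct and takes the same route the paper gestures at: the paper's proof simply says ``this can be verified through straightforward calculation'' (and then sketches an alternative conceptual explanation via the substitution correspondence $U = F_{\sigma\circ\pi}$, $P = F_\pi$), and your proposal is exactly that straightforward calculation carried out in full, organized around the Chebyshev recursion from Lemma~\reff{L:Chebyshev}.
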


\begin{proof}
This can be verified through straightforward calculation. Alternatively, recalling the generators for invertible substitutions, one could note that $U=F_{\sigma \circ \pi}$. U is also $F_{\rho \circ \pi}$, but we prefer to work with $\sigma$ due to Proposition~\reff{P:ZeroIntercept} and $P=F_\pi$ and $s_a=\pi \circ (\sigma \circ \pi)^a$.
\end{proof}

\begin{remark}\label{R:T_a}
Observe that $T_1$ is the trace map for rotation by the golden mean $\frac{\sqrt{5}-1}{2}$. As a substitution sequence, this is obtained by $s_1: \bfzero \mapsto \bfone, \bfone \mapsto \bfone\bfzero$. In general, $T_a$ is the trace map for rotation by the $a\th$ metallic mean $\alpha=[0,\overline{a}]$, or equivalently, the substitution sequence generated by $s_a: \bfzero \mapsto \bfone, \bfone \mapsto \bfone^a\bfzero$. It is clear that these substitutions are primitive and invertible. Thus this is a generalization of the trace map presented in the previous subsection.
\end{remark}

\begin{lemma}\label{L:DT_a}
We have
$$DT_a \begin{pmatrix} x\\y\\z \end{pmatrix} =
\begin{pmatrix}
U_a(y) & xU_a(y)-zU'_{a-1}(y) & -U_{a-1}(y)\\
U_{a-1}(y) & xU_{a-1}(y) - z U'_{a-2}(y) & -U_{a-2}(y)\\
0 & 1 & 0
\end{pmatrix}.$$
In particular
$$DT_a \begin{pmatrix} 1\\1\\1 \end{pmatrix}=
\begin{pmatrix}
a+1 & \frac{(a+1)^3 - (a+1)}{3} - \frac{a^3-a}{3} & -a \\
a & \frac{a^3-a}{3} - \frac{(a-1)^3 - (a-1)}{3} & -a+1 \\
0 & 1 & 0
\end{pmatrix}=
\begin{pmatrix}
a+1 & a^2+a & -a \\
a & a^2-a & -a+1 \\
0 & 1 & 0
\end{pmatrix}
$$
which has eigenvalues
$$\frac{a^2}{2} - \frac{a \sqrt{a^2+4}}{2} + 1, \quad \frac{a^2}{2}+\frac{a\sqrt{a^2+4}}{2} + 1, \quad -1$$
and corresponding eigenvectors
$$\begin{pmatrix}
\frac{a-2}{a} + \frac{(a+2)(\frac{a^2}{2}-\frac{a\sqrt{a^2+4}}{2}+1}{a}\\ \frac{a^2}{2}-\frac{a\sqrt{a^2+4}}{2}+1\\1
\end{pmatrix},\quad
\begin{pmatrix}
\frac{a-2}{a} + \frac{(a+2)(\frac{a^2}{2}+\frac{a\sqrt{a^2+4}}{2}+1}{a}\\ \frac{a^2}{2} + \frac{a\sqrt{a^2+4}}{2}+1\\1
\end{pmatrix}, \quad
\begin{pmatrix}
a \\ -1 \\ 1
\end{pmatrix}.$$
\end{lemma}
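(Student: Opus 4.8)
The plan is to compute the Jacobian $DT_a$ directly from the formula for $T_a$ in Definition \reff{D:T_a}, then specialize to the point $(1,1,1)$, and finally diagonalize the resulting $3\times 3$ integer matrix. First I would differentiate the three component functions of $T_a$. The first component is $xU_a(y) - zU_{a-1}(y)$, so its partials with respect to $x$, $y$, $z$ are $U_a(y)$, $xU_a'(y) - zU_{a-1}'(y)$, and $-U_{a-1}(y)$ respectively; the second component $xU_{a-1}(y) - zU_{a-2}(y)$ gives $U_{a-1}(y)$, $xU_{a-1}'(y) - zU_{a-2}'(y)$, $-U_{a-2}(y)$; and the third component $y$ gives $0$, $1$, $0$. (I note the stated matrix has a typo: the $(1,2)$ entry should read $xU_a'(y) - zU_{a-1}'(y)$ rather than $xU_a(y) - zU_{a-1}'(y)$ — I would silently correct this.) Assembling these rows yields the displayed $DT_a(x,y,z)$.

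Next I would evaluate at $(x,y,z)=(1,1,1)$. This requires the values $U_k(1)$ and $U_k'(1)$. A standard fact about Chebyshev polynomials of the second kind is $U_k(1) = k+1$; this follows by induction from the recursion $U_{k+1}(x) = 2xU_k(x) - U_{k-1}(x)$ in Lemma \reff{L:Chebyshev}, since $U_{-1}(1)=0$, $U_0(1)=1$, and $2(k+1) - k = k+2$. Differentiating the recursion gives $U_{k+1}'(x) = 2U_k(x) + 2xU_k'(x) - U_{k-1}'(x)$, so $U_{k+1}'(1) = 2(k+1) + 2U_k'(1) - U_{k-1}'(1)$; with $U_{-1}'(1)=0$, $U_0'(1)=0$ one checks by induction that $U_k'(1) = \frac{k^3+3k^2+2k}{3} = \frac{(k+1)^3-(k+1)}{3}$ (equivalently $\binom{k+2}{3}$). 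Substituting: the $(1,1)$ entry is $U_a(1)=a+1$; the $(2,1)$ entry is $U_{a-1}(1)=a$; the $(1,3)$ entry is $-U_{a-1}(1)=-a$; the $(2,3)$ entry is $-U_{a-2}(1)=-(a-1)$; the $(1,2)$ entry is $U_a'(1)-U_{a-1}'(1) = \frac{(a+1)^3-(a+1)}{3} - \frac{a^3-a}{3} = a^2+a$ after simplification; and the $(2,2)$ entry is $U_{a-1}'(1)-U_{a-2}'(1) = \frac{a^3-a}{3} - \frac{(a-1)^3-(a-1)}{3} = a^2-a$. This produces the claimed integer matrix.

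Finally I would find its eigenvalues and eigenvectors. The characteristic polynomial of $\begin{pmatrix} a+1 & a^2+a & -a \\ a & a^2-a & -a+1 \\ 0 & 1 & 0 \end{pmatrix}$ can be expanded along the bottom row; one finds it factors as $(\mu+1)(\mu^2 - (a^2+2)\mu + 1)$, so $\mu=-1$ is a root and the other two are $\frac{a^2+2 \pm a\sqrt{a^2+4}}{2} = \frac{a^2}{2} \pm \frac{a\sqrt{a^2+4}}{2} + 1$, matching the statement; note the product of the latter two is $1$, consistent with the trace-map structure. For each eigenvalue I would solve $(DT_a|_{(1,1,1)} - \mu I)v = 0$: the third row forces $v_2 = \mu v_3$ (take $v_3=1$, so $v_2=\mu$), then the second row gives $a v_1 + (a^2-a)\mu - (a-1) = \mu^2$, i.e. $v_1 = \frac{\mu^2 - (a^2-a)\mu + (a-1)}{a}$, which one rearranges into the displayed form $\frac{a-2}{a} + \frac{(a+2)\mu}{a}$ using $\mu^2 = (a^2+2)\mu - 1$ for the two "metallic" eigenvalues; for $\mu=-1$ this gives $v_1 = \frac{1 + (a^2-a) + (a-1)}{a} = a$, yielding $(a,-1,1)^{\mathsf T}$. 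I expect no genuine obstacle here — the whole lemma is a finite computation — so the only "hard" part is bookkeeping: correctly identifying $U_k(1)$ and $U_k'(1)$ and carrying the cubic simplifications through without arithmetic slips.
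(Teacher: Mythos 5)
Your proof is correct and follows the same direct-computation route as the paper's proof, which simply records the values $U_a(1)=a+1$ and $U_a'(1)=\tfrac{(a+1)^3-(a+1)}{3}$ and leaves the eigen\-decomposition as a straightforward computation. You are also right that the displayed $(1,2)$ entry of $DT_a(x,y,z)$ contains a typo and should read $xU_a'(y)-zU_{a-1}'(y)$; this is confirmed by the subsequent evaluation at $(1,1,1)$, which uses $U_a'(1)-U_{a-1}'(1)=a^2+a$.
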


\begin{proof}
The form of $DT_a(x,y,z)$ follows directly from the definition of $T_a$. To compute $DT_a(1,1,1)$, it is useful to note that $U_a(1)=a+1$ and $U'_a(1)=\frac{(a+1)^3-(a+1)}{3}$, which follows directly from the definition of $U_a$. Lastly, the eigenvectors and eigenvalues can be found using a straightforward computation.
\end{proof}

We also state for the reader's convenience that
$$DU(P_1)=\begin{pmatrix} 2 & -1& 2\\1 & 0 & 0\\ 0 & 0 & 1 \end{pmatrix},\qquad
D(UP)(P_1)=\begin{pmatrix}2&2&-1\\1&0&0\\0&1&0\end{pmatrix}.$$



\subsection{Properties of the Trace Map and Previous Results}\label{S:PreviousResults}
\begin{definition}
Let us define the Fricke-Vogt invariant
$$I(x,y,z)=x^2+y^2+z^2-2xyz-1$$
and level surfaces
$$S_\lambda=\left\{(x,y,z) \in \R^3 : I(x,y,z) = \frac{\lambda^2}{4} \right\}.$$
\end{definition}

\begin{lemma}\label{L:InvariantSurfaceEntire}
$S_\lambda$ is invariant under a trace map $T_a$ for $a \in \N$.
\end{lemma}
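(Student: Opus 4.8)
The plan is to reduce the claim for a general $T_a$ to the two generators $U$ and $P$ of Lemma~\reff{L:TraceMap_Generators}, since $T_a = U^a P$. Concretely, I would verify that the Fricke–Vogt invariant $I(x,y,z) = x^2+y^2+z^2-2xyz-1$ is preserved by each of $U$ and $P$ separately; then invariance of the level surface $S_\lambda$ under $T_a$ follows immediately from $I \circ T_a = I \circ U^a \circ P = I$ by composition, and hence $T_a(S_\lambda) \subseteq S_\lambda$ (with equality since $T_a$ is invertible, its inverse being a composition of $U^{-1}$ and $P$, which also preserve $I$).

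The verification for $P$ is trivial: $P$ merely swaps the $y$- and $z$-coordinates, and $I$ is symmetric in all three of $x,y,z$ (both the sum of squares and the product $xyz$ are symmetric), so $I \circ P = I$ on the nose. For $U:(x,y,z)\mapsto(2xz-y,\,x,\,z)$ one computes
$$I(U(x,y,z)) = (2xz-y)^2 + x^2 + z^2 - 2(2xz-y)\cdot x \cdot z - 1.$$
Expanding, $(2xz-y)^2 = 4x^2z^2 - 4xyz + y^2$ and $2(2xz-y)xz = 4x^2z^2 - 2xyz$, so the difference is $4x^2z^2 - 4xyz + y^2 - 4x^2z^2 + 2xyz = y^2 - 2xyz$, and adding back $x^2+z^2-1$ gives exactly $x^2+y^2+z^2-2xyz-1 = I(x,y,z)$. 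Thus $I \circ U = I$ identically on $\R^3$.

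Since these two identities are polynomial identities, they hold everywhere, and in particular $I$ is constant along every orbit of $U$, of $P$, and therefore of any word in $U, P$ and their inverses — in particular along orbits of $T_a = U^a P$. Hence if $(x,y,z) \in S_\lambda$, i.e.\ $I(x,y,z) = \lambda^2/4$, then $I(T_a(x,y,z)) = \lambda^2/4$ as well, so $T_a(x,y,z) \in S_\lambda$; applying the same to $T_a^{-1}$ gives $T_a(S_\lambda) = S_\lambda$. There is no real obstacle here — the only thing to be slightly careful about is to record the symmetry of $I$ and the factorization $T_a = U^a P$ (both already available in the excerpt), after which the statement is a one-line algebraic check; the ``main obstacle,'' such as it is, is simply keeping the expansion of $I\circ U$ organized.
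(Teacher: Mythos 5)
Your proof is correct and follows exactly the same route as the paper: reduce to the generators via $T_a = U^a P$ (Lemma~\reff{L:TraceMap_Generators}) and check directly that $I$ is preserved by $U$ and $P$. You simply spell out the algebra that the paper leaves as a ``direct computation.''
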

\begin{proof}
This is readily checked via direction computation and Lemma~\reff{L:TraceMap_Generators} as $I(x,y,z)$ is preserved by $U$ and $P$.
\end{proof}

\begin{remark}
Here, we state some well-known and easily verified facts about $S_\lambda$: The level sets $I(x,y,z)=I_0$ for $I_0 \in \R$ foliate $\R^3$. The surface $S_0$ is known as the \defword{Cayley cubic} and consists of a central portion with four conic singularities at $P_1=(1,1,1)$, $P_2=(-1,-1,1)$, $P_3=(-1,1,-1)$, and $P_4=(1,-1,-1)$ and four unbounded regions emanating from these singularities. Denote by $\S$ the central portion of $S_0$ inside the cube $\set{|x| \leq 1, |y| \leq 1, |z| \leq 1}$. For $\lambda>0$, $S_\lambda$ is connected, smooth, and non-compact.
\end{remark}

Recall from Definition~\reff{D:TraceMap} and Remark~\reff{R:TraceMap} the trace map $T_\alpha$ of $H_{\lambda, \alpha}$:
$$T_\alpha: \R^3 \to \R^3, \quad
\begin{pmatrix} x_k(E)\\y_k(E)\\z_k(E) \end{pmatrix} \mapsto
\begin{pmatrix} x_{k+1}(E)\\y_{k+1}(E)\\z_{k+1}(E) \end{pmatrix}.$$

\begin{lemma}\label{L:InitialConditions}
Fix $\lambda$ and define a line of initial conditions
\begin{equation}\label{E:InitCondSchrodinger}
\ell_\lambda:=\left\{\left(\frac{E-\lambda}{2}, \frac{E}{2}, 1 \right): E \in \R\right\}.
\end{equation}
\begin{enumerate}
\item\label{L:CompositionDefinition} We claim that $T_\alpha$ is the composition of $T_a$ as follows:
$$\begin{pmatrix}x_k(E)\\y_k(E)\\z_k(E)\end{pmatrix}=T_{a_k} \circ ... \circ T_{a_1} \begin{pmatrix}(E-\lambda)/2\\ E/2\\ 1\end{pmatrix}$$
\item\label{L:InvariantSurface} For $E \in \R$ and $k \in \N$, $(x_k(E), y_k(E), z_k(E))$ lies in the surface $S_\lambda$.
\item\label{L:BoundedOrbit} An energy $E$ belongs to the spectrum $\Sigma_{\lambda, \alpha}$ if and only if the sequence $(x_k(E), y_k(E), z_k(E))$ remains bounded as $k \to +\infty$.
\end{enumerate}
\end{lemma}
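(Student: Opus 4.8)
The three parts are of a different nature, so I would handle them separately. For part~\eqref{L:CompositionDefinition}, the goal is to identify the abstract trace map $T_\alpha$ of Definition~\ref{D:TraceMap} with the concrete composition of the polynomial maps $T_a$ from Definition~\ref{D:T_a}. First I would check the base case: $A_0(E) = \left(\begin{smallmatrix} E & -1 \\ 1 & 0\end{smallmatrix}\right)$ and $A_{-1}(E) = \left(\begin{smallmatrix} 1 & -\lambda \\ 1 & 0\end{smallmatrix}\right)$, so that $y_0 = \tfrac12\tr A_0 = E/2$, $z_0 = \tfrac12 \tr A_{-1} = 1$, and $x_0 = \tfrac12\tr(A_0 A_{-1}) = (E-\lambda)/2$; this gives the initial point $((E-\lambda)/2,\,E/2,\,1)$. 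Then I would run the induction using the recursion $A_{k+1} = A_{k-1}A_k^{a_{k+1}}$ (Remark~\ref{R:TraceMap}), which mirrors the continued-fraction recursion \eqref{E:RationalApproximantQ}. Applying Lemma~\ref{L:Chebyshev} to expand $A_k^{a_{k+1}} = U_{a_{k+1}-1}(y_k)A_k - U_{a_{k+1}-2}(y_k)I$ and taking traces of $A_{k+1}A_k$, $A_{k+1}$, $A_k$, together with the trace identity $\tr(BC)+\tr(BC^{-1}) = \tr B \tr C$ in $\SL(2,\R)$, should yield exactly the formulas
$x_{k+1} = x_k U_{a_{k+1}}(y_k) - z_k U_{a_{k+1}-1}(y_k)$, $y_{k+1} = x_k U_{a_{k+1}-1}(y_k) - z_k U_{a_{k+1}-2}(y_k)$, $z_{k+1} = y_k$, which is precisely $T_{a_{k+1}}$ of Definition~\ref{D:T_a}. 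This is the most computational part but it is routine bookkeeping with Chebyshev polynomials.

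Part~\eqref{L:InvariantSurface} is then immediate: the initial point $((E-\lambda)/2, E/2, 1)$ satisfies $I(x,y,z) = \lambda^2/4$ by a one-line substitution into $I(x,y,z) = x^2+y^2+z^2-2xyz-1$, and by Lemma~\ref{L:InvariantSurfaceEntire} each $T_a$ preserves $S_\lambda$, so by part~\eqref{L:CompositionDefinition} all iterates stay on $S_\lambda$. For part~\eqref{L:BoundedOrbit}, the boundedness characterization of the spectrum, I would appeal to the standard transfer-matrix/trace-map theory for Sturmian (and more generally hierarchical) potentials: an energy $E$ lies in $\Sigma_{\lambda,\alpha}$ if and only if the transfer matrices $A_k(E)$ along the natural hierarchy do not grow, which by the definitions of $x_k, y_k, z_k$ as normalized traces is equivalent to boundedness of $(x_k(E), y_k(E), z_k(E))$. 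The cleanest route is to cite the established equivalence in the Sturmian setting (e.g. \cite{BIST89}, \cite{Dam07}, or the Fibonacci prototype in \cite{Sut87}) rather than reprove the Gordon-type / trace-argument from scratch; combined with part~\eqref{L:CompositionDefinition} this transfers the known result verbatim.

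The main obstacle is purely in part~\eqref{L:CompositionDefinition}: making sure the ordering conventions are consistent. The cocycle of Definition~\ref{D:CocycleSubs} composes transfer matrices in reverse order ($A(\bfa_1\cdots\bfa_n) = A(\bfa_n)\cdots A(\bfa_1)$), the $A_k(E)$ are products over one period of length $q_k$, and the recursion $A_{k+1} = A_{k-1}A_k^{a_{k+1}}$ must be checked to be the correct one given how the rotation word of length $q_{k+1}$ decomposes into $a_{k+1}$ copies of the length-$q_k$ word followed by the length-$q_{k-1}$ word. Once that combinatorial decomposition of Sturmian words along continued-fraction convergents is in hand (it is classical, cf. \cite{Fog02}), the trace computation goes through mechanically. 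I would state the decomposition explicitly as a preliminary observation, then do the induction; everything else is a short verification.
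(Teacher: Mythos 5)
Your proposal is correct and takes essentially the same route as the paper: verify the base case $k=0$ gives $\ell_\lambda$, induct via the recursion $A_{k+1}=A_{k-1}A_k^{a_{k+1}}$ using the Chebyshev expansion of Lemma~\ref{L:Chebyshev} to recover the formula for $T_{a_{k+1}}$, deduce Part~(2) from the invariance of $S_\lambda$ under each $T_a$ (Lemma~\ref{L:InvariantSurfaceEntire}) together with $\ell_\lambda\subset S_\lambda$, and cite the standard Sturmian bounded-orbit characterization for Part~(3). The only small difference is that you invoke the $\SL(2,\R)$ identity $\tr(BC)+\tr(BC^{-1})=\tr B\,\tr C$ in addition to the Chebyshev expansion; this is not actually needed once one has $A_k^{a}=U_{a-1}(y_k)A_k-U_{a-2}(y_k)I$, but it is harmless and correct.
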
 

\begin{proof}
To prove Part~\reff{L:CompositionDefinition}, observe that $\ell_\lambda$ corresponds to $k=0$. So we will achieve our desired result by induction if we show
$$\begin{pmatrix}x_{k+1}(E)\\y_{k+1}(E)\\z_{k+1}(E)\end{pmatrix}=T_{a_{k+1}}\begin{pmatrix}x_k(E)\\y_k(E)\\z_k(E)\end{pmatrix}.$$
The first two coordinates follow from our earlier remark that $$A_{k+1}(E)=A_{k-1}(E)A_k(E)^{a_{k+1}}$$ and $z_{k+1}$ is clearly equal to $y_k$ by definition. Part~\reff{L:InvariantSurface}, follows from the observation that the line $\ell_\lambda$ lies in $S_\lambda$, Part~\reff{L:CompositionDefinition}, Lemma~\reff{L:TraceMap_Generators}, and Lemma~\reff{L:InvariantSurfaceEntire}. Finally, refer to \cite{BIST89} and \cite{Dam00} for Part~\reff{L:BoundedOrbit}.
\end{proof}

\begin{lemma}\label{L:Semi-conjugacy}
Under the semi-conjugacy
\begin{equation}\label{E:SemiConj}
F:\begin{pmatrix}\theta \\ \phi \end{pmatrix} \mapsto \begin{pmatrix} \cos(2 \pi (\theta + \phi)) \\ \cos(2\pi\theta) \\ \cos(2\pi\phi)\end{pmatrix},
\end{equation}
\begin{enumerate}
\item $U|_{\S}$ is a factor of the $\T^2 \to \T^2$ map given by $\begin{pmatrix}1&1\\0&1\end{pmatrix}$.
\item $P|_{\S}$ is a factor of the $\T^2 \to \T^2$ map given by $\begin{pmatrix}0&1\\1&0\end{pmatrix}$.
\item\label{L:Semiconj_Compposition} $T_a|_{\S}$ is a factor of the $\T^2 \to \T^2$ map given by $\begin{pmatrix}a&1\\1&0\end{pmatrix}$.
\end{enumerate}
\end{lemma}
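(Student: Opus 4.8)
The plan is to verify the three stated intertwining relations by direct computation, reducing everything to elementary trigonometric identities, and then to obtain the formula for $T_a$ by composing the relations for $U$ and $P$ via the factorization $T_a = U^aP$ from Lemma~\reff{L:TraceMap_Generators}.

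First I would check that $F$ genuinely maps $\T^2$ \emph{onto} $\S$, so that each relation below exhibits the relevant trace-map piece as a factor. Writing $x=\cos(2\pi(\theta+\phi))$, $y=\cos(2\pi\theta)$, $z=\cos(2\pi\phi)$, the cosine addition formula gives $x - yz = -\sin(2\pi\theta)\sin(2\pi\phi)$; squaring and using $\sin^2 = 1-\cos^2$ yields $x^2+y^2+z^2-2xyz-1 = 0$, so $F(\T^2)\subseteq S_0$, and since $|\cos|\le 1$ the image lies in the cube $\{|x|,|y|,|z|\le 1\}$, hence in $\S$ (and not in the unbounded components of $S_0$). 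Conversely, given $(x,y,z)\in\S$, choose $\theta,\phi\in\T^1$ with $\cos(2\pi\theta)=y$, $\cos(2\pi\phi)=z$; solving $I(x,y,z)=0$ for $x$ gives $x = yz \mp \sqrt{(1-y^2)(1-z^2)}$ (the radicand is nonnegative since $|y|,|z|\le1$), and flipping the signs of $\theta$ and/or $\phi$ realizes whichever of the two values occurs. Since $F$ is continuous, it is the required factor (semi-conjugacy) map.

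Next, for part (1), the toral endomorphism with matrix $\begin{pmatrix}1&1\\0&1\end{pmatrix}$ sends $(\theta,\phi)\mapsto(\theta+\phi,\phi)$, so its $F$-image is $\bigl(\cos(2\pi(\theta+2\phi)),\cos(2\pi(\theta+\phi)),\cos(2\pi\phi)\bigr)$. I would compare this with $U(F(\theta,\phi)) = \bigl(2\cos(2\pi(\theta+\phi))\cos(2\pi\phi)-\cos(2\pi\theta),\ \cos(2\pi(\theta+\phi)),\ \cos(2\pi\phi)\bigr)$, applying the product-to-sum identity $2\cos A\cos B = \cos(A+B)+\cos(A-B)$ to the first coordinate to see the two agree. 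For part (2), the matrix $\begin{pmatrix}0&1\\1&0\end{pmatrix}$ sends $(\theta,\phi)\mapsto(\phi,\theta)$, which leaves $\cos(2\pi(\theta+\phi))$ fixed and swaps $\cos(2\pi\theta)$ with $\cos(2\pi\phi)$ --- exactly the action of $P$. Part (3) then follows formally: by Lemma~\reff{L:TraceMap_Generators}, $T_a = U^aP$, so iterating (1) and then applying (2) shows $T_a|_\S$ is a factor of the toral endomorphism $\begin{pmatrix}1&1\\0&1\end{pmatrix}^a\begin{pmatrix}0&1\\1&0\end{pmatrix} = \begin{pmatrix}1&a\\0&1\end{pmatrix}\begin{pmatrix}0&1\\1&0\end{pmatrix} = \begin{pmatrix}a&1\\1&0\end{pmatrix}$, as claimed.

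I expect the only genuinely delicate point to be the first step: pinning down that $F$ surjects onto precisely $\S$ --- not onto a proper subset, and not onto any part of the unbounded sheets of $S_0$ --- and doing so with enough care that the factorization claims are legitimate. The intertwining computations for $U$, $P$, and $T_a$ are mechanical once the product-to-sum and cosine addition identities are set up, and the reduction of the $T_a$ case to the $U$ and $P$ cases is purely formal given Lemma~\reff{L:TraceMap_Generators}.
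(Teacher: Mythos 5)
Your proof is correct and follows the same approach as the paper: verify the intertwining relations $U\circ F = F\circ\begin{pmatrix}1&1\\0&1\end{pmatrix}$ and $P\circ F = F\circ\begin{pmatrix}0&1\\1&0\end{pmatrix}$ by direct trigonometric computation, then obtain part (3) from the factorization $T_a=U^aP$ of Lemma~\reff{L:TraceMap_Generators} and the matrix product $\begin{pmatrix}1&a\\0&1\end{pmatrix}\begin{pmatrix}0&1\\1&0\end{pmatrix}=\begin{pmatrix}a&1\\1&0\end{pmatrix}$. You additionally spell out that $F$ surjects onto $\S$, which the paper leaves implicit; this is a worthwhile detail but not a divergence in method.
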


\begin{proof}
The following uses the sum-to-angle formula and the fact that cosine is even.

$$\begin{array}{RCL}
U(F(\theta,\phi))&=&U(\cos(2 \pi (\theta + \phi)),\cos(2\pi\theta),\cos(2\pi\phi))\\
&=&(2\cos(2 \pi (\theta + \phi)\cos(2\pi\phi)-\cos(2\pi\theta),\cos(2 \pi (\theta + \phi)), \cos(2\pi\theta))\\
&=&(\cos(2\pi(\theta+2\phi)),\cos(2\pi(\theta+\phi)),\cos(2\pi\phi))\\
&=&F(\theta+\phi, \phi)\\
&=&F\left(\begin{pmatrix}1&1\\0&1\end{pmatrix}\begin{pmatrix}\theta \\ \phi \end{pmatrix}\right)\\
\\
P(F(\theta,\phi))&=&P(\cos(2 \pi (\theta + \phi)),\cos(2\pi\theta),\cos(2\pi\phi))\\
&=&(\cos(2 \pi (\phi + \theta)), \cos(2 \pi \phi), \cos(2 \pi \theta))\\
&=&F(\theta,\phi)\\
&=&F\left(\begin{pmatrix}0&1\\1&0\end{pmatrix}\begin{pmatrix}\theta \\ \phi \end{pmatrix}\right)
\end{array}$$
Part~\reff{L:Semiconj_Compposition} follows from the above and Lemma~\reff{L:TraceMap_Generators}.
\end{proof}

\begin{definition}
Denote this matrix $\begin{pmatrix}a&1\\1&0\end{pmatrix}$ by $M_a$.
\end{definition}

\begin{remark}
It is worthwhile to observe at this point that, motivated by the proof of Lemma~\reff{L:TraceMap_Generators} and Proposition~\reff{P:ZeroIntercept}, that one may directly compute the following three substitution matrices.
$$\sigma \circ \pi: \bfzero \mapsto \bfzero, \bfone \mapsto \bfzero1, \quad
M_{\sigma \circ \pi}=\begin{pmatrix} 1 & 1\\ 0 & 1 \end{pmatrix}$$
$$\pi: \bfzero \mapsto \bfone, \bfone \mapsto \bfzero \quad
M_{\pi}=\begin{pmatrix} 0 &1\\1&0\end{pmatrix}$$
$$(\sigma \circ \pi)^a: \bfzero \mapsto \bfzero, \bfone \mapsto \bfzero^a\bfone, \quad
M_{(\sigma \circ \pi)^n}=\begin{pmatrix} 1 & n\\ 0 & 1 \end{pmatrix}$$
\end{remark}

\section{Hyperbolicity of the Nonwandering Set}\label{S:HyperbolicityNW}

Recall that we are interested in irrational numbers with eventually periodic continued fraction expansions. Let us introduce the convention 
\begin{equation}
\alpha=[0;a_1...a_m,\overline{a_{m+1}...a_{m+n}}],
\end{equation} that is the continued fraction expansion of $\alpha$ has a non-repeating segment of length $m$ and a repeating segment of length $n$. Let us introduce the notation $$T_\alpha^{\text{int}}:=T_{a_m} \circ ... \circ T_{a_1}$$ and $$T_\alpha^{\text{per}}=T_{a_{m+n}} \circ ... \circ T_{a_{m+1}}$$ with the understanding that if $\alpha$ has a periodic continued fraction expansion, then $m=0$ and $T_\alpha^{\text{int}}:=\text{id}$. We will also say that $M_\alpha^\text{int}=M_{a_m} \circ ... \circ M_{a_1}$ and $M_\alpha^\text{per}=M_{a_{m+n}} \circ ... \circ M_{a_{m+1}}$ so that via Lemma~\reff{L:Semi-conjugacy}, $F(M_\alpha^\text{int})=T_\alpha^\text{int}$ and $F(M_\alpha^\text{per})=T_\alpha^\text{per}$.

We are interested in studying the set of points in each level surface $S_\lambda$ with bounded positive semiorbits under the trace map $T_\alpha$, call this set $\B^+_{\lambda, \alpha}$. Using the notation of Part~\reff{L:CompositionDefinition} of Lemma~\reff{L:InitialConditions}, $\B^+_{\lambda, \alpha}$ is the set
$$\set{\begin{pmatrix}x_0(E)\\y_0(E)\\z_0(E)\end{pmatrix}\ :\ \set{T_{a_k} \circ ... \circ T_{a_1} \begin{pmatrix}x_0(E)\\y_0(E)\\z_0(E)\end{pmatrix}\ : \ k \in \N} \text{ is bounded }}$$
Due to Lemma \reff{L:InitialConditions} Part \reff{L:BoundedOrbit}, the spectrum is exactly the set of energies corresponding to points in $\ell_\lambda \cap \B^+_{\lambda, \alpha}$. We will say that a point is \defword{nonwandering} with respect to a diffeomorphism $f:M \to M$ if for any open set $U$ containing $x \in M$, there is a $k>0$ such that $f^k(U) \cap U \neq \emptyset$. Denote the set of nonwandering points of $T_\alpha^\text{per}$ in a fixed level surface $S_\lambda$ by $\Omega_{\lambda, \alpha}^\text{per}$. Denote by $\Omega_{\lambda, \alpha}$ the preimage under $T_\alpha^\text{int}$ of $\Omega_{\lambda, \alpha}^\text{per}$.

An invariant closed set $\Lambda$ of a diffeomorphism $f:M \to M$ is \defword{hyperbolic} if for every $x \in \Lambda$, there is an invariant splitting of the tangent space into stable and unstable subspaces. That is: $T_x(M)=E^u(x) \oplus E^s(x)$, the differential $Df$ contracts vectors from the stable subspace $E^s(x)$ and $Df \inv$ contracts vectors from the unstable subspace $E^u(x)$. 
For $x \in \Lambda$ and $\epsilon>0$, define the local stable set
$$W^s_\epsilon=\{y \in M : d(f^n(x), f^n(y)) \leq \epsilon\ \forall n \in \N\}.$$
The local unstable set $W^u_\epsilon$ is defined similarly for $f\inv$.
Define the global stable set of a point $x$ to be
$$W^s(x)=\bigcup_{n \in \N} f^{-n}(W^s_\epsilon(x))$$
and define the stable set of a hyperbolic set $\Lambda$ to be
$$W^s(\Lambda)=\bigcup_{x \in \Lambda} W^s(x).$$
The definition of the global unstable set of a point $x$ and unstable set of a hyperbolic set $\Lambda$ are defined similarly.

\begin{theorem}[Theorem 1.2 from \cite{Can09}]\label{T:Cantat}
If $\alpha$ has eventually periodic continued fraction expansion, then $T_\alpha^{\text{per}}$ is pseudo-Anosov on $\S$ and for each fixed $\lambda>0$, the dynamics of $T_\alpha^{\text{per}}$ on the set of points whose full orbits are bounded under $T_\alpha^{\text{per}}$, is a locally maximal, compact, transitive hyperbolic set of $T_\alpha^\text{per}$.
\end{theorem}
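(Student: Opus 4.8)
\emph{Plan.} The plan is to reduce the statement about $\S$ to a fact about hyperbolic toral automorphisms, and then to bootstrap from $\lambda=0$ to $\lambda>0$. Iterating Lemma~\reff{L:Semi-conjugacy} one gets $T_\alpha^{\text{per}}\circ F=F\circ M_\alpha^{\text{per}}$, where $M_\alpha^{\text{per}}=M_{a_{m+n}}\cdots M_{a_{m+1}}$ and $F:\T^2\to\S$, $F(\theta,\phi)=(\cos2\pi(\theta+\phi),\cos2\pi\theta,\cos2\pi\phi)$, is the branched double cover identifying $(\theta,\phi)$ with $(-\theta,-\phi)$, whose four branch points are the $2$-torsion points of $\T^2$ and map onto the conic singularities $P_1,\dots,P_4$ of $S_0$. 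Since each $M_a=\begin{pmatrix}a&1\\1&0\end{pmatrix}$ has nonnegative entries, $\det M_a=-1$, and $a\ge1$, any product of two or more of the $M_a$'s is strictly positive (and $M_a^2>0$ in the period-one case), so $M_\alpha^{\text{per}}$ is a primitive nonnegative integer matrix of determinant $\pm1$; by Perron--Frobenius it has eigenvalues $\mu$ and $\pm\mu^{-1}$ with $\mu>1$, i.e.\ it is an Anosov automorphism of $\T^2$. Being linear it commutes with $v\mapsto-v$, hence descends to the orbifold $\T^2/\{\pm1\}$, which is the sphere $\S$ with four orbifold points; the images of the two linear eigenfoliations are transverse measured singular foliations with exactly one $1$-prong at each $P_i$, stretched by $\mu^{\pm1}$. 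This is a (generalized) pseudo-Anosov structure for $T_\alpha^{\text{per}}|_\S$.

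\emph{Hyperbolic set on $S_\lambda$.} First I would record the standard escape lemma for trace maps (cf.\ \cite{BIST89}, \cite{Dam00}): a point of $S_\lambda$ whose forward orbit leaves a large enough ball escapes to infinity, so the set $\B^+_{\lambda,\alpha}$ of bounded forward orbits, and its subset of points with bounded full orbit, is compact and coincides with the nonwandering set $\Omega^{\text{per}}_{\lambda,\alpha}$; this gives compactness and local maximality for free. For small $\lambda>0$ the argument is perturbative: by Lemma~\reff{L:InvariantSurfaceEntire} the surfaces $S_\lambda$ foliate a neighborhood of $\S$ compatibly with $T_\alpha^{\text{per}}$, the singular points $P_i$ lie only on $S_0$, and the pseudo-Anosov map restricted to $\S\setminus\{P_1,\dots,P_4\}$ is a locally maximal hyperbolic set; normal hyperbolicity and structural stability then transport an invariant cone field to $S_\lambda$, producing a locally maximal compact hyperbolic set on which the dynamics is a subshift of finite type, and topological transitivity descends from the minimality of the pseudo-Anosov foliations. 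For \emph{arbitrary} $\lambda>0$ one must argue globally: regard $T_\alpha^{\text{per}}$ as a polynomial automorphism preserving the smooth affine cubic $S_\lambda$, extend it to a positive-entropy automorphism of a rational projective surface compactifying $S_\lambda$ by a triangle of rational curves at infinity, and invoke the H\'enon-type surface-dynamics machinery (invariant closed positive currents $\eta^{\pm}$, unique measure of maximal entropy $\mu=\eta^+\wedge\eta^-$, saddle periodic points dense in $\supp\mu$); bounded orbits lie in $\supp\mu$, the currents give continuous stable/unstable line fields there that limit to the pseudo-Anosov foliations as $\lambda\to0$, and ruling out parabolic behavior (no invariant algebraic curve, no Siegel disc) upgrades this to uniform hyperbolicity, with transitivity coming from mixing of $\mu$.

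\emph{Main obstacle.} The perturbative step is routine once the pseudo-Anosov structure on $\S$ is established; the genuine difficulty is uniform hyperbolicity on $S_\lambda$ for \emph{all} $\lambda>0$ rather than just small $\lambda$, since there is no compact invariant surface interpolating to the singular limit and no soft reason the cone field survives away from the singularities. This is exactly the content of \cite{Can09}, whose analysis of birational and H\'enon-type automorphisms of the relevant rational surfaces I would invoke rather than reprove.
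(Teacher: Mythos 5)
The paper does not prove this statement; it is quoted verbatim as Theorem~1.2 from \cite{Can09} and treated throughout as a black box. There is therefore no in-paper proof to compare your attempt against — the appropriate standard is whether your sketch faithfully reflects Cantat's argument.

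On that standard, your outline is essentially sound and you are candid about exactly where it stops. The ingredients you assemble are the right ones: the semi-conjugacy $F$ to the toral automorphism $M_\alpha^{\text{per}}$ (which is indeed primitive in $\GL(2,\Z)$, hence hyperbolic); the identification of $\S$ with the orbifold $\T^2/\{\pm1\}$ via the branched double cover with branch set the four $2$-torsion points lying over $P_1,\dots,P_4$; the resulting pseudo-Anosov structure with a single $1$-prong at each $P_i$ (the Euler--Poincar\'e count $4\cdot\tfrac12=\chi(S^2)$ checks out); the escape/openness lemma giving compactness and local maximality of the bounded-orbit set on $S_\lambda$; and the perturbative cone-field argument near $\lambda=0$. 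You also correctly flag that the real content of \cite{Can09} is the \emph{global} statement — uniform hyperbolicity for all $\lambda>0$ — via the machinery of positive-entropy automorphisms of rational surfaces (invariant currents, measure of maximal entropy, density of saddles), and you appropriately defer to Cantat there rather than pretending to reprove it. Two small cautions: the paper's Lemma~\reff{L:Bounded=NW} draws the escape estimate from \cite{Rob96} rather than \cite{BIST89}/\cite{Dam00}; and the phrase ``$\S\setminus\{P_1,\dots,P_4\}$ is a locally maximal hyperbolic set'' is loose, since a pseudo-Anosov map is only hyperbolic in the singular flat metric and the singularities are genuinely part of the degenerate limit — the careful statement is that the cone field degenerates only at the $P_i$, which suffices for the perturbation. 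Neither affects the structure of the argument.
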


\begin{remark}\label{R:StableLamination}
The set of all points with bounded positive semiorbits under $T_\alpha^\text{per}$ is exactly $T_\alpha^\text{int}(\B^+_{\lambda,\alpha})$ in a fixed level surface $S_\lambda$ defined in the preceding paragraph. (See \cite{Can09}). 
\end{remark}

\begin{lemma}\label{L:Bounded=NW}
For a fixed $\lambda$ and $\alpha$, the set of points whose full orbits are bounded under $T_\alpha^{\text{per}}$ in a fixed level surface $S_\lambda$ is equal to  $\Omega_{\lambda, \alpha}^\textrm{per}$.
\end{lemma}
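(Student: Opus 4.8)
The plan is to prove the two inclusions separately, showing that a point in $S_\lambda$ has a bounded full $T_\alpha^{\text{per}}$-orbit if and only if it is nonwandering for $T_\alpha^{\text{per}}$ in $S_\lambda$.

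First I would handle the easier direction: every nonwandering point has bounded full orbit. Suppose $p \in S_\lambda$ is nonwandering for $T_\alpha^{\text{per}}$. If the full orbit of $p$ were unbounded, then by the escape behavior of the trace map on $S_\lambda$ — namely, that orbits leaving a sufficiently large compact neighborhood of $\S$ escape to infinity monotonically (as recorded in the discussion of the level surfaces and in \cite{Can09}, \cite{BIST89}, \cite{Dam00}) — there would be an iterate $T_\alpha^{\text{per}\, N}(p)$ lying in the "escaping" region, and a small open neighborhood $U$ of that iterate would satisfy $T_\alpha^{\text{per}\, k}(U) \cap U = \emptyset$ for all $k > 0$; pulling back by the diffeomorphism $T_\alpha^{\text{per}}$ contradicts $p$ being nonwandering. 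Hence the full orbit of $p$ stays in a fixed compact set, i.e. is bounded.

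For the reverse inclusion — every point with bounded full orbit is nonwandering — I would invoke Theorem~\reff{T:Cantat}: the set $\Lambda$ of points in $S_\lambda$ whose full $T_\alpha^{\text{per}}$-orbit is bounded is a locally maximal, compact, transitive hyperbolic set. Transitivity means there is a dense orbit in $\Lambda$; combined with compactness and invariance, this forces every point of $\Lambda$ to be nonwandering \emph{for the restriction} $T_\alpha^{\text{per}}|_\Lambda$, hence a fortiori nonwandering for $T_\alpha^{\text{per}}$ on $S_\lambda$ (any open $U \ni p$ meets $\Lambda$ in a relatively open set, and the dense-orbit/transitivity argument returns to it). Thus $\Lambda \subseteq \Omega_{\lambda,\alpha}^{\text{per}}$.

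**The main obstacle** is making the escape-to-infinity argument in the first direction precise without reproving it: I need to be careful that "unbounded full orbit" really does imply "eventually in a region from which $T_\alpha^{\text{per}}$ moves points away," rather than merely "unbounded." This is exactly the content of the partition-of-$S_\lambda$ and trapping-region analysis underlying \cite{Can09} (and classically \cite{Rob}-style arguments for the Fibonacci trace map in \cite{Cas86},\cite{Sut87}), so I would cite that structure rather than rederive it, noting that the complement of the hyperbolic set $\Lambda$ within $S_\lambda$ consists of points escaping to infinity under forward \emph{or} backward iteration, and a point with bounded \emph{two-sided} orbit is therefore in $\Lambda$. Once that is in hand, both inclusions are short, and the lemma follows.
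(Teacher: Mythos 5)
Your proposal is correct and follows essentially the same route as the paper: both directions are proved separately, using the open-escape-region structure of the trace map (from Roberts and Cantat) to show that a nonwandering point must have a bounded full orbit, and invoking the locally maximal, compact, transitive hyperbolic structure of the bounded-orbit set from Theorem~\reff{T:Cantat} to get the reverse inclusion. One small caution: in the reverse direction, transitivity plus compactness alone do not force every point to be nonwandering in general (a dense orbit on a compactification of $\Z$ under translation is a counterexample), so you do need to lean on the hyperbolicity and local maximality from Cantat's theorem — exactly as the paper does — rather than on transitivity and compactness by themselves.
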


\begin{proof}
It can be extracted from \cite{Rob96} that escape is an open condition, more precisely any sequence of points $\set{(x_n, y_n, z_n)}_{n \in \N}$ diverging to $\infty$ on $S_\lambda$ must enter an open region $E$. Any unbounded orbit of has a subsequence which diverges to infinity, thus this orbit must enter $E$ and escape to $\infty$. Further, since $E$ is open, a neighborhood of $\mathbf{v}_0$ also escapes implying that $\mathbf{v}_0$ is not nonwandering.

In the other direction, the hyperbolicity of the set of points whose full orbits are bounded under $T_\alpha^{\text{per}}$ from \cite{Can09} implies that every point of this set is nonwandering.
\end{proof}

\begin{corollary}\label{C:Cantat}[to Theorem \reff{T:Cantat}] For fixed $\alpha$ with eventually periodic continued fraction expansion and $\lambda>0$,\begin{enumerate}
\item The set $\Omega_{\lambda, \alpha}$ is homeomorphic to a Cantor set.
\item (Theorem 6.5 from \cite{Can09}) Hausdorff dimension $\dim_\text{H} \Sigma_{\lambda, \alpha}$ is an analytic function of $\lambda$ that is strictly smaller than one and strictly greater than zero.
\end{enumerate}
\end{corollary}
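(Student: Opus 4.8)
The plan is to deduce both parts from Cantat's Theorem~\reff{T:Cantat}, Lemma~\reff{L:Bounded=NW}, and the elementary fact that each trace map $T_a=U^aP$ (Lemma~\reff{L:TraceMap_Generators}) is a polynomial \emph{automorphism} of $\R^3$, so that $T_\alpha^{\text{int}}$ and $T_\alpha^{\text{per}}$ are homeomorphisms of $\R^3$ preserving every level surface $S_\lambda$.

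\emph{Part (1).} By Lemma~\reff{L:Bounded=NW}, $\Omega_{\lambda,\alpha}^{\text{per}}$ is precisely the set of points of $S_\lambda$ with bounded full $T_\alpha^{\text{per}}$-orbit, and Theorem~\reff{T:Cantat} identifies this with a locally maximal, compact, transitive hyperbolic set $\Lambda$ of the diffeomorphism $T_\alpha^{\text{per}}\rest S_\lambda$; here I use that $S_\lambda$ is a smooth, connected, non-compact surface for $\lambda>0$. I would then invoke the standard structure theory of hyperbolic sets of surface diffeomorphisms (Section~2 of \cite{DG09a} and \cite{PT93}): a locally maximal, compact, transitive such set that is infinite, is not all of the ambient surface, and contains no invariant closed curve is a Cantor set (equivalently, a dynamically defined Cantor set). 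The ambient surface $S_\lambda$ is non-compact, so $\Lambda\neq S_\lambda$; a hyperbolic set cannot contain an invariant closed curve, since the tangent line field along such a curve would lie in $E^s$ or in $E^u$, forcing $T_\alpha^{\text{per}}$ to strictly shorten, respectively lengthen, a curve of fixed finite length; and $\Lambda$ is infinite because by part~\reff{L:BoundedOrbit} of Lemma~\reff{L:InitialConditions} the bounded set $\B^+_{\lambda,\alpha}$ corresponds to $\Sigma_{\lambda,\alpha}$, which is infinite (indeed of positive Hausdorff dimension, by Theorem~6.5 of \cite{Can09}). Hence $\Omega_{\lambda,\alpha}^{\text{per}}$ is a Cantor set, and since $T_\alpha^{\text{int}}$ is a homeomorphism of $\R^3$ preserving $S_\lambda$, the set $\Omega_{\lambda,\alpha}=(T_\alpha^{\text{int}})\inv(\Omega_{\lambda,\alpha}^{\text{per}})$ is homeomorphic to a Cantor set.

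\emph{Part (2).} The substantive assertions — analyticity of $\lambda\mapsto\dim_\text{H}\Sigma_{\lambda,\alpha}$ and the strict bounds $0<\dim_\text{H}\Sigma_{\lambda,\alpha}<1$ — are exactly Theorem~6.5 of \cite{Can09}, so I would cite it, indicating only how it applies to the spectrum. By part~\reff{L:BoundedOrbit} of Lemma~\reff{L:InitialConditions}, $\Sigma_{\lambda,\alpha}$ is the set of $E$ with $\bigl(\tfrac{E-\lambda}{2},\tfrac{E}{2},1\bigr)\in\B^+_{\lambda,\alpha}$; by Remark~\reff{R:StableLamination}, $T_\alpha^{\text{int}}(\B^+_{\lambda,\alpha})$ is the set of points with bounded forward $T_\alpha^{\text{per}}$-orbit, which, using local maximality together with the openness of escape exploited in the proof of Lemma~\reff{L:Bounded=NW}, is the stable set $W^s(\Omega_{\lambda,\alpha}^{\text{per}})$. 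Thus $\Sigma_{\lambda,\alpha}$ is, up to the analytic diffeomorphism $T_\alpha^{\text{int}}$ applied to $\ell_\lambda$, the trace of the stable lamination of $\Omega_{\lambda,\alpha}^{\text{per}}$ on a smooth curve, whose Hausdorff dimension equals the transverse (stable) dimension of the hyperbolic set; Theorem~6.5 of \cite{Can09} supplies the analyticity and strict inequalities for this dimension.

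\emph{Main obstacle.} The only genuinely non-routine input for Part~(1) beyond Theorem~\reff{T:Cantat} is the surface-dynamics classification upgrading ``locally maximal, compact, transitive, infinite hyperbolic set on a surface'' to ``Cantor set,'' i.e. the exclusion of the Anosov, finite-orbit, and invariant-circle cases sketched above. For Part~(2) the content is essentially imported: analyticity and the bound $\dim_\text{H}\Sigma_{\lambda,\alpha}<1$ are what Theorem~6.5 of \cite{Can09} provides, and the transversality of $T_\alpha^{\text{int}}(\ell_\lambda)$ with the stable lamination for \emph{all} $\lambda>0$ should be attributed to Cantat's analysis rather than re-derived here, since elsewhere in this paper transversality of the line of initial conditions is established independently only in the small-$\lambda$ regime.
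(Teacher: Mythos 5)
The paper does not actually supply a proof of this corollary; it is stated as a direct consequence of Cantat's Theorem~\reff{T:Cantat}, with Part~(2) explicitly attributed to Theorem~6.5 of \cite{Can09}, and the Cantor structure in Part~(1) is likewise imported from \cite{Can09} without argument. Your treatment of Part~(2) as essentially a citation, with a sketch of how the line of initial conditions meets the stable lamination, is therefore consistent with what the paper does.

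Your Part~(1), however, has a genuine gap in the surface-dynamics classification step. You claim that a locally maximal, compact, transitive, infinite hyperbolic set on a surface that is a proper subset and contains no invariant closed curve must be a Cantor set, and you justify exclusion of the non-Cantor alternatives by arguing that a hyperbolic set cannot contain an invariant closed curve. The non-Cantor alternatives for a basic set of saddle type on a surface are not only unions of invariant circles: they include nontrivial hyperbolic attractors and repellers that are locally homeomorphic to a Cantor set times an interval (Plykin-type expanding attractors), and such sets need not contain any invariant closed curve at all. So ``contains no invariant closed curve'' does not rule them out, and your argument as written does not establish that $\Omega_{\lambda,\alpha}^{\text{per}}$ is totally disconnected. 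A correct way to close this gap within the paper's framework is to observe that if $\Omega_{\lambda,\alpha}^{\text{per}}$ were an attractor (or repeller), its stable (resp.\ unstable) set would be open, whence the intersection of $W^s(\Omega_{\lambda,\alpha}^{\text{per}})$ with $T_\alpha^{\text{int}}(\ell_\lambda)$ would contain an interval; but by Lemma~\reff{L:InitialConditions} and Remark~\reff{R:StableLamination} this intersection is (a diffeomorphic copy of) $\Sigma_{\lambda,\alpha}$, which is a zero-measure Cantor set by \cite{BIST89}. Alternatively, and more in the spirit of the paper, one should simply cite \cite{Can09}, where the Cantor structure of the nonwandering set is established directly as part of the pseudo-Anosov analysis.
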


\section{Curve of Periodic Points}\label{S:PeriodicCurve}

In this section, we prove

\begin{proposition}\label{P:CM_Transversal}
The central manifold of $P_1$ with respect to $T_a$ intersects $S_\lambda$ transversally for sufficiently small $\lambda>0$. 
\end{proposition}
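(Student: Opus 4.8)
The plan is to analyze the linearization $DT_a$ at the fixed point $P_1 = (1,1,1)$ and identify the central (center) manifold explicitly as a curve, then show this curve is not tangent to the level surface $S_\lambda$ at its point of intersection near $P_1$. First I would invoke Lemma~\ref{L:DT_a}: the derivative $DT_a(1,1,1)$ has eigenvalues $\mu_- = \tfrac{a^2}{2} - \tfrac{a\sqrt{a^2+4}}{2} + 1 \in (0,1)$, $\mu_+ = \tfrac{a^2}{2} + \tfrac{a\sqrt{a^2+4}}{2} + 1 > 1$, and $-1$. The eigenvalue $-1$ has modulus one, so the center manifold $W^c(P_1)$ is the one-dimensional invariant manifold tangent at $P_1$ to the eigenvector $v_c = (a, -1, 1)$. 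Since $P_1$ is a fixed point lying on $S_0$ (the Cayley cubic) and since each $S_\lambda$ is a smooth surface for $\lambda > 0$, the question of transversality is whether $v_c$ — more precisely, the tangent line to $W^c$ at the point where it meets $S_\lambda$ — fails to lie inside the tangent plane $T S_\lambda$.

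The key computation is to check transversality at $\lambda = 0$ in the right sense and then perturb. At $P_1$ the gradient of the Fricke–Vogt invariant $I(x,y,z) = x^2+y^2+z^2-2xyz-1$ is $\nabla I(1,1,1) = (2 - 2, 2-2, 2-2) = (0,0,0)$, which is exactly why $P_1$ is a conic singularity of $S_0$; so one cannot test transversality naively at $P_1$ itself. Instead I would parametrize $W^c(P_1)$ as a curve $\gamma(t) = P_1 + t\, v_c + O(t^2)$ and compute $I(\gamma(t))$ as a function of $t$. Because $\nabla I(P_1) = 0$, we get $I(\gamma(t)) = \tfrac{1}{2}\,\mathrm{Hess}\,I(P_1)[v_c, v_c]\, t^2 + O(t^3)$, and the relevant fact is that this quadratic form is \emph{nonzero} on $v_c$: a direct evaluation of $\mathrm{Hess}\,I(P_1)$ (whose entries involve the $1 - xy$ type terms) against $(a,-1,1)$ gives a nonzero multiple of $t^2$, so that $I(\gamma(t)) = c\, t^2 + O(t^3)$ with $c \neq 0$. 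This means $W^c(P_1)$ leaves the cone $S_0$ quadratically and, for each small $\lambda > 0$, meets $S_\lambda = \{I = \lambda^2/4\}$ at a parameter value $t(\lambda) \sim \sqrt{\lambda^2/(4c)}$, i.e. $t(\lambda)$ of order $\lambda$; there $\tfrac{d}{dt} I(\gamma(t)) = 2c\, t(\lambda) + O(t^2) \neq 0$, which is precisely the statement that $\gamma$ crosses $S_\lambda$ transversally. One should also confirm the sign of $c$ is consistent with $I$ increasing into the physically relevant region so the intersection is nonempty for $\lambda>0$ small; if $c$ has the wrong sign one restricts to the appropriate branch $t<0$.

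I expect the main obstacle to be two-fold. First, justifying the \emph{existence and smoothness} of the one-dimensional center manifold $W^c(P_1)$ for the polynomial map $T_a$ with a neutral eigenvalue $-1$: the standard center manifold theorem gives a $C^k$ (for any finite $k$) locally invariant manifold, and one must make sure the transversality conclusion only uses finitely many derivatives (it uses the $2$-jet of $\gamma$, so $C^2$ suffices) and is independent of the non-uniqueness of $W^c$ — which it is, since all choices of $W^c$ share the same $1$-jet at $P_1$, hence the same leading quadratic behavior of $I\circ\gamma$. Second, the bookkeeping of the Hessian computation: one must verify $\mathrm{Hess}\,I(P_1)[v_c,v_c] \neq 0$ for every $a \in \N$, which reduces to showing a specific quadratic in $a$ (coming from plugging $(a,-1,1)$ into the fixed Hessian matrix) never vanishes for positive integers $a$; this is a finite routine check but is the crux of the argument, since if it vanished for some $a$ the proposition would genuinely fail for that metallic mean. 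Once these are in hand, transversality of $W^c(P_1)$ with $S_\lambda$ for all sufficiently small $\lambda>0$ follows immediately from the implicit function theorem applied to $t \mapsto I(\gamma(t)) - \lambda^2/4$.
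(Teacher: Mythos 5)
Your proposal is correct, and at its core it is the same argument as the paper's: since $\nabla I(P_1)=0$, the defining function restricted to the center curve starts at second order, $I(\gamma(t)) = J'(v_c)\,t^2 + O(t^3)$ where $J'$ is the quadratic part of the Fricke--Vogt invariant shifted to $P_1$ (i.e.\ $\tfrac12\,\mathrm{Hess}\,I(P_1)$), and everything reduces to showing $J'(v_c)\neq 0$. The computation you defer is indeed routine: $J'(a,-1,1)=a^2+1+1-2(-a+a-1)=a^2+4>0$ for every $a\in\N$. Since this is strictly \emph{positive}, the curve meets $\{I=\lambda^2/4\}$ at two nearby parameters $t\approx\pm\lambda/(2\sqrt{a^2+4})$ and crosses transversally there. (One small correction: if the sign were negative you could not ``restrict to the branch $t<0$'' --- $I\circ\gamma$ would then be negative on \emph{both} sides of $0$ and there would be no intersection with $S_\lambda$ for small $\lambda>0$; positivity is what guarantees the intersection exists, which is why the sign, not just non-vanishing, matters.)

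Where the paper diverges from you is in \emph{how} it establishes $J'(v_c)>0$, and the difference has a purpose. Rather than computing the concrete eigenvector $(a,-1,1)$ and plugging it into the Hessian, the paper proves a structural Lemma (that $DT_a(P_1)$ preserves $J'$), transforms $J'$ linearly into $q=x^2+y^2-z^2$, and then proves an abstract Proposition about any $3\times 3$ matrix preserving $q$ with eigenvalues $\mu,\nu,\pm1$ ($|\mu|>1>|\nu|$): using the associated bilinear form, the stable and unstable eigenvectors lie on the null cone, the central eigenvector is $B$-orthogonal to the unstable one, and hence the central eigenvector lies in the \emph{exterior} of the cone ($q>0$). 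This is then combined with a separate perturbation lemma (transversality to $\{q=\lambda^2\}$ persists when one adds a homogeneous cubic and replaces the line by a tangent curve). The payoff of this more abstract packaging, as the paper's preceding remark makes explicit, is that the same argument applies verbatim to $D\bigl(\prod_{i}T_{a_i}\bigr)(P_1)$ via the chain rule --- which is needed for a general eventually-periodic $\alpha$ --- where your explicit-eigenvector approach would require recomputing and re-checking positivity case by case. So: same underlying geometry, but the paper factors out the positivity as a coordinate-free fact about $q$-preserving linear maps so that it does not depend on knowing $v_c$ explicitly.
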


\begin{remark}
The existence of a curve of periodic points is known from \cite{Can09}, see also \cite{HM07}. The construction in this section uses quadratic forms to obtain properties we use in the proof of Theorem~\reff{T:LinearGapIntro}.
\end{remark}

As both the central manifold and $S_\lambda$ are invariant under $T_a$, this transversal intersection gives us an invariant set of order two on $S_\lambda$ for sufficiently small $\lambda$. This shows that the central manifold is either a fixed curve or a curve of points of period two. To show this, we first take the $I(x,y,z)$ and shift $P_1$ to the origin. This gives us
$$J(x,y,z)=x^2 + y^2 + z^2 -2(xy+xz+yz)-2xyz.$$
Let us first consider the lower order terms of $J$,
$$J'(x,y,z)=x^2+y^2+z^2-2(xy+xz+yz).$$

\begin{remark}
We write the proofs in Sections~\reff{S:PeriodicCurve} and ~\reff{S:LinearGapOpening} for $D(T_a)$ for ease of notation. The results are readily extended to $D(\prod_{i=1}^nT_{a_i})(P_1)$ using the chain rule.
\end{remark}

\begin{lemma}\label{L:QF_Preserved}
$D(T_a)(P_1)$ preserves $J'$, i.e. for $\mathbf{v} \in \R^3$, $J'(\mathbf{v})=J'(D(T_a)(P_1)\mathbf{v})$.
\end{lemma}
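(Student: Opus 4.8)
The plan is to verify directly that the quadratic form $J'$ is invariant under the linear map $L := D(T_a)(P_1)$, by computing the pullback $L^\top G L$ where $G$ is the symmetric matrix representing $J'$, and checking that it equals $G$. Since $J'(x,y,z) = x^2+y^2+z^2-2(xy+xz+yz)$, we have
\[
G = \begin{pmatrix} 1 & -1 & -1 \\ -1 & 1 & -1 \\ -1 & -1 & 1 \end{pmatrix},
\]
and from Lemma~\reff{L:DT_a},
\[
L = D(T_a)(P_1) = \begin{pmatrix} a+1 & a^2+a & -a \\ a & a^2-a & -a+1 \\ 0 & 1 & 0 \end{pmatrix}.
\]
So the entire content of the lemma is the matrix identity $L^\top G L = G$, which is a finite computation.

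First I would set up the computation cleanly: rather than expanding the full $3\times 3$ product by brute force, I would exploit the factorization $T_a = U^a P$ from Lemma~\reff{L:TraceMap_Generators}, which gives $L = D(T_a)(P_1) = (DU)^a (P_1\text{-orbit})\cdot D P$ evaluated along the orbit of $P_1$; since $P_1=(1,1,1)$ is fixed by $U$ and $P$, all the differentials are evaluated at $P_1$, so $L = (DU(P_1))^a \, DP(P_1)$. Using the explicit matrices $DU(P_1) = \begin{pmatrix} 2 & -1 & 2 \\ 1 & 0 & 0 \\ 0 & 0 & 1\end{pmatrix}$ and $DP(P_1) = \begin{pmatrix} 1 & 0 & 0 \\ 0 & 0 & 1 \\ 0 & 1 & 0 \end{pmatrix}$ given in the excerpt, it then suffices to check the two identities $(DU(P_1))^\top G\, (DU(P_1)) = G$ and $(DP(P_1))^\top G\, (DP(P_1)) = G$. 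The second is immediate since $DP(P_1)$ is a permutation matrix swapping the $y$ and $z$ coordinates and $J'$ is symmetric in $y,z$. The first is a single small computation, and the invariance under arbitrary powers $U^a$ then follows automatically, giving $L^\top G L = (DP(P_1))^\top ((DU(P_1))^\top)^a G\, (DU(P_1))^a\, DP(P_1) = G$.

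Alternatively — and this is probably the cleanest route to include as the actual proof — I would invoke Lemma~\reff{L:InvariantSurfaceEntire}: the Fricke–Vogt invariant $I(x,y,z)$ is exactly preserved by $T_a$, hence its second-order Taylor polynomial at the fixed point $P_1$ is preserved by the linearization $D(T_a)(P_1)$. Shifting $P_1$ to the origin turns $I$ into $J$, whose purely quadratic part is precisely $J'$; since a linear map preserves a polynomial's full Taylor expansion degree-by-degree, $D(T_a)(P_1)$ preserves the homogeneous quadratic part $J'$. This reduces the lemma to the observation that linearizing a smooth invariant at a fixed point yields an invariant quadratic form for the derivative — a standard fact — plus the identification that the degree-two part of $J$ (equivalently of $I$ expanded at $P_1$) is $J'$.

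The main obstacle is essentially bookkeeping rather than mathematics: one must be careful that "$D(T_a)(P_1)$ preserves $J'$" means $J'(L\mathbf{v}) = J'(\mathbf{v})$ for all $\mathbf{v}$, i.e. $L^\top G L = G$, and that the higher-order terms of $J$ (the cubic $-2xyz$ shifted, plus the linear-times-quadratic cross terms coming from the shift) do not interfere — this is automatic because preservation of a polynomial under a linear map is graded. If one instead grinds out $L^\top G L$ directly, the only risk is arithmetic error in the entries involving $a^2+a$ and $a^2-a$; organizing via the $U,P$ factorization avoids even that, since then no $a$-dependence appears until the very last step and it is handled by taking a power.
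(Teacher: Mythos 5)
Your first route --- factoring $L = D(T_a)(P_1) = (DU(P_1))^a\,DP(P_1)$ (using that $P_1$ is fixed by both $U$ and $P$) and checking $L^\top G L = G$ by verifying it for $DU(P_1)$ and $DP(P_1)$ separately --- is correct and is essentially the paper's own argument. The only cosmetic difference is that the paper writes $T_a = U^{a-1}(UP)$ and checks $DU(P_1)$ and $D(UP)(P_1)$, whereas you write $T_a = U^a P$ and check $DU(P_1)$ and $DP(P_1)$; either reduction is sound, and your observation that $DP(P_1)$ is handled for free by the $y\leftrightarrow z$ symmetry of $J'$ is a small economy over the paper, which plugs $D(UP)(P_1)$ into $J'$ directly.

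Your second, ``cleanest'' route has a gap as stated. The step ``$I\circ T_a = I$, hence the second-order Taylor polynomial of $I$ at $P_1$ is preserved by $D(T_a)(P_1)$'' is not automatic, and the reason you give --- ``preservation of a polynomial under a linear map is graded'' --- is a non sequitur: $T_a$ is not linear, and $J$ is in fact \emph{not} preserved by the linear map $L$ (for $a=1$, $J(1,1,0)=0$ while $J(L(1,1,0)) = J(4,1,1) = -8$), so there is no linear-invariance statement about $J$ whose grading one could invoke. What actually makes the Taylor argument work is that $\nabla I(P_1) = 0$, equivalently that $J = I(\cdot + P_1) - I(P_1)$ has no linear term. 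Writing $T_a(P_1+v) = P_1 + Lv + Q(v) + O(|v|^3)$ and expanding $J(Lv + Q(v) + \cdots) = J(v)$ degree by degree, the degree-two piece of the left side is $J_2(Lv) + J_1(Q(v))$, and the cross-term $J_1(Q(v))$ dies precisely because $J_1 = 0$; without that the conclusion fails, and indeed the same ``grading'' reasoning applied at degree three would falsely claim that $J_3=-2xyz$ is preserved by $L$, which the computation above shows it is not. You do implicitly record that $J$ has no linear part when you write it out, but the proof needs to \emph{use} it. If you keep this alternative, say explicitly that $P_1$ is a critical point of $I$ (it is a conic singularity of $S_0$), and that the Hessian quadratic form of an invariant at a critical fixed point is preserved by the derivative --- that is the correct standard fact being invoked.
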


\begin{proof}
It suffices to check this fact for $DU(P_1)$ and $D(UP)(P_1)$. Let $\mathbf{v}=(x,y,z)$.

$$\begin{array}{RCL}
&&J'(DU(P_1)\mathbf{v})\\&=& J'(2x-y+2z,x,z)\\
&=&(2x-y+2z)^2+x^2+z^2-2(x(2x-y+2z)+z(2x-y+2z)+xz)\\
&=&5x^2+y^2+5z^2-4xy+8xz-4yz-2(2x^2+2z^2-xy+5xz-yz)\\
&=&x^2+y^2+z^2-2(xy+xz+yz)\\
\\
&&J'(D(UP)(P_1)\mathbf{v})\\
&=& J'(2x+2y-z,x,y)\\
&=&(2x+2y-z)^2+x^2+y^2-2(x(2x+2y-z)+y(2x+2y-z)+xy)\\
&=&5x^2+5y^2+z^2+8xy-4xz-4yz-2(2x^2+2y^2+5xy-xz-yz)\\
&=&x^2+y^2+z^2-2(xy+xz+yz)
\end{array}$$
\end{proof}

We are interested in studying $q(x,y,z)=x^2+y^2-z^2$ since this can be obtained from $J'$ through a linear change of coordinates:
\begin{equation}\label{E:LinearCoC}
x^2+y^2-z^2=J'\left(x+\frac{3}{4}y+\frac{5}{4}z,-\frac{1}{4}y+\frac{1}{4}z,y+z\right).
\end{equation}

\begin{definition}\label{D:NullCone}
Let $q$ be an $n$-ary quadratic form over $\R$. We say that the \defword{null cone} of $q$ is the collection of $\mathbf{v} \in \R^n$ such that $q(\mathbf{v})=0$.
\end{definition}

\begin{proposition}\label{P:NullCone}
Let $q(x,y,z)=x^2+y^2-z^2$ and let $A$ be a $3 \times 3$ matrix that preserves $q$, i.e. for $v \in \R^3$, $q(A\mathbf{v})=q(\mathbf{v})$. Further suppose that $A$ has eigenvalues $|\mu|>1$, $|\nu|<1$ and $\pm1$. We say that $\mathbf{v}$ lies in the interior of the null cone if $q(\mathbf{v})<0$ and in the exterior of the null cone if $q(\mathbf{v})>0$. Any eigenvector corresponding to $\pm 1$, or central vector, lies in the exterior of the null cone of $q$.
\end{proposition}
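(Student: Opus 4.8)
The plan is to exploit the structure of an indefinite quadratic form preserved by a pseudo-orthogonal map with a hyperbolic spectrum. Work in coordinates where $q(x,y,z) = x^2 + y^2 - z^2$, so $q$ has signature $(2,1)$; the preserved form puts $A$ into the group $O(2,1)$ (up to the sign issue coming from the $\pm 1$ eigenvalues). Let $v_\mu, v_\nu$ be eigenvectors for the eigenvalues $\mu$ ($|\mu|>1$) and $\nu$ ($|\nu|<1$), and let $v_0$ be a central eigenvector for the eigenvalue $\pm 1$. The key observation is the $q$-orthogonality relations forced by the invariance of $q$: for eigenvectors $v_a, v_b$ with eigenvalues $a, b$, applying $q(Av_a, Av_b) = q(v_a, v_b)$ (using the associated symmetric bilinear form) gives $ab \cdot q(v_a,v_b) = q(v_a,v_b)$, so $q(v_a, v_b) = 0$ whenever $ab \neq 1$.

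First I would establish $q(v_\mu, v_\mu) = 0$ and $q(v_\nu, v_\nu) = 0$: since $\mu^2 \neq 1$ and $\nu^2 \neq 1$ (as $|\mu| > 1 > |\nu|$), the relation above with $v_a = v_b = v_\mu$ (resp. $v_\nu$) forces these to vanish. So both $v_\mu$ and $v_\nu$ are null vectors. Next, since $\mu \nu$ need not equal $1$ a priori — but in fact for a map in $O(2,1)$ with eigenvalues $\mu, \nu, \pm1$ and determinant $\pm 1$ we get $\mu\nu \cdot (\pm 1) = \pm 1$, hence $\mu\nu = 1$ — the vectors $v_\mu$ and $v_\nu$ span a $q$-nondegenerate plane $\Pi$ on which $q$ has signature $(1,1)$ (two independent null directions spanning a plane where the form is nondegenerate is exactly a hyperbolic plane). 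Then I would argue that the central vector $v_0$ is $q$-orthogonal to both $v_\mu$ and $v_\nu$: for $v_\mu$, the eigenvalue product is $(\pm 1)\mu \neq 1$, so $q(v_0, v_\mu) = 0$; similarly $q(v_0, v_\nu) = 0$. Hence $v_0 \in \Pi^\perp$, the $q$-orthogonal complement of the hyperbolic plane $\Pi$.

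Now the conclusion follows from Sylvester/Witt decomposition: since $q$ has signature $(2,1)$ overall and $\Pi$ carries signature $(1,1)$, the orthogonal complement $\Pi^\perp$ is one-dimensional and carries the remaining signature, which must be $(1,0)$ — i.e. $q$ is positive definite on $\Pi^\perp$. Since $v_0$ is a nonzero vector in $\Pi^\perp$, we get $q(v_0) > 0$, which is exactly the statement that $v_0$ lies in the exterior of the null cone. I would also need the small sanity check that $v_0 \notin \Pi$ (equivalently $v_\mu, v_\nu, v_0$ are linearly independent), which is immediate since they are eigenvectors for distinct eigenvalues.

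The main obstacle is the sign bookkeeping around the $\pm 1$ eigenvalue and the possibility that $\mu, \nu$ or the central eigenvalue are negative: one must confirm $\mu\nu = 1$ (so that $\Pi$ really is a nondegenerate hyperbolic plane rather than degenerate) and that $(\pm 1)\cdot\mu \neq 1$ and $(\pm 1)\cdot \nu \neq 1$ in all sign cases — both of which hold because $|\mu|>1$ and $|\nu|<1$ strictly. Once the orthogonality relations and the signature count are set up correctly, the positivity of $q$ on the central line is forced purely by linear algebra (Witt's theorem on orthogonal complements of nondegenerate subspaces), with no computation specific to the trace map needed. The only place where one uses that $A$ comes from $DT_a$ rather than an arbitrary $O(2,1)$ element is to know that the relevant quadratic form in the original $(x,y,z)$ coordinates is $J'$, which by \eqref{E:LinearCoC} is linearly equivalent to $q$, and that $DT_a(P_1)$ preserves $J'$ by Lemma~\ref{L:QF_Preserved}.
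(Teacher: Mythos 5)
Your proof is correct in substance and takes a genuinely different route from the paper. Both arguments start from the same polarization/orthogonality input (Lemma~\ref{L:QF_Norm} and Lemma~\ref{L:StabUnstabNullCone} in the paper, encoded in your observation that $B(v_a,v_b)=0$ whenever $ab\neq 1$), but then diverge. The paper uses only the single orthogonality $B(v_\mu, v_0)=0$ between the unstable and central eigenvectors, normalizes the null unstable eigenvector to $(1,0,1)$ via a $q$-preserving rotation, and finishes with a direct coordinate computation: $w_1=w_3$ forces $q(\mathbf{w})=w_2^2$, with $w_2\neq 0$ because $\mathbf{w}$ is not proportional to $\mathbf{v}$. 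You instead use both orthogonalities (to $v_\mu$ and to $v_\nu$), identify $\Pi=\mathrm{span}(v_\mu,v_\nu)$ as a hyperbolic plane, and appeal to the Witt/Sylvester decomposition of the signature-$(2,1)$ form: $\Pi^\perp$ is one-dimensional of signature $(1,0)$ and contains $v_0$. This is cleaner and coordinate-free, and it makes the ``why'' transparent — the stable and unstable lines necessarily span the Lorentzian plane, leaving only the spacelike complement for the central direction.

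One small imprecision: your justification that $\Pi$ is nondegenerate goes via $\mu\nu=1$, but (a) the determinant calculation only yields $\mu\nu\cdot(\pm 1)=\pm 1$, hence $\mu\nu=\pm 1$, not $+1$ outright — the correct argument is that $A^{-1}=G^{-1}A^{T}G$ forces the spectrum to be closed under inversion, so $\{\mu,\nu\}=\{\mu,\mu^{-1}\}$; and (b) even granted $\mu\nu=1$, that only means the eigenvalue relation does not \emph{force} $B(v_\mu,v_\nu)=0$ — it does not by itself show $B(v_\mu,v_\nu)\neq 0$. The cleanest way to close this gap, without invoking $\mu\nu=1$ at all, is to observe that a form of signature $(2,1)$ admits no two-dimensional totally isotropic subspace, so two independent null vectors cannot be $q$-orthogonal; alternatively, note that $v_0\in\Pi^\perp$, $v_0\notin\Pi$ (linear independence of eigenvectors), and $\dim\Pi^\perp=1$ already force $\Pi\cap\Pi^\perp=\{0\}$, i.e.\ nondegeneracy of $\Pi$. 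With either fix your argument is complete.
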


To prove Proposition~\reff{P:NullCone}, we will first establish two preliminary lemmas. Let us first remind the reader of the polarization identity (See, for example \cite{Ger08}).

\begin{lemma}\label{L:QF_Norm}
Let $q$ be an $n$-ary quadratic form over $\R$. We may associate to $q$ a symmetric bilinear form $B$ on $\R^n$ such that for $\mathbf{v} \in \R^n$, $q(\mathbf{v})=B(\mathbf{v},\mathbf{v})$. Then, for $\mathbf{v},\mathbf{w} \in \R^n$, we have 
$$B(\mathbf{v},\mathbf{w})=\frac{B(\mathbf{v}+\mathbf{w},\mathbf{v}+\mathbf{w})-B(\mathbf{v},\mathbf{v})-B(\mathbf{w},\mathbf{w})}{2}.$$
\end{lemma}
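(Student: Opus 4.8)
The plan is to prove the two assertions of Lemma~\ref{L:QF_Norm} in turn: first the existence of a symmetric bilinear form $B$ with $q(\mathbf{v}) = B(\mathbf{v},\mathbf{v})$, and then the displayed polarization identity itself.

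For existence, I would pass to coordinates. Writing $q$ as $\sum_{i \leq j} c_{ij} x_i x_j$, one sets $M_{ii} = c_{ii}$ and $M_{ij} = M_{ji} = c_{ij}/2$ for $i < j$; this matrix $M$ is symmetric and satisfies $q(\mathbf{v}) = \mathbf{v}^{\mathsf T} M \mathbf{v}$, so $B(\mathbf{v},\mathbf{w}) := \mathbf{v}^{\mathsf T} M \mathbf{w}$ is a symmetric bilinear form with $B(\mathbf{v},\mathbf{v}) = q(\mathbf{v})$. Equivalently, given any matrix $M_0$ with $q(\mathbf{v}) = \mathbf{v}^{\mathsf T} M_0 \mathbf{v}$, one may replace it by $\tfrac{1}{2}(M_0 + M_0^{\mathsf T})$. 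In the application of interest, $q(x,y,z) = x^2 + y^2 - z^2$, this simply gives $M = \mathrm{diag}(1,1,-1)$ and $B\big((x,y,z),(x',y',z')\big) = xx' + yy' - zz'$.

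For the identity, I would expand using bilinearity and then symmetry of $B$:
$$B(\mathbf{v}+\mathbf{w},\mathbf{v}+\mathbf{w}) = B(\mathbf{v},\mathbf{v}) + B(\mathbf{v},\mathbf{w}) + B(\mathbf{w},\mathbf{v}) + B(\mathbf{w},\mathbf{w}) = B(\mathbf{v},\mathbf{v}) + 2\,B(\mathbf{v},\mathbf{w}) + B(\mathbf{w},\mathbf{w}),$$
and then solve for $B(\mathbf{v},\mathbf{w})$ to obtain the claimed formula.

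I do not expect any genuine obstacle: this is the classical polarization identity, and the proof reduces to a one-line computation once the symmetric bilinear form is exhibited. The only point requiring any care is that we are working over $\mathbb{R}$, so $\mathrm{char} \neq 2$ and the division by $2$ — both in defining the symmetric representative $M$ and in the final formula — is legitimate.
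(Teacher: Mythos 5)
Your proof is correct. Note that the paper does not actually supply a proof of this lemma at all; it is presented as a recalled standard fact with a citation to a reference on quadratic forms, so there is no ``paper's own proof'' to compare against. Your argument is the classical one: construct the symmetric matrix $M = \tfrac{1}{2}(M_0 + M_0^{\mathsf T})$ (legitimate since $\operatorname{char}\mathbb{R}\neq 2$), then expand $B(\mathbf{v}+\mathbf{w},\mathbf{v}+\mathbf{w})$ by bilinearity and symmetry and solve for $B(\mathbf{v},\mathbf{w})$. One small observation: the lemma as stated already posits the existence of $B$ (``We may associate to $q$ a symmetric bilinear form $B$...''), so strictly only the displayed identity requires proof, but including the existence argument is harmless and in fact clarifying, especially since you also record the explicit $B$ for the form $q(x,y,z)=x^2+y^2-z^2$ that the paper uses in Proposition~\ref{P:NullCone}.
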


\begin{lemma}\label{L:StabUnstabNullCone}
Suppose $q$ is a ternary quadratic form and $A$ is a $3 \times 3$ matrix that preserves $q$. Further suppose that $A$ has eigenvalues $|\mu|>1$, $|\nu|<1$ and $\pm1$. Any eigenvector corresponding to $\mu$, or unstable vector, (resp. $\nu$, or stable vector) lies on the null cone.
\end{lemma}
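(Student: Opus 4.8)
The plan is to use just two facts: that a quadratic form is homogeneous of degree two, and that $A$ preserves $q$ by hypothesis. Let $\mathbf{v}$ be an eigenvector of $A$ with eigenvalue $\mu$, so $A\mathbf{v}=\mu\mathbf{v}$. Writing $B$ for the symmetric bilinear form associated to $q$ as in Lemma~\reff{L:QF_Norm}, we have $q(c\mathbf{w})=B(c\mathbf{w},c\mathbf{w})=c^2B(\mathbf{w},\mathbf{w})=c^2q(\mathbf{w})$ for every scalar $c$ and vector $\mathbf{w}$. Combining this with invariance,
$$q(\mathbf{v})=q(A\mathbf{v})=q(\mu\mathbf{v})=\mu^2 q(\mathbf{v}),$$
so $(\mu^2-1)q(\mathbf{v})=0$. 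Since $|\mu|>1$ we have $\mu^2\neq 1$, and therefore $q(\mathbf{v})=0$, i.e.\ $\mathbf{v}$ lies on the null cone of $q$. Replacing $\mu$ by $\nu$ and using $|\nu|<1$ (so again $\nu^2\neq 1$) gives the statement for the stable eigenvector by the identical computation.

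There is essentially no obstacle in this lemma; the content is the one-line homogeneity argument above. The only point worth flagging is that the argument deliberately says nothing about eigenvectors for the eigenvalue $\pm 1$: there $\mu^2-1=0$, so the displayed equation is vacuous and gives no constraint on $q(\mathbf{v})$. That borderline case is exactly what Proposition~\reff{P:NullCone} addresses, and it is there — rather than here — that one must invoke the polarization identity of Lemma~\reff{L:QF_Norm} (pairing the central vector against the stable and unstable vectors, which by this lemma lie on the null cone) to pin down the sign of $q$ on the central vector. Note also that for the present lemma one does not even need the full spectral hypothesis on $A$ (diagonalizability, three distinct eigenvalue moduli); only the single eigenpair in question and the invariance $q\circ A=q$ are used.
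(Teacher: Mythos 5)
Your argument is correct and is essentially identical to the paper's: both use $q(\mathbf{v})=q(A\mathbf{v})=\mu^2 q(\mathbf{v})$ (via the associated bilinear form and homogeneity) and the hypothesis $\mu^2\neq 1$ to force $q(\mathbf{v})=0$. Your added remarks about the vacuousness of the computation at eigenvalue $\pm 1$ and the minimality of the hypotheses are accurate but not part of the paper's proof.
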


\begin{proof}
Let $\mathbf{v}$ be an eigenvector corresponding to $\mu$. Then
$$B(\mathbf{v},\mathbf{v})=B(\mu\mathbf{v},\mu\mathbf{v})=\mu^2 \cdot B(\mathbf{v},\mathbf{v}) \Rightarrow q(\mathbf{v})=B(\mathbf{v},\mathbf{v})=0.$$
Repeating the argument for $\nu$ in place of $\mu$ shows that eigenvectors corresponding to $\nu$ must also lie on the null cone.
\end{proof}

\begin{proof}[Proof of Proposition \reff{P:NullCone}.]
Suppose that $\mathbf{v}$ and $\mathbf{w}$ are unstable and central vectors of $A$, respectively, then
$$B(\mathbf{v},\mathbf{w})=B(A\mathbf{v},A\mathbf{w})=B(\mu \mathbf{v},\pm\mathbf{w})=\pm \mu \cdot B(\mathbf{v},\mathbf{w}) \Rightarrow B(\mathbf{w},\mathbf{v})=0.$$
Since we established in Lemma~\reff{L:StabUnstabNullCone} that $\mathbf{v}$ lies in the null cone of $q$, by scaling and a rotation, we may assume that $\mathbf{v}=(1,0,1)$. If we denote $\mathbf{w}=(w_1,w_2,w_3)$, then $B(\mathbf{v},\mathbf{w})=w_1-w_3=0$. If $w_2=0$, then $\mathbf{w}$ lies in the same eigenspace as $\mathbf{v}$, but this is a contradiction as $\mathbf{w}$ is a central vector. Hence $w_2 \neq 0$ and $q(\mathbf{w})=(w_2)^2 > 0$. Thus $\mathbf{w}$ lies in the exterior of the null cone.
\end{proof}

\begin{lemma}\label{L:TransversalPertubation}
Suppose that $q(x,y,z)=x^2+y^2-z^2$ and $p(x,y,z)$ is a homogeneous polynomial of degree 3. Suppose that $\ell$ is a line passing through the origin and is transversal to the level surface $$\calS_\lambda:=\set{(x,y,z):q(x,y,z)=\lambda^2}$$ for all sufficiently small $\lambda>0$. Then $\ell$ is transversal to the level surface $$\tilde{\calS}_\lambda:=\set{(x,y,z):q(x,y,z)+p(x,y,z)=\lambda^2}$$ for all sufficiently small $\lambda>0$. Further suppose that $\tilde{\ell}$ be a smooth curve passing through the origin tangent to $\ell$. Then $\tilde{\ell}$ is transversal to the level surface $\tilde{\calS}_\lambda$ for sufficiently small $\lambda>0$.
\end{lemma}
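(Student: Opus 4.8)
The plan is to exploit the homogeneity structure: both $q$ and $p$ are homogeneous (of degrees $2$ and $3$ respectively), and the surfaces $\calS_\lambda$, $\tilde{\calS}_\lambda$ shrink toward the origin as $\lambda \to 0^+$, so transversality of a line or curve through the origin should be detectable from the lowest-order behavior near $0$. Concretely, I would introduce the rescaling $(x,y,z) = \lambda(\xi,\eta,\zeta)$. Then $q(x,y,z)+p(x,y,z) = \lambda^2\bigl(q(\xi,\eta,\zeta) + \lambda\, p(\xi,\eta,\zeta)\bigr)$, so $\tilde{\calS}_\lambda$ rescales to $\tilde{\calS}'_\lambda := \{q(\xi,\eta,\zeta)+\lambda\, p(\xi,\eta,\zeta) = 1\}$, which is a perturbation — of size $O(\lambda)$ in the $C^1$ topology on any fixed compact set — of the fixed hyperboloid $\calS'_1 := \{q=1\}$. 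The line $\ell$ is invariant under the rescaling (it passes through the origin), and transversality is a scale-invariant notion, so: $\ell$ is transversal to $\tilde{\calS}_\lambda$ at a point $\mathbf{v}$ iff $\ell$ is transversal to $\tilde{\calS}'_\lambda$ at $\mathbf{v}/\lambda$.

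The key steps, in order, are as follows. First, record that transversality of $\ell$ to $\calS_\mu$ for all small $\mu>0$ means precisely that at every intersection point $\mathbf{w}$ of $\ell$ with $\calS'_1$, the gradient $\nabla q(\mathbf{w})$ is not orthogonal to the direction of $\ell$; since $\ell$ is a line through the origin and $q$ is quadratic, this is a finite set of points $\mathbf{w}_1,\dots,\mathbf{w}_r$ (at most two), and at each the transversality defect $|\langle \nabla q(\mathbf{w}_i), \mathbf{d}\rangle|$ (with $\mathbf{d}$ a direction vector of $\ell$) is bounded below by some $c>0$. Second, observe that the intersections of $\ell$ with $\tilde{\calS}'_\lambda$ lie in a bounded region uniformly in small $\lambda$ (because $\{q+\lambda p = 1\}$ stays in a fixed compact set once $\lambda$ is small — this uses that $q(\xi,\eta,\zeta)=1$ forces... wait, $q$ is indefinite, so I must be careful: the intersections of the \emph{line} $\ell$ with the surface are what is bounded, and since $\ell$ is transversal to $\calS'_1$ the restriction $q|_\ell$ is a nondegenerate quadratic in the line parameter, hence $q|_\ell + \lambda p|_\ell = 1$ has solutions converging to those of $q|_\ell = 1$ as $\lambda \to 0$). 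Third, by continuity of $\nabla q$ and $\nabla p$ and the convergence of the intersection points, the defect $|\langle \nabla(q+\lambda p)(\mathbf{w}), \mathbf{d}\rangle| \geq c/2 > 0$ at every intersection point $\mathbf{w}$ of $\ell$ with $\tilde{\calS}'_\lambda$, once $\lambda$ is small enough; rescaling back gives transversality of $\ell$ to $\tilde{\calS}_\lambda$. Fourth, for the curve $\tilde\ell$: parametrize $\tilde\ell$ near $0$ as $\tilde\ell(t) = t\mathbf{d} + O(t^2)$, so under the rescaling by $\lambda$ it becomes $t \mapsto (t/\lambda)\mathbf{d} + O(t^2/\lambda)$; reparametrizing by $s = t/\lambda$ this is $s \mapsto s\mathbf{d} + O(\lambda s^2)$, i.e. a curve that converges in $C^1_{\mathrm{loc}}$ to the line $\ell$ as $\lambda \to 0$. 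Running the same defect-estimate argument — now with both the surface and the curve perturbed by $O(\lambda)$ — yields transversality of $\tilde\ell$ to $\tilde{\calS}_\lambda$ for small $\lambda$.

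The main obstacle I anticipate is making the "intersection points stay in a bounded region and converge" claim fully rigorous despite $q$ being indefinite: unlike an ellipsoid, the hyperboloid $\{q=1\}$ is noncompact, so one cannot simply invoke compactness of the surface. The resolution is to work on the line (and curve) rather than on the surface: the hypothesis that $\ell$ is transversal to $\calS_\mu$ for all small $\mu>0$ forces $q|_\ell$ to be a quadratic polynomial in the line parameter whose value $1$ is attained transversally (i.e. not at a critical point), which pins down the intersections as a continuously-varying finite set under the $O(\lambda)$ perturbation by $p|_\ell$ — a standard application of the implicit function theorem to $q|_\ell(t) + \lambda\, p|_\ell(t) - 1 = 0$ near each root of $q|_\ell(t)-1 = 0$, using that $\frac{d}{dt}q|_\ell \neq 0$ there (which is exactly the transversality). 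One subtlety to check along the way is that transversality of $\ell$ to $\calS_\mu$ for \emph{all} small $\mu$ (rather than a single one) is what guarantees $q|_\ell - 1$ has no double root, so that the implicit function theorem applies at every root; this is where the quantifier "for all sufficiently small $\lambda>0$" in the hypothesis is genuinely used.
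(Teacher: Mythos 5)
Your proposal is correct and rests on the same key observation as the paper's proof: the degree-of-homogeneity gap between $q$ (degree $2$) and $p$ (degree $3$) makes $\nabla p$ subdominant to $\nabla q$ near the origin, so the transversality defect survives the perturbation; the paper expresses this by directly comparing $|\nabla q\cdot\mathbf{v}|\sim r$ against $|\nabla p\cdot\mathbf{v}|\sim r^2$ at an intersection point of radius $r$, whereas you achieve the same thing by rescaling $(x,y,z)=\lambda(\xi,\eta,\zeta)$ so the perturbation becomes $\lambda$-small in $C^1$ at unit scale. Your write-up is more careful on the two points the paper glosses over — why the non-compactness of the hyperboloid is harmless (work on the line, not the surface) and why the rescaled curve $\tilde\ell$ converges to $\ell$ in $C^1_{\mathrm{loc}}$ — but the mathematical route is the same.
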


\begin{proof}

Suppose $\ell$ is the span of a vector $\mathbf{v}$. By hypothesis, $\mathbf{v}$ is transversal to $\calS_c$ and hence $\nabla q \cdot \mathbf{v} \neq 0$. In the perturbed case, we wish to show that $\nabla(q+p) \cdot \mathbf{v} \neq 0$. Since $p$ is a homogeneous polynomial of degree 3, $|\partial p/\partial x|$, $|\partial p/\partial y|$, $|\partial p/\partial z| \sim r^2$ while $\nabla q = (2x, 2y, -2z)$. Thus, in a ball about the origin of sufficiently small radius $r>0$, $|\nabla q \cdot \mathbf{v}| > |\nabla p \cdot \mathbf{v}|$. This gives us our desired result.

Now consider the perturbation of $\ell$ to $\tilde{\ell}$. Suppose that $\tilde{\mathbf{v}}$ is tangent to $\tilde{\ell}$ at an intersection with $\tilde{\calS}_\lambda$. Since $\tilde{\ell}$ is smooth, $\lambda$ can be chosen so that $\tilde{\mathbf{v}}$ is arbitrarily close to $\mathbf{v}$ and thus $\nabla(q+p) \cdot \tilde{\mathbf{v}} \neq 0$.
\end{proof}

\begin{proof}[Proof of Proposition~\reff{P:CM_Transversal}]
Observe that Proposition~\reff{P:NullCone} justifies the hypotheses of Lemma~\reff{L:TransversalPertubation}. That is, the tangent vector to the central manifold lies in the exterior or the null cone, which is the level surface given by $J'(x,y,z)=0$ (modulo the change of coordinates \eqref{E:LinearCoC}). Applying Lemma~\reff{L:TransversalPertubation} to the level surface given by $J(x,y,z)$ (modulo the change of coordinates \eqref{E:LinearCoC}) gives us Proposition~\reff{P:CM_Transversal}.
\end{proof}

\section{Linear Gap Opening}\label{S:LinearGapOpening}

In this section, we prove

\begin{thm:linear}
For $\alpha$ with eventually periodic continued fraction expansion and $\lambda>0$ sufficiently small, the boundary points of a gap in the spectrum $\Sigma_{\lambda,\alpha}$ depend smoothly on the coupling constant $\lambda$. Moreover, given any one-parameter continuous family $\set{U_{\lambda,\alpha}}_{\lambda>0}$ of gaps of $\Sigma_{\lambda,\alpha}$, we have that
$$\lim_{\lambda \to 0^+} \frac{|U_{\lambda,\alpha}|}{|\lambda|}$$
exists and belongs to $(0,\infty)$.
\end{thm:linear}

To show this, we use an invariant set of order 2 which we found in Section~\reff{S:PeriodicCurve} to moderate the growth of gap length $|U_\lambda|$ via Lemma~\reff{L3.2:DG11}.

\begin{lemma}\label{L:LinearPertubation}
Retaining the assumptions of Lemma \reff{L:TransversalPertubation}, further suppose that $\tilde{\ell}$ be a smooth curve passing through the origin tangent to $\ell$ and intersecting $\calS_\lambda$ at two points for sufficiently small $\lambda$. 
Then there is a neighborhood of the origin $U$ such that $\tilde{\calS}_\lambda \cap \tilde{\ell} \cap U = \set{\tilde{\mathbf{p}}_\lambda, \tilde{\mathbf{q}}_\lambda}$ and $d(\tilde{\mathbf{p}}_\lambda, \tilde{\mathbf{q}}_\lambda) \sim \lambda$, that is
$$\lim_{\lambda \to 0^+}\frac{d(\tilde{\mathbf{p}}_\lambda, \tilde{\mathbf{q}}_\lambda)}{\lambda}$$ exists and belongs to $(0,\infty)$.
\end{lemma}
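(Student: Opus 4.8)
The plan is to leverage Lemma~\reff{L:TransversalPertubation}, which already gives transversality of $\tilde{\ell}$ to $\tilde{\calS}_\lambda$, and to add the quantitative information that the two intersection points are at distance $\sim\lambda$ from one another. First I would set up coordinates so that the origin is the relevant singular point, $q(x,y,z)=x^2+y^2-z^2$, and $\ell = \R\mathbf{v}$ with $\mathbf{v}$ in the exterior of the null cone of $q$ (so $q(\mathbf{v})=c_0>0$ by Proposition~\reff{P:NullCone}). For the unperturbed line, $\ell\cap\calS_\lambda$ consists of the two points $\pm(\lambda/\sqrt{c_0})\,\mathbf{v}$, so $d=2\lambda/\sqrt{c_0}$ and the claim holds trivially with limit $2/\sqrt{c_0}\in(0,\infty)$. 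The content is transferring this to $\tilde{\ell}$ and $\tilde{\calS}_\lambda$.

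Next I would parametrize $\tilde\ell$ by $\gamma(t)=t\mathbf{v}+O(t^2)$ (smoothness and tangency to $\ell$ at the origin), and consider the scalar function $g(t,\lambda):=q(\gamma(t))+p(\gamma(t))-\lambda^2$. By homogeneity, $q(\gamma(t))=c_0 t^2+O(t^3)$ and $p(\gamma(t))=O(t^3)$, so $g(t,\lambda)=c_0t^2+O(t^3)-\lambda^2$. The two intersection points near the origin correspond to the two roots $t^\pm(\lambda)$ of $g(\cdot,\lambda)=0$; I would extract these via the implicit function theorem applied to $h(s,\lambda):=g(s,\lambda)/s^2$... more cleanly, substitute $t=\lambda\tau$ to get $g(\lambda\tau,\lambda)=\lambda^2(c_0\tau^2-1+\lambda\cdot\phi(\tau,\lambda))$ where $\phi$ is smooth and bounded near $\tau=\pm1/\sqrt{c_0}$, $\lambda=0$. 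Applying the implicit function theorem at each of the two simple roots $\tau=\pm1/\sqrt{c_0}$ of $c_0\tau^2-1$ (the $\tau$-derivative there is $\pm2\sqrt{c_0}\neq0$, which is exactly the transversality already established in Lemma~\reff{L:TransversalPertubation}), gives smooth branches $\tau^\pm(\lambda)$ with $\tau^\pm(0)=\pm1/\sqrt{c_0}$, hence $t^\pm(\lambda)=\lambda\tau^\pm(\lambda)$.

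Then $\tilde{\mathbf{p}}_\lambda=\gamma(t^+(\lambda))$ and $\tilde{\mathbf{q}}_\lambda=\gamma(t^-(\lambda))$, and since $\gamma'(0)=\mathbf{v}$,
$$d(\tilde{\mathbf{p}}_\lambda,\tilde{\mathbf{q}}_\lambda)=\|\gamma(t^+(\lambda))-\gamma(t^-(\lambda))\|=|t^+(\lambda)-t^-(\lambda)|\cdot\|\mathbf{v}\|+o(\lambda)=\lambda\,|\tau^+(0)-\tau^-(0)|\cdot\|\mathbf{v}\|+o(\lambda),$$
so $\lim_{\lambda\to0^+}d(\tilde{\mathbf{p}}_\lambda,\tilde{\mathbf{q}}_\lambda)/\lambda=(2/\sqrt{c_0})\|\mathbf{v}\|\in(0,\infty)$. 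Finally I would choose the neighborhood $U$ of the origin small enough that $g(\cdot,\lambda)$ has no other zeros in the corresponding $t$-range — possible because $q(\gamma(t))+p(\gamma(t))=c_0t^2(1+O(t))$ is bounded away from $0$ for $0<|t|$ bounded below and small, while $\lambda^2\to0$ — which secures $\tilde{\calS}_\lambda\cap\tilde\ell\cap U=\{\tilde{\mathbf{p}}_\lambda,\tilde{\mathbf{q}}_\lambda\}$. The main obstacle is purely bookkeeping: making the rescaling $t=\lambda\tau$ interact correctly with the degree-3 error term $p$ so that the implicit function theorem applies uniformly near both roots; the transversality hypothesis inherited from Lemma~\reff{L:TransversalPertubation} is exactly what makes the relevant partial derivative nonzero, so no genuinely new estimate is needed.
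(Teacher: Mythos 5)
Your proof is correct and is in fact more detailed than the paper's own argument, which is only a sketch. Both start the same way — verify the claim for the unperturbed quadratic cone $q=x^2+y^2-z^2$ with the straight line $\ell=\R\mathbf{v}$ — but then diverge in how they account for the two perturbations (the cubic term $p$ on the surface and the $O(t^2)$ correction to the line). The paper invokes the Morse lemma to find a smooth change of coordinates straightening $q+p$ back into $q$, and then appeals to the Taylor expansion of $\tilde\ell$ to say the error is $O(\lambda^2)$, leaving both of these steps unelaborated. You instead restrict everything to the scalar function $g(t,\lambda)=q(\gamma(t))+p(\gamma(t))-\lambda^2$ along the curve, observe that $g(t,\lambda)=c_0t^2+O(t^3)-\lambda^2$, and blow up via $t=\lambda\tau$ so that the implicit function theorem applies at the two simple roots $\tau=\pm 1/\sqrt{c_0}$ of the rescaled limiting equation. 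This bypasses the Morse lemma entirely and handles both perturbations in a single stroke; it also automatically yields the smooth dependence of the intersection parameters on $\lambda$, which the paper's sketch does not make explicit. Your approach buys a cleaner, self-contained and genuinely complete argument, at the small cost of slightly more notation.

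One minor logical point: you attribute $c_0=q(\mathbf{v})>0$ to Proposition~\reff{P:NullCone}, but that proposition concerns central vectors of certain matrices and belongs to the eventual \emph{application} of this lemma, not to the lemma's hypotheses. Within the abstract statement of Lemma~\reff{L:LinearPertubation}, the reason $c_0>0$ is simply the assumption that $\tilde\ell$ meets $\calS_\lambda$ near the origin for small $\lambda>0$: since $q(\gamma(t))=c_0t^2+O(t^3)$, having $q(\gamma(t))=\lambda^2>0$ for small $|t|$ forces $c_0>0$, and transversality then rules out $c_0=0$. This does not affect the substance of your argument.
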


\begin{proof}
A direct computation shows that this is true in the case that $\tilde{\ell}$ is a line passing through the origin and $p(x,y,z)\equiv0$, i.e. we are looking at the unperturbed surface $\calS_c$. Thus the case that $\tilde{\ell}$ is a smooth curve is a perturbation of this case and looking at the Taylor expansion of $\tilde{\ell}$ shows that we add an error term of order at most $\lambda^2$. 
A direct application of the Morse lemma (see for example \cite{Mat02}) shows that there is a neighborhood of the origin $U$ and a smooth change of coordinates $H: U \to \R^3$ such that $H(0,0,0)=(0,0,0)$ and $p \circ H \inv (x,y,z)=x^2+y^2-z^2$, then our desired result would follow. 
\end{proof}

\begin{lemma}[Lemma 3.2 in \cite{DG11}]\label{L3.2:DG11}
Let $W\subset \mathbb{R}^3$ be a smooth surface with a $C^1$-foliation on it. Let $\xi\subset W$ be a smooth curve transversal to the foliation. Fix a leaf $L\subset W$ of the foliation, and denote $P=L\cap \xi$. Take a point $Q\ne P$, $Q\in L$, and a line $\ell_0\subset \mathbb{R}^3$, $Q\in \ell_0$, tangent to $W$ at $Q$, but not tangent to the leaf $L$. Suppose that a family of lines $\{\ell_\lambda\}_{\lambda\in (0, \lambda_0)}$ is given such that $\ell_\lambda \to \ell_0$ as $\lambda \to 0$, each line $\ell_\lambda$, $\lambda>0$,  intersects $W$ at two points $p_\lambda$ and $q_\lambda$, and $p_\lambda\to Q$, $q_\lambda\to Q$ as $\lambda\to 0$. 
Denote by $L_{p_\lambda}$ and $L_{q_\lambda}$ the leaves of the foliation that contain $p_\lambda$ and $q_\lambda$, respectively. Denote ${\bf{p}}_\lambda=L_{p_\lambda}\cap \xi$ and ${\bf{q}}_\lambda=L_{q_\lambda}\cap \xi$. Then there exists a finite non-zero limit
 $$
 \lim_{\lambda\to 0}\frac{\mathrm{dist}(p_\lambda, q_\lambda)}{\mathrm{dist}({\bf{p}}_\lambda, {\bf{q}}_\lambda)}.
 $$
 \end{lemma}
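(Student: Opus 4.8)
\textbf{Proof proposal for Lemma \reff{L3.2:DG11}.}

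The plan is to reduce everything to a single chart around $Q$ in which the foliation is straightened, and then to compare two quantities that are both governed, to leading order, by the same quadratic model. First I would choose local coordinates $(u,v,w)$ centered at $Q$ in which $W=\{w=\psi(u,v)\}$ for a smooth $\psi$ with $\psi(0,0)=0$, $D\psi(0,0)=0$, and in which the $C^1$-foliation of $W$ becomes the family of level sets $\{u=\mathrm{const}\}$ (possible since $\xi$ is transversal to the foliation and the foliation is $C^1$; after a further linear change we may take $\xi$ tangent to the $u$-axis at $Q$). In these coordinates the leaf $L$ is $\{u=0\}$, the point $P$ is the origin, and the holonomy map sending a point of $W$ to the intersection of its leaf with $\xi$ is, to first order, just the projection $(u,v)\mapsto u$ (along $\xi$). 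Consequently, for points $p_\lambda,q_\lambda\in W$ with coordinates $(u(p_\lambda),v(p_\lambda))$, $(u(q_\lambda),v(q_\lambda))$ tending to $(0,0)$, we have $\mathrm{dist}(\mathbf{p}_\lambda,\mathbf{q}_\lambda)=|u(p_\lambda)-u(q_\lambda)|\cdot(1+o(1))$, because $\xi$ is smooth and the holonomy is $C^1$.

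Next I would analyze the numerator. The line $\ell_\lambda$ passes through a point converging to $Q$ and has direction converging to that of $\ell_0$, which is tangent to $W$ at $Q$ but \emph{not} tangent to the leaf $L$; in the straightened coordinates, "tangent to $W$" means the $w$-component of the direction vanishes to first order, and "not tangent to $L$" means its $u$-component is nonzero. Parametrize $\ell_\lambda$ by arclength $t$; the two intersection points with $W$ are the two roots $t_1(\lambda),t_2(\lambda)$ of the equation $w_{\ell_\lambda}(t)=\psi(u_{\ell_\lambda}(t),v_{\ell_\lambda}(t))$. Since $\psi$ vanishes to second order and $\ell_\lambda$ is tangent to $W$ at the limit, this is, after dividing by the leading coefficient, a quadratic-type equation in $t$ whose two roots both tend to $0$; the difference $|t_1-t_2|$ is therefore comparable to $\lambda$ — here one uses that $p_\lambda,q_\lambda$ are genuinely distinct (two intersection points for each $\lambda>0$) together with a nondegeneracy of the quadratic part of $\psi$ in the relevant direction, which is exactly what Lemma~\reff{L:LinearPertubation} and Lemma~\reff{L:TransversalPertubation} supply in our application. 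Then $\mathrm{dist}(p_\lambda,q_\lambda)=|t_1-t_2|\cdot(1+o(1))$, and, projecting the two intersection points to the $u$-axis, $u(p_\lambda)-u(q_\lambda)=(u$-component of direction of $\ell_0)\cdot(t_1-t_2)\cdot(1+o(1))$, with the scalar factor nonzero precisely because $\ell_0$ is not tangent to $L$.

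Combining the two computations, $\dfrac{\mathrm{dist}(p_\lambda,q_\lambda)}{\mathrm{dist}(\mathbf{p}_\lambda,\mathbf{q}_\lambda)}$ converges to the reciprocal of the nonzero $u$-component of the unit direction of $\ell_0$, a finite nonzero number, which is the claim. The main obstacle I anticipate is bookkeeping the regularity: the foliation is only $C^1$, so the holonomy to $\xi$ is only $C^1$ and I must be careful to write all leading-order identities with multiplicative $(1+o(1))$ errors rather than differentiating twice; and I must verify that "$\ell_\lambda$ meets $W$ at two points tending to $Q$" really forces the quadratic coefficient to be nonvanishing in the limit, i.e. that the situation is the nondegenerate Morse one — this is where invoking the transversality hypotheses (via Lemmas~\reff{L:TransversalPertubation} and~\reff{L:LinearPertubation}) rather than trying to prove it from the bare hypotheses of this lemma is the cleanest route. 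Everything else is the straightforward quadratic-model computation already carried out in the proof of Lemma~\reff{L:LinearPertubation}.
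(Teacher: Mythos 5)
The paper does not prove Lemma~\reff{L3.2:DG11} itself; it is quoted from \cite{DG11} and applied as a black box, so there is no in-paper proof to compare against.

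Evaluated on its own, your core argument is correct. Straighten the $C^1$-foliation so the leaves are $\{u=\mathrm{const}\}$ in a chart around $Q$, and parametrize $\ell_\lambda$ by arclength so that $\mathrm{dist}(p_\lambda,q_\lambda)=|t_1-t_2|$ exactly. Then $u(p_\lambda)-u(q_\lambda)=a_\lambda(t_1-t_2)$, where $a_\lambda$ is the $u$-component of the unit direction of $\ell_\lambda$; since $\ell_\lambda\to\ell_0$ and $\ell_0$ is not tangent to $L=\{u=0\}\cap W$, one has $a_\lambda\to a_0\neq 0$. Finally, $\mathrm{dist}(\mathbf{p}_\lambda,\mathbf{q}_\lambda)$ is obtained from $|u(p_\lambda)-u(q_\lambda)|$ by pushing through the $C^1$ holonomy of the foliation to $\xi$, whose derivative is nonvanishing because $\xi$ is transversal to the foliation. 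The ratio therefore converges to a finite nonzero constant, which is the claim.

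Two items should be corrected. First, your coordinate normalization asserts that ``$\xi$ is tangent to the $u$-axis at $Q$,'' but $\xi$ does not pass through $Q$: the lemma requires $Q\ne P$ and $\xi$ passes through $P$, a different point of the leaf $L$. What is actually used (and what you do gesture at by saying ``the holonomy is $C^1$'') is that the leaf-following map from a neighborhood of $Q$ to a neighborhood of $P$ on $\xi$ is a $C^1$ diffeomorphism with nonvanishing derivative; that is what converts the increment in $u$ into a distance on $\xi$. Second, and more importantly, the entire quadratic/Morse paragraph is both unnecessary and out of scope. It is unnecessary because the conclusion is a ratio, so you never need to know that $|t_1-t_2|\sim\lambda$; the hypothesis already hands you two distinct intersection points $p_\lambda\ne q_\lambda$ converging to $Q$, which is all the first-order computation requires. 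It is out of scope because you justify the nondegeneracy there by appealing to Lemmas~\reff{L:TransversalPertubation} and~\reff{L:LinearPertubation}. Those are results of this paper that serve to \emph{verify the hypotheses} of Lemma~\reff{L3.2:DG11} in the concrete trace-map application; they are not hypotheses of the abstract lemma, so a proof of the lemma may not assume them without sacrificing the generality in which it is stated and used (for instance, for every continuous gap family in the proof of Theorem~\reff{T:LinearGapIntro}). Deleting that paragraph leaves a correct, self-contained argument.
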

 
\begin{proof}[Proof of Theorem~\reff{T:LinearGapIntro}]
Proposition~\reff{P:NullCone} shows that the central manifold of $T_a$ at the point $(1,1,1)$ is transversal to $S_\lambda$ for sufficiently small $\lambda>0$ in a neighborhood of $P_1$. Since the central manifold is invariant and $S_\lambda$ is invariant and they intersect at exactly two points, these two points form an invariant set of size 2, that is two fixed points or a set of period 2. Call these two points $p_1(\lambda)$ and $p_1'(\lambda)$. One may verify by direct computation that the maps $T_a$ fix $P_1$. Also, if $a$ is even, then $T_a$ fixes $P_4$ and permutes $P_2$ and $P_3$. If $a$ is odd, $P_2$, $P_3$, $P_4$ form a 3-cycle under $T_a$. Thus, $T_a^6$ fixes all $P_i$ for $i=1...4$. Thus $p_1(\lambda)$ and $p_1'(\lambda)$ will be two fixed points of $T_a^6$. Thus, in a neighborhood of radius $r_0$ of $P_1$, we have a smooth curve of fixed points, call this curve $\Fix(T_a^6, O_{r_0}(P_1))$. Due to Lemmas \reff{L:TransversalPertubation} and \reff{L:LinearPertubation} the length of the curve between the two points of intersection grows linearly with respect to $\lambda$. Applying Lemma \reff{L3.2:DG11} to $W^s(\Fix(T_a^6, O_{r_0}(P_1))$ gives us Theorem \reff{T:LinearGapIntro}.
\end{proof}

\section{Continuity of the Hausdorff Dimension}\label{S:ContHausDim}
In this section, we prove
\begin{thm:hausdim}
For $\alpha$ with eventually periodic continued fraction expansion,
$$\lim_{\lambda \to 0^+} \dim_\text{H} \Sigma_{\lambda, \alpha} = 1.$$
More precisely, there are constants $C_1, C_2 > 0$ such that
$$1-C_1 \lambda \leq \dim_\text{H} \Sigma_{\lambda, \alpha} \leq 1-C_2 \lambda$$
for $\lambda>0$ sufficiently small.
\end{thm:hausdim}
Let us begin by recalling the notions of \defword{thickness} $\tau$ and \defword{denseness} $\theta$ (see, for example, \cite{PT93}).

\begin{definition}\label{D:Thickness}
Let $C \subset \R$ be a Cantor set and $I$ be the minimal closed interval of $\R$ containing $C$. A bounded \defword{gap} of $C$ is a bounded connect component of $\R \bs C$. An enumeration of these bounded gaps $\calU=\set{U_n}$ is called a \textit{presentation}. If $u \in C$ is a boundary point of $U_n$, denote by $K$ the connected component of $I \bs (U_1 \cup ... \cup U_n)$ containing $u$. We say $K$ is a \defword{bridge}. We define
$$\tau(C,\calU,u):=\frac{|K|}{|U_n|}$$
and call this the \defword{thickness of $C$ at $u$}.
Further,
$$\tau(C)=\sup_{\calU} \inf_{u} \tau(C,\calU,u), \qquad \theta(C)=\inf_{\calU} \sup_{u} \tau(C,\calU,u).$$
\end{definition}

\begin{proposition}[Proposition 5 and 6 in Section 4.2 of \cite{PT93}]\label{P:ThicknessHausDim}
Thickness and denseness are related to Hausdorff dimension by the following inequalities:
$$\frac{\log 2}{\log \left(2+\frac{1}{\tau(C)}\right)} \leq \dim_\text{H} C  \leq \frac{\log 2}{\log \left(2 + \frac{1}{\theta(C)}\right)}.$$
\end{proposition}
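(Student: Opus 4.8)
The plan is to reduce both inequalities to one elementary estimate about the binary tree of bridges of $C$, and then to get the lower bound from a mass distribution (Frostman) argument and the upper bound from an explicit covering. First I would organize $C$ as a binary tree: take the convex hull $I$ as the root, and declare a bridge $K$ to have children $K_1,K_2$ obtained by deleting from $K$ the first gap $U$ of the chosen presentation that is contained in $K$; since $C$ is totally disconnected every branch of this tree is infinite. The guiding example is the ``regular'' Cantor set in which every split satisfies $|K_1|=|K_2|=\tau|U|$, hence $|U|=|K|/(2\tau+1)$: it is self-similar with two ratios equal to $\tau/(2\tau+1)$, so its dimension is exactly $s(t):=\frac{\log 2}{\log(2+1/t)}$, the common value appearing in both bounds. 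The claim is that this regular set is extremal in both directions.

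For the lower bound I would fix $\tau<\tau(C)$, choose a presentation realizing thickness $\ge\tau$ at every boundary point and ordered so that each gap is removed before any gap it encloses (so the two bridges adjacent to $U$ are exactly the children $K_1,K_2$), giving $|K_1|,|K_2|\ge\tau|U|$ and hence $|U|\le|K|/(2\tau+1)$ at every split. A one–variable optimization — minimizing $a^{s}+b^{s}$ over $a+b<|K|$, $\min(a,b)\ge\tau(|K|-a-b)$ — should show that for $s=s(\tau)$ one has $|K_1|^{s}+|K_2|^{s}\ge|K|^{s}$ at every split, with equality precisely in the regular case. I would then distribute mass on $C$ by $\mu(I)=|I|^{s}$ and $\mu(K_i)=\tfrac{|K_i|^{s}}{|K_1|^{s}+|K_2|^{s}}\mu(K)$; the displayed inequality gives $\mu(K)\le|K|^{s}$ for every bridge $K$ by induction, whence $\mu(J)\le c|J|^{s}$ for every interval $J$ (using that an interval of length $\delta$ meets only boundedly many bridges of length comparable to $\delta$), and the mass distribution principle yields $\dim_\text{H}C\ge s(\tau)$; letting $\tau\uparrow\tau(C)$ finishes this half.

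For the upper bound I would fix $\theta'>\theta(C)$, choose a presentation with $\tau(C,\calU,u)<\theta'$ for every $u$, and run the dual optimization — now maximizing $a^{s}+b^{s}$ under the analogous constraints — to get $|K_1|^{s}+|K_2|^{s}\le|K|^{s}$ at every split for $s=s(\theta')$. Summing over all bridges of a fixed generation $n$ then gives $\sum_K|K|^{s}\le|I|^{s}$ uniformly in $n$; since $C$ is compact and totally disconnected the largest generation-$n$ bridge shrinks to $0$, so these covers show $\calH^{s}(C)\le|I|^{s}<\infty$, hence $\dim_\text{H}C\le s(\theta')$, and letting $\theta'\downarrow\theta(C)$ finishes. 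This is carried out carefully in \cite{PT93}, to which I would refer for the details.

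I expect the main obstacle to be not the optimizations themselves but the bookkeeping that makes them applicable: verifying that the presentations involved can be chosen (or reordered) so that the bridges adjacent to each removed gap are genuinely the tree-children — which in the denseness direction interacts with the fact that a denseness-optimal presentation need not be ``tree ordered'' — and proving the comparability estimate (an interval of length $\delta$ meets only $O(1)$ bridges of size $\sim\delta$) that underlies the Frostman bound. These are exactly the technical points settled in \cite{PT93}, which is why I would ultimately cite it rather than reproduce them here.
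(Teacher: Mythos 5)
The paper does not prove this proposition at all — it is quoted verbatim as Propositions 5 and 6 of Section 4.2 of \cite{PT93} and used as a black box — so there is no internal proof to compare your sketch against. Your outline is a reasonable reconstruction of the Palis--Takens argument: the key inequality $|K_1|^s+|K_2|^s\gtrless|K|^s$ at each split for $s=\log 2/\log(2+1/t)$ (with $t=\tau(C)$ or $\theta(C)$) is correct, the optimization is correct (the minimum/maximum of $a^s+b^s$ over the constraint polygon is attained at the symmetric vertex $a=b$, giving exactly $|K|^s$ at the critical $s$), the covering argument for the upper bound is fine, and the mass-distribution argument for the lower bound is the right tool. You have also correctly identified the genuinely delicate points — that a presentation must be reordered so that deleting a gap splits its surrounding bridge into the two adjacent bridges of the same presentation, and that the Frostman estimate $\mu(J)\le c|J|^s$ for arbitrary intervals $J$ (not just bridges) requires a nontrivial counting/comparison argument across scales — and you defer to \cite{PT93} precisely at those points, which is what the paper itself does globally. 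So nothing in your sketch is wrong; it simply supplies detail the paper omits by design.
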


Observe that Proposition~\reff{P:ThicknessHausDim} and the following theorem immediately imply Theorem~\reff{T:HausDimIntro}.

\begin{theorem}\label{T:Thickness}
For $\alpha$ with eventually periodic continued fraction expansion,
$$\lim_{\lambda \to 0} \tau(\Sigma_{\lambda, \alpha}) = \infty.$$
More precisely, there are constants $C_3, C_4 > 0$ such that
$$C_3 \lambda \inv \leq \tau(\Sigma_{\lambda, \alpha}) \leq \theta(\Sigma_{\lambda, \alpha}) \leq C_4 \lambda \inv$$
for $\lambda>0$ sufficiently small.
\end{theorem}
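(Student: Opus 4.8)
The plan is to control the thickness (and denseness) of the spectrum $\Sigma_{\lambda,\alpha}$ by pulling back the geometry of the Cantor set to the dynamical picture developed in Sections~\reff{S:HyperbolicityNW} and \reff{S:PeriodicCurve}. Recall that $\Sigma_{\lambda,\alpha}$ is the set of energies $E$ with $\ell_\lambda \cap \B^+_{\lambda,\alpha}$, and that by Corollary~\reff{C:Cantat} this is a dynamically defined Cantor set whose defining dynamics is the (uniformly hyperbolic) restriction of $T_\alpha^{\text{per}}$ to $\Omega_{\lambda,\alpha}^{\text{per}}$, transported back by $T_\alpha^{\text{int}}$. For a dynamically defined Cantor set, thickness is governed by the ratios of lengths of \emph{bridges} to lengths of \emph{gaps}, and these ratios can be estimated from the distortion of the hyperbolic map together with the sizes of the ``first'' gaps — i.e. the gaps coming from the curve of periodic points found in Section~\reff{S:PeriodicCurve}. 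The key point is that by Theorem~\reff{T:LinearGapIntro} (and the Lemma~\reff{L:LinearPertubation} estimate behind it) the distance between the two points of the invariant set of order two on $S_\lambda$ near $P_1$ is $\sim \lambda$, and this forces the fundamental gaps of $\Sigma_{\lambda,\alpha}$ (the preimages in $\ell_\lambda$ of the stable lamination of $\Fix(T_a^6,O_{r_0}(P_1))$ — or, for general $\alpha$, of the corresponding periodic orbit of $T_\alpha^{\text{per}}$) to have length $\sim \lambda$, while the bridges have length bounded below by a positive constant. This yields the lower bound $\tau(\Sigma_{\lambda,\alpha}) \geq C_3\lambda^{-1}$ and the matching upper bound $\theta(\Sigma_{\lambda,\alpha}) \leq C_4\lambda^{-1}$.

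First I would make explicit the Markov partition / IFS structure: by hyperbolicity of $T_\alpha^{\text{per}}$ on $\Omega_{\lambda,\alpha}^{\text{per}}$ (Theorem~\reff{T:Cantat}, Lemma~\reff{L:Bounded=NW}), there is a finite collection of disjoint ``cylinder'' arcs in $\ell_\lambda$ such that $\Sigma_{\lambda,\alpha}$ is the attractor of the induced expanding maps; the bounded distortion property (because $T_\alpha^{\text{per}}$ is $C^2$ and the hyperbolic set is compact, with distortion constant uniform in $\lambda$ by the analytic dependence in Corollary~\reff{C:Cantat}(2)) says that at every scale the local geometry is comparable, up to a $\lambda$-independent multiplicative constant, to the geometry of the generating partition. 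Hence it suffices to estimate $|K|/|U_n|$ for the \emph{first-generation} bridges and gaps. Second, I would identify the first-generation gaps: a gap of $\Sigma_{\lambda,\alpha}$ opens precisely where an arc of $\ell_\lambda$ leaves $\B^+_{\lambda,\alpha}$, which happens through the stable manifolds of the periodic points $p_1(\lambda), p_1'(\lambda)$ (and finitely many other periodic orbits of low period); the width of such a gap, measured in $\ell_\lambda$, is controlled by the distance $d(p_1(\lambda),p_1'(\lambda)) \sim \lambda$ established in the proof of Theorem~\reff{T:LinearGapIntro}, together with the transversality of $\ell_\lambda$ to the stable foliation (Lemma~\reff{L3.2:DG11}, Lemma~\reff{L:TransversalPertubation}). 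Third, I would bound the bridges: the complementary arcs of $\ell_\lambda$ inside the minimal interval containing $\Sigma_{\lambda,\alpha}$, with the first finitely many gaps removed, are (up to bounded distortion) images under the inverse branches of arcs of definite length, hence have length $\geq c > 0$ with $c$ independent of $\lambda$. Combining these three points gives $\tau \gtrsim \lambda^{-1}$ and $\theta \lesssim \lambda^{-1}$; the two-sided control holds because the same finitely many periodic orbits produce \emph{all} the ``largest'' gaps, so both the infimum defining $\tau$ and the supremum defining $\theta$ are pinned between constant multiples of $\lambda^{-1}$.

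I would carry this out in the order: (1) set up the Markov/IFS description of $\Sigma_{\lambda,\alpha}$ from the hyperbolic set, with uniform-in-$\lambda$ distortion bounds; (2) translate thickness/denseness into a statement about first-generation bridge-to-gap ratios; (3) use Section~\reff{S:PeriodicCurve} and the linear-gap-opening estimate to show the relevant gaps have length comparable to $\lambda$ and the bridges have length bounded below; (4) assemble the inequalities $C_3\lambda^{-1} \leq \tau(\Sigma_{\lambda,\alpha}) \leq \theta(\Sigma_{\lambda,\alpha}) \leq C_4\lambda^{-1}$ and note $\tau \to \infty$. The main obstacle I anticipate is step~(1)--(3): ensuring that the distortion constants, the lower bound on bridge lengths, and the comparability constant in ``gap length $\sim \lambda$'' are all uniform as $\lambda \to 0^+$, since the hyperbolic set degenerates (the dimension tends to $1$, so contraction rates and the number of relevant cylinders may vary). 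This is handled by the analytic/continuous dependence of the hyperbolic set on $\lambda$ down to $\lambda = 0$ — where the dynamics degenerates to the pseudo-Anosov map $T_\alpha^{\text{per}}$ on $\S$ — so all the relevant constants extend continuously to $\lambda = 0$; the only quantity that vanishes there is the gap width, which is exactly the content of the $\sim\lambda$ estimate. A secondary technical point is translating distances between points on the invariant curve in $S_\lambda \subset \R^3$ to lengths of arcs in $\ell_\lambda \subset \R^3$, but this is routine given the transversality already proved in Proposition~\reff{P:CM_Transversal} and Lemma~\reff{L3.2:DG11}.
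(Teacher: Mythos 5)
Your overall strategy—view $\Sigma_{\lambda,\alpha}$ as a dynamically defined Cantor set via a Markov partition for $T_\alpha^{\text{per}}$, bound the first-generation gap lengths by $\sim\lambda$ using the linear gap opening of Section~\reff{S:PeriodicCurve}, and propagate to all scales by bounded distortion—is the same strategy the paper uses, which in turn is a reduction to the proof of Theorem 1.1 in [DG11]. Your point that one only needs a single good presentation because $\tau$ is a supremum over presentations is also exactly what the paper observes.

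However, there is a genuine gap in the distortion step. You assert that the distortion constant is ``uniform in $\lambda$ by the analytic dependence in Corollary~\reff{C:Cantat}(2)'' and, at the end, that ``all the relevant constants extend continuously to $\lambda=0$.'' Neither of these follows, and this is precisely the hard part of the proof. As $\lambda\to 0^+$ the level surface $S_\lambda$ degenerates to $\S$, which has a conic singularity at $P_1$, and $T_\alpha^{\text{per}}|_{S_\lambda}$ converges to a pseudo-Anosov map that is singular there. The line $\ell_\lambda$ passes arbitrarily close to $P_1$, so the part of the Cantor set whose geometry you most need to control is exactly the part sitting in the degenerating region; the derivative of the semiconjugacy $F$ in~\eqref{E:SemiConj} vanishes at $P_1$ (this is the same phenomenon used in Theorem~\reff{T:LessThanDimIntro}, where the local expansion rate at the singularity is the square of the generic one), so the local branch ratios do not simply converge to their pseudo-Anosov values. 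Analyticity of $\dim_{\mathrm{H}}\Sigma_{\lambda,\alpha}$ in $\lambda$ (for $\lambda>0$) tells you nothing about uniformity of distortion as $\lambda\to 0^+$. Your lower bound on first-generation bridges and your claim that ``gap $\sim\lambda$'' has a $\lambda$-uniform implicit constant both implicitly use this unproven uniformity.

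The paper supplies exactly the two missing ingredients. Lemma~\reff{L:Rectification} produces a $\lambda$-independent smooth change of coordinates near $P_1$ rectifying the strong stable/unstable manifolds of $P_1$ and the center-stable/center-unstable manifolds of the normally hyperbolic curve $I$ of periodic points; in these coordinates the map near the singular fixed point has an explicit normal form, so the distortion of the branches meeting $\ell_\lambda$ there is computable and uniformly bounded as $\lambda\to 0^+$. Lemma~\reff{L:Partition} and Remark~\reff{R:Partition} construct the specific Markov partition whose boundaries are stable and unstable manifolds (pulled back through $T_\alpha^{\text{int}}$), which is the structure required to invoke the distortion lemma (Lemma 3.10 of [DG11]) that actually converts the first-generation ratio estimate into the global statement $C_3\lambda\inv\le\tau\le\theta\le C_4\lambda\inv$. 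Without the rectification near $P_1$, the proposal's step (1) does not stand, and the subsequent steps inherit the gap.
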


\begin{proof}
The proof of Theorem \reff{T:Thickness} is almost identical to the proof of Theorem 1.1 in \cite{DG11}, distortion estimates can be obtained subject to two observations:
\begin{enumerate}
\item Due to the smoothness of the invariant manifolds of the periodic curve (see Section~\reff{S:PeriodicCurve}), there exists a smooth change of coordinates in a neighborhood $U$ of $P_1$. $\Phi:U(P_1)\to \mathbb{R}^3$, $\Phi(P_1)=(0,0,0)$ that rectifies all invariant manifolds. The exact form is given in Lemma~\reff{L:Rectification}.
\item As can be seen from the Definition \reff{D:Thickness}, since $\tau(\Sigma_{\lambda, \alpha})$ is the supremum over all presentations $\calU$ showing that thickness relative to a particular presentation tends to infinity is sufficient for the proof of Theorem~\reff{T:Thickness}. To apply Lemma 3.10 in \cite{DG11}, we must construct a partition comprised of rectangles whose boundaries are stable and unstable manifolds. The existence of such a partition is shown in Lemma~\reff{L:Partition} and Remark~\reff{R:Partition}.
\end{enumerate}
\end{proof}

\begin{remark}\label{R:Partition}
In Section \reff{S:HyperbolicityNW}, we defined $W^S(\Omega_{\lambda, \alpha})$ to be the preimage of the stable lamination with respect to $T_\alpha^\text{per}$ under $T_\alpha^\text{int}$. Similarly, we will refer to the preimage of stable (resp. unstable) manifolds with respect to $T_\alpha^\text{per}$ on $S_\lambda$ under $T_\alpha^\text{int}$ as the stable (resp. unstable) manifolds of $T_\alpha$. 
\end{remark}

\begin{lemma}\label{L:Partition}
For $\alpha$ with eventually periodic continued fraction expansion, there is a Markov partition on $\Omega_{\lambda, \alpha}^\text{per} \subseteq S_\lambda$ for $\lambda$ sufficiently small. Using the notation of Remark~\reff{R:Partition}, the preimage of this Markov partition under $T_\alpha^\text{int}$ gives us a partition of $\Omega_{\lambda, \alpha}$ composed of pieces of stable and unstable manifolds of $T_\alpha$. 
\end{lemma}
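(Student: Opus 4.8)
The plan is to build the Markov partition for $T_\alpha^{\text{per}}$ on $\Omega_{\lambda,\alpha}^{\text{per}} \subseteq S_\lambda$ by first establishing it at $\lambda = 0$ and then perturbing, using the hyperbolicity supplied by Theorem \reff{T:Cantat} and the rectification of invariant manifolds near $P_1$ from Section \reff{S:PeriodicCurve}. Concretely: by Theorem \reff{T:Cantat}, $T_\alpha^{\text{per}}$ is pseudo-Anosov on $\S$, and for each fixed $\lambda > 0$ the set of points with bounded full orbit is a locally maximal compact transitive hyperbolic set of $T_\alpha^{\text{per}}$ (which by Lemma \reff{L:Bounded=NW} equals $\Omega_{\lambda,\alpha}^{\text{per}}$). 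A locally maximal hyperbolic set of a surface diffeomorphism always admits a Markov partition into rectangles bounded by arcs of stable and unstable manifolds — this is the standard construction (Bowen; see also \cite{PT93}). So the first step is simply to invoke this general fact, taking care that the rectangles can be chosen with diameter small enough that they sit inside the rectifying neighborhood $U(P_1)$ once one tracks the periodic points $P_1,\dots,P_4$; near each $P_i$ one uses the smooth rectifying coordinates $\Phi$ (Lemma \reff{L:Rectification}) to guarantee the boundaries of the partition elements are genuinely pieces of $W^s$ and $W^u$ and that the partition is well-defined uniformly as $\lambda \to 0$.

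The second step is to pass from $\Omega_{\lambda,\alpha}^{\text{per}} \subseteq S_\lambda$ back to $\Omega_{\lambda,\alpha}$. By definition $\Omega_{\lambda,\alpha} = (T_\alpha^{\text{int}})^{-1}(\Omega_{\lambda,\alpha}^{\text{per}})$, and $T_\alpha^{\text{int}} = T_{a_m} \circ \cdots \circ T_{a_1}$ is a polynomial diffeomorphism of $\R^3$ preserving each $S_\lambda$ (Lemma \reff{L:InvariantSurfaceEntire}); on the relevant region it is a local diffeomorphism of $S_\lambda$. Hence the preimage under $T_\alpha^{\text{int}}$ of the Markov partition is a partition of $\Omega_{\lambda,\alpha}$, and since $T_\alpha^{\text{int}}$ is a diffeomorphism it maps stable/unstable manifolds of $T_\alpha^{\text{per}}$ to the objects we have agreed (Remark \reff{R:Partition}) to call the stable/unstable manifolds of $T_\alpha$. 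So the pieces of the pulled-back partition are bounded by arcs of stable and unstable manifolds of $T_\alpha$, which is exactly the claim. One should also note that the Markov (i.e. overlap-matching) property is preserved under the diffeomorphism $T_\alpha^{\text{int}}$, since it is a purely combinatorial/topological condition on how images of rectangles cross rectangles, and it transfers verbatim through a conjugacy.

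The main obstacle is not the existence of a Markov partition — that is classical for hyperbolic sets of surface maps — but rather the uniform control as $\lambda \to 0^+$: one needs the partition elements, and in particular their stable/unstable boundary arcs, to be chosen consistently for all small $\lambda$ so that the distortion estimates cited in the proof of Theorem \reff{T:Thickness} (following \cite{DG11}, Lemma 3.10) actually apply with constants independent of $\lambda$. This is where the rectifying coordinates $\Phi$ near $P_1$ (and the analogous structure near $P_2,P_3,P_4$) are essential: they let one describe the partition in a fixed coordinate chart into which the hyperbolic set collapses as $\lambda \to 0$, so that the combinatorics of the partition stabilize. I would therefore spend most of the argument checking that the rectangle boundaries can be taken to be genuine (rectified) invariant manifolds throughout a neighborhood of the periodic orbit, and that the resulting partition has a fixed transition structure for all sufficiently small $\lambda$; the transfer to $\Omega_{\lambda,\alpha}$ via $T_\alpha^{\text{int}}$ is then routine.
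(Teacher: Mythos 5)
Your plan is sound in outline, and your second step (pulling the partition back through $T_\alpha^{\text{int}}$) is exactly what the paper does. But the construction of the Markov partition on $\Omega_{\lambda,\alpha}^{\text{per}}$ itself is handled by a genuinely different device than the one you propose, and the difference matters precisely at the point you yourself flag as the obstacle.

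You propose to invoke the general Bowen theorem (locally maximal hyperbolic sets of surface diffeomorphisms admit Markov partitions) separately on each $S_\lambda$, $\lambda>0$, and then to argue uniformity as $\lambda\to 0^+$ via the rectifying coordinates near $P_1$. The paper instead constructs the partition once and for all at $\lambda=0$ by going to the cover: $T_\alpha^{\text{per}}|_{\S}$ is a factor of the hyperbolic toral automorphism $M_{a_n}\circ\cdots\circ M_{a_1}$ on $\T^2$ under the semi-conjugacy $F$ (Lemma~\reff{L:Semi-conjugacy}). On $\T^2$ one takes the \emph{classical} explicit Markov partition for a hyperbolic toral automorphism built from $W^{s}$ and $W^{u}$ of the four periodic points $\tilde P_1,\dots,\tilde P_4$, chooses it to be invariant under $(x,y)\mapsto(-x,-y)$ (so that it passes cleanly through the two-to-one map $F$), and pushes it down to $\Omega_{0,\alpha}^{\text{per}}\subset\S$. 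Since this partition is made of pieces of $W^{s}$ and $W^{u}$ of $T_\alpha^{\text{per}}$, which depend smoothly on $\lambda$, it extends by smooth continuation to $\Omega_{\lambda,\alpha}^{\text{per}}\subseteq S_\lambda$ for all $\lambda$ in some interval $[0,\lambda_0)$.

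The advantage of the paper's route is precisely that it renders moot the issue you identify as the hard part. By building one explicit partition at $\lambda=0$ on the torus and continuing it, the combinatorics and the boundary arcs are fixed for free for all small $\lambda$ — there is no need to argue that independently constructed Bowen partitions for different $\lambda$ can be chosen consistently. It also handles the conical singularities of $\S$ (where $T_\alpha^{\text{per}}$ is only pseudo-Anosov, not a smooth surface hyperbolic map) without fuss, because all the work happens on the smooth torus upstairs. In your proposal the uniformity-in-$\lambda$ issue is flagged but not actually resolved: the rectifying coordinates $\Phi$ near $P_1$ (Lemma~\reff{L:Rectification}) control the local picture near one periodic point but do not by themselves produce a globally coherent $\lambda$-family of Markov partitions with fixed transition matrix. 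If you want to complete your plan along your own lines, the missing idea is exactly the semi-conjugacy to the toral automorphism; once you see that the partition can be built upstairs and pushed down, the rest is continuation.

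Your final paragraph on transferring via $T_\alpha^{\text{int}}$ is correct and matches the paper: $T_\alpha^{\text{int}}$ is a diffeomorphism preserving each $S_\lambda$, so it carries the stable and unstable boundary arcs of $T_\alpha^{\text{per}}$ to the objects defined in Remark~\reff{R:Partition} to be the stable and unstable manifolds of $T_\alpha$, and the Markov overlap condition is a combinatorial property invariant under conjugation.
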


\begin{proof}
As we remarked in Section \reff{S:HyperbolicityNW}, in the periodic case $\alpha=[0;\overline{a_{1}...a_{n}}]$, and $T_\alpha^{\text{per}}=T_{a_{n}} \circ ... \circ T_{a_{1}}$. Then $T_\alpha^{\text{per}}$ is a factor of $M_{a_n} \circ ... \circ M_{a_1}$ under the semi-conjugacy $F$, which is a hyperbolic toral automorphism. Thus, a  Markov partition for this map can be constructed on $\T^2:=\R^2/\Z^2$ from the stable and unstable manifolds of the points $\tilde{P}_1=(0,0), \tilde{P}_2=(1/2,0), \tilde{P}_3=(0,1/2), \tilde{P}_4=(1/2,1/2)$. This is a classical result, see, for example, Appendix 2 in \cite{PT93}. We may label this construction in such a way that it is invariant under the map $(x,y) \mapsto (-x,-y)$. Under the semi-conjugacy $F$, this induces a Markov partition on $\Omega_{0, \alpha}^\text{per} \subset \S$. Since this partition is composed of piece of stable and unstable manifolds of $T_\alpha^\text{per}$, which depend smoothly on $\lambda$, there exists $\lambda_0>0$ such that the partition can be extended to $\Omega_{\lambda, \alpha}^\text{per} \subseteq S_\lambda$ for $\lambda \in [0, \lambda_0)$.
\end{proof}

In order to present the exact form of the rectification $\Phi$, we make use of some standard notations associated with normal hyperbolicity (see \cite{PT93} for a formal definition). From Section~\reff{S:PeriodicCurve}, there is a smooth curve that was normally hyperbolic with respect to $T_\alpha^\text{per}$ (the existence of this curve is shown in Section~\reff{S:LinearGapOpening}) and denote its intersection with a neighborhood $U$ of $P_1$ by $I$. Define the local center-stable and center-unstable manifolds
\begin{equation*}
W^{cs}_{loc}(I)=(T_\alpha^\text{int})\inv\set{p \in U: (T_\alpha^\text{per})^n (p) \in U \text{ for all } n \in \N},
\end{equation*}
\begin{equation*}
W^{cu}_{loc}(I)=(T_\alpha^\text{int})\set{p \in U: (T_\alpha^\text{per})^{-n} (p) \in U \text{ for all } n \in \N}.
\end{equation*}
Define also local strong-stable and strong-unstable manifolds
\begin{equation*}
W^{ss}_{loc}(P_1)=(T_\alpha^\text{int})\inv\set{p \in (T_\alpha^\text{int})W^{cs}_{loc}(I): (T_\alpha^\text{per})^n(p) \to P_1 \text{ as } n \to + \infty}
\end{equation*}
\begin{equation*}
W^{uu}_{loc}(P_1)=(T_\alpha^\text{int})\set{p \in (T_\alpha^\text{int})\inv W^{cu}_{loc}(I): (T_\alpha^\text{per})^{-n}(p) \to P_1 \text{ as } n \to + \infty}
\end{equation*}

\begin{lemma}\label{L:Rectification}
Then there exists a smooth change of coordinates $\Phi:U(P_1)\to \mathbb{R}^3$ such that $\Phi(P_1)=(0,0,0)$ and
\begin{itemize}
\item
$\Phi(I)$ is a part of the line $\{x=0, z=0\}$;
\item
$\Phi(W^{cs}_{loc}(I))$ is a part of the plane $\{z=0\}$;
\item
$\Phi(W^{cu}_{loc}(I))$ is a part of the plane $\{x=0\}$;
\item
$\Phi(W^{ss}_{loc}(P_1))$ is a part of the line $\{y=0, z=0\}$;
\item
$\Phi(W^{uu}_{loc}(P_1))$ is a part of the line $\{x=0, y=0\}$.
\end{itemize}
\end{lemma}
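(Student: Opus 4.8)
\textbf{Proof plan for Lemma~\ref{L:Rectification}.} The statement is a normal-hyperbolicity straightening result: we want a single smooth chart near $P_1$ that simultaneously flattens the normally hyperbolic curve $I$, the two center manifolds $W^{cs}_{loc}(I)$ and $W^{cu}_{loc}(I)$, and the two strong manifolds $W^{ss}_{loc}(P_1)$, $W^{uu}_{loc}(P_1)$ into the coordinate axes and planes indicated. The plan is to build $\Phi$ in stages, using at each stage the regularity of the invariant manifolds supplied by the theory of normally hyperbolic invariant manifolds (\cite{PT93}) together with the facts already established in Sections~\ref{S:PeriodicCurve} and \ref{S:LinearGapOpening}: the curve $I$ is normally hyperbolic with respect to $T_\alpha^{\text{per}}$, and $T_\alpha^{\text{per}}$ is hyperbolic (pseudo-Anosov on $\S$, hyperbolic on the bounded set by Theorem~\ref{T:Cantat}), so along $I$ the tangent bundle of $S_\lambda$ (and of $\R^3$) splits smoothly into the one-dimensional strong-stable, one-dimensional strong-unstable, and one-dimensional center directions.

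First I would choose \emph{linearizing} coordinates along $I$: since $I$ is a smooth embedded curve, pick a tubular neighborhood so that $I$ becomes the line $\{x=0,z=0\}$, parametrized by $y$; this handles the first bullet. Next, I would arrange the two center manifolds. Because $W^{cs}_{loc}(I)$ and $W^{cu}_{loc}(I)$ are smooth surfaces (smoothness of center-stable/center-unstable manifolds of a normally hyperbolic curve — note this is where we need the pseudo-Anosov $C^\infty$ structure, not merely $C^1$) containing $I$ and transverse to each other along $I$, after a further smooth change of the two normal coordinates I can flatten $W^{cs}_{loc}(I)$ to $\{z=0\}$ and $W^{cu}_{loc}(I)$ to $\{x=0\}$ — these two surfaces meet exactly along $\{x=0,z=0\}=\Phi(I)$, consistently with the first bullet. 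This gives bullets two and three. Finally, inside the now-flat plane $\{z=0\}$, the strong-stable manifold $W^{ss}_{loc}(P_1)$ is a smooth curve through the origin transverse to $I=\{x=0\}\cap\{z=0\}$, so a smooth change of the coordinate within that plane (extended smoothly off the plane, e.g. by the obvious product structure coming from the $W^{ss}$-foliation of $W^{cs}_{loc}(I)$) sends it to $\{y=0,z=0\}$; symmetrically for $W^{uu}_{loc}(P_1)\subset\{x=0\}$ sent to $\{x=0,y=0\}$. Composing the finitely many smooth changes of coordinates, each fixing $P_1\mapsto(0,0,0)$, yields the desired $\Phi$.

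The one point requiring care is \emph{compatibility}: each successive flattening must not un-flatten what was done before. This is arranged by doing the changes in the order above and taking each change to be the identity on the already-straightened lower-dimensional pieces — e.g. the change flattening $W^{cs}_{loc}(I)$ is chosen to preserve the line $\{x=0,z=0\}$, and the change flattening $W^{ss}_{loc}(P_1)$ is chosen to preserve the plane $\{z=0\}$ and the line $\{x=0,z=0\}$. Because at each stage the object to be straightened is transverse (along the relevant lower-dimensional stratum) to the strata already fixed, such a change of coordinates exists by the implicit function theorem / standard rectification of a submanifold. The main obstacle, and the only nontrivial input, is the \emph{smoothness of the invariant manifolds}: $W^{cs}_{loc}(I)$, $W^{cu}_{loc}(I)$, $W^{ss}_{loc}(P_1)$, $W^{uu}_{loc}(P_1)$ are a priori only as regular as normal-hyperbolicity theory guarantees, and we need them $C^\infty$ (or at least $C^r$ for $r$ large enough to run the distortion estimates in the proof of Theorem~\ref{T:Thickness}). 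This follows here because $T_\alpha^{\text{per}}$ is a polynomial (hence $C^\infty$) map and the relevant spectral gaps along $I$ — the ratios of the center eigenvalue $\pm1$ to the strong eigenvalues $|\mu|>1$, $|\nu|<1$ of $DT_\alpha^{\text{per}}$, computed via the toral automorphism $M_\alpha^{\text{per}}$ through the semi-conjugacy $F$ — give the required $r$-normal hyperbolicity of $I$ for every $r$, so the invariant manifolds are $C^\infty$ by the Fenichel/Hirsch--Pugh--Shub theorem.
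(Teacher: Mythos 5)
Your proposal is correct and takes essentially the same approach as the paper: the paper's proof is a single sentence citing the smoothness of all the invariant manifolds involved (via \cite{HPS77}) and declaring that the rectification follows, which is exactly the content of your argument. You supply the mechanical details (the staged flattening and the compatibility bookkeeping) and the reason the invariant manifolds are $C^\infty$ (the center eigenvalue along the curve of fixed points is $1$, giving $r$-normal hyperbolicity for every $r$), all of which the paper leaves implicit.
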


\begin{proof}
The existence of a rectification $\Phi$ follows from the smoothness of all invariant manifolds involved, see for example \cite{HPS77}.
\end{proof}

\begin{corollary}\label{C:TransInt_InitialCond}
For $\alpha$ with eventually periodic continued fraction expansion, there exists $\lambda_0>0$ such that for $\lambda \in (0,\lambda_0)$, for every $\mathbf{v} \in \Omega_{\lambda, \alpha}$, the stable manifold $W^s(\mathbf{v})$ intersects the line $\ell_\lambda$ transversally.
\end{corollary}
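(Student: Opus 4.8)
The plan is to reduce the transversality of $W^s(\mathbf{v}) \cap \ell_\lambda$ to the already-established transversality at $\lambda=0$, together with a compactness/perturbation argument. First I would work in the coordinates provided by the rectification $\Phi$ of Lemma~\reff{L:Rectification}, in which the invariant manifolds near $P_1$ are flat coordinate planes. Recall that by Lemma~\reff{L:InitialConditions} the line $\ell_\lambda$ lies in $S_\lambda$ and converges (as $\lambda \to 0^+$) to the line $\ell_0 \subset S_0$ through $P_1$; by Proposition~\reff{P:CM_Transversal}, $\ell_0$ is transversal to the central manifold, i.e. $\ell_0$ is \emph{not} tangent to $W^{cs}_{loc}(I)$ (nor to $W^{cu}_{loc}(I)$) at $P_1$. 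Inside $S_\lambda$, the stable manifolds $W^s(\mathbf{v})$ for $\mathbf{v} \in \Omega_{\lambda,\alpha}$ are one-dimensional curves lying in the two-dimensional surface $S_\lambda$; the condition I want is that, at each intersection point of such a curve with $\ell_\lambda$, the tangent line to $W^s(\mathbf{v})$ and the tangent line to $\ell_\lambda$ span the tangent plane $T_p S_\lambda$.

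The key steps, in order, are: (1) At $\lambda=0$, $\Sigma_{0,\alpha} = \ell_0 \cap \B^+_{0,\alpha}$ collapses (the spectrum is an interval), but the relevant geometric fact survives: $\ell_0$, sitting inside $S_0$ and meeting $\mathbb{S}$, is transversal within $S_0$ to the stable lamination of $T_\alpha^{\text{per}}$ — this follows because the stable lamination of the hyperbolic toral automorphism $M_\alpha^{\text{per}}$ is, under the semiconjugacy $F$, the image of a linear foliation of $\T^2$ by lines of irrational slope, and $\ell_0$ is the $F$-image of a curve not tangent to that foliation (this is where Proposition~\reff{P:CM_Transversal} and the rectification enter: near $P_1$ the lamination is the plane $\{z=0\}$ and $\ell_0$ crosses it transversally, and away from $P_1$ one uses the explicit linear picture on $\T^2$). (2) Transversality is an open condition: the set $\Omega_{\lambda,\alpha}^{\text{per}}$ is compact (Corollary~\reff{C:Cantat}), its stable manifolds depend continuously — indeed $C^1$ — on both the base point and $\lambda$ (structural stability / smoothness of the hyperbolic set, cf. \cite{PT93}), and $\ell_\lambda$ varies smoothly with $\lambda$. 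Hence the angle between $W^s(\mathbf{v})$ and $\ell_\lambda$ at any intersection point, as a function on a compact set, is bounded away from tangency for $\lambda=0$ and therefore remains so for all $\lambda \in (0,\lambda_0)$ with $\lambda_0$ small enough. (3) Pull back through $T_\alpha^{\text{int}}$: the stable manifolds of $T_\alpha$ are, by Remark~\reff{R:Partition}, the $T_\alpha^{\text{int}}$-preimages of those of $T_\alpha^{\text{per}}$, and $T_\alpha^{\text{int}}$ is a diffeomorphism preserving $S_\lambda$, so transversality within $S_\lambda$ is preserved under this pullback; finally $\ell_\lambda \subset S_\lambda$ so transversality inside $S_\lambda$ is the same as transversality in $\R^3$ in the relevant sense.

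The main obstacle is step (1): making precise the claim that the limiting line $\ell_0$ is transversal to the stable lamination uniformly along its whole intersection with $\mathbb{S}$, not merely near the fixed point $P_1$. Near $P_1$ this is exactly Proposition~\reff{P:CM_Transversal} read through the rectification $\Phi$ of Lemma~\reff{L:Rectification}, which flattens $W^{cs}_{loc}(I)$ to $\{z=0\}$ and $\ell_0$ to a transversal line; away from $P_1$ one must invoke the semiconjugacy $F$ of Lemma~\reff{L:Semi-conjugacy} to identify the stable lamination of $T_\alpha^{\text{per}}$ on $\mathbb{S}$ with the (linear, irrational-slope) stable foliation of the toral automorphism $M_\alpha^{\text{per}}$, and check that $\ell_0$ corresponds to a curve nowhere tangent to it — this is a finite computation of slopes but requires care where $F$ fails to be a local diffeomorphism (along $\partial\mathbb{S}$). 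Once uniform transversality at $\lambda=0$ is in hand, steps (2) and (3) are routine applications of compactness, openness of transversality, and the smooth dependence of hyperbolic invariant manifolds on parameters.
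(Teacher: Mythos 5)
Your proposal takes a genuinely different route from the paper. The paper's proof of Corollary~\reff{C:TransInt_InitialCond} is essentially a two-sentence citation: Theorem~3(iii) of \cite{DG09a} gives the Fibonacci case, and, with the rectification of Lemma~\reff{L:Rectification} in hand, the general case is asserted to follow directly from Lemma~5.5 of \cite{DG09a}. You instead sketch a self-contained perturbative argument --- establish transversality of $\ell_0$ to the stable lamination at $\lambda=0$, propagate it to small $\lambda>0$ by openness of transversality and compactness, then pull back under $T_\alpha^{\text{int}}$. That outline is reasonable and matches in spirit what the cited lemma does, but there are two genuine gaps.

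First, at $\lambda=0$ the segment $\ell_0\cap\mathbb{S}$ is the $F$-image of the critical circle $\set{\phi=0}\subset\T^2$, and it has \emph{two} conic endpoints: $P_1=F(0,0)$ at $E=2$ and $P_2=F(1/2,0)=(-1,-1,1)$ at $E=-2$, both relevant since $\Sigma_0=[-2,2]$. Your step~(1) discusses only $P_1$. The endpoint $P_2$ requires the same normal-form treatment (or a reduction to $P_1$ via the symmetry $(x,y,z)\mapsto(-x,-y,z)$ of $I$), and this must be said. Moreover, you invoke Proposition~\reff{P:CM_Transversal} as saying ``$\ell_0$ is transversal to the central manifold''; what that proposition actually asserts is that the \emph{central manifold is transversal to $S_\lambda$}. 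The fact you actually need near $P_1$ --- that the direction $(1,1,0)$ of $\ell_0$ is not tangent to the central direction $(a,-1,1)$ read off from Lemma~\reff{L:DT_a} --- is true, but it is a separate one-line slope computation, not a consequence of Proposition~\reff{P:CM_Transversal} as stated.

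Second, and more seriously, step~(2) treats the passage from $\lambda=0$ to $\lambda>0$ as routine ``openness of transversality plus compactness.'' This is exactly where the cited Lemma~5.5 of \cite{DG09a} does its real work. The limit $\lambda\to 0$ is singular: $S_0$ has conic points at both endpoints of $\ell_0\cap\mathbb{S}$, and the stable lamination of $\Omega_{\lambda,\alpha}^{\text{per}}\subset S_\lambda$ does not converge $C^1$-uniformly to the pseudo-Anosov lamination on $\mathbb{S}$ in neighborhoods of those corners. The correct structure is a compactness argument on a compact subset of $\mathbb{S}\setminus\set{P_1,P_2}$ combined with a separate normal-form estimate in a neighborhood of each corner (this is what the rectification of Lemma~\reff{L:Rectification} is for). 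Declaring transversality open ``on a compact set'' elides the fact that the relevant family of stable leaves is not precompact in $C^1$ uniformly down to $\lambda=0$; that is precisely the delicate part of the cited proof, and it is the part your sketch leaves unproved.
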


\begin{proof}
In the case of the Fibonacci Hamiltonian, this is Theorem 3 (iii) in \cite{DG09a}. Given the rectification from Lemma~\reff{L:Rectification}, this follow from directly from Lemma 5.5 in \cite{DG09a}
\end{proof}

\section{Integrated Density of States}\label{S:IDOS}
Restarting the definition for convenience, let $H^\mathfrak{I}_{\lambda, \alpha}$ be $H_{\lambda, \alpha}$ restricted to a finite interval $\mathfrak{I}$ with Dirichlet boundary conditions and let the \defword{integrated density of states} be given by
\begin{equation*}\tag{E:IDOS}
N_{\lambda, \alpha}(E)=\lim_{L \to \infty} \frac{1}{L} \left|\left\{\text{eigenvalues of }H_{\lambda, \alpha}^{[1,L]} \text{ that lie in }(-\infty,E)\right\}\right|.
\end{equation*}
See \cite{Hof93} for a proof that this limit exists and is independent of $\omega \in \T$. See also \cite{KM07} for a survey of the existence of this limit in a very general setting as well as the following equivalent definition. 

The density of states can also be obtained by spectral considerations. Recall that by the spectral theorem, there are Borel probability measures $d\mu_{\lambda, \alpha, \omega}$ on $\R$ such that for bounded, measurable $g$ and Dirac delta $\delta_0$,
$$\brac{\delta_0, g(H_{\lambda, \alpha, \omega})\delta_0}=\int g(E) d\mu_{\lambda, \alpha, \omega}.$$
Averaging this over the phase $\omega$ (with respect to Lebesgue measure)
$$\int_{\omega \in \T} \brac{\delta_0, g(H_{\lambda, \alpha, \omega})\delta_0}=\int g(E) dN_{\lambda, \alpha}(E)$$
gives us the density of states measure $dN_{\lambda, \alpha}(E)$. 
Note that zero coupling constant means that there is no potential and hence $\alpha$ does not affect these values. For convenience, let us introduce the notation $\Sigma_0:=\Sigma_{0, \alpha}$ and $N_{0,\alpha}:=N_0$. It is a standard result that $\Sigma_{0} = [-2,2]$ and
$$N_{0}(E)=\begin{cases} 0 & E \leq - 2 \\ \frac{1}{\pi}\arccos(-\frac{E}{2}) & -2 <E<2\\ 1 & E\geq 2  \end{cases}.$$

\begin{thm:exactdim}\label{T:ExactDimBody}
For $\alpha$ with eventually periodic continued fraction expansion, there exists $0<\lambda_0\leq \infty$ such that for $\lambda \in (0, \lambda_0)$, there is $d_{\lambda, \alpha} \in (0,1)$ so that $dN_{\lambda, \alpha}$ is of exact dimension $d_{\lambda, \alpha}$, that is, for $dN_{\lambda, \alpha}$-almost every $E \in \R$,
$$\lim_{\epsilon \to 0^+} \frac{\log(N_{\lambda, \alpha}(E+\epsilon)-N_{\lambda,\alpha}(E-\epsilon))}{\log \epsilon}=d_{\lambda, \alpha}.$$
Moreover, in $(0, \lambda_0)$, $d_{\lambda, \alpha}$ is a $C^\omega$ function of $\lambda$ and $$\lim_{\lambda \to 0^+} d_{\lambda, \alpha}=1.$$
\end{thm:exactdim}

See also \cite{Gir13} for related results.

\begin{proposition}[Claim 3.1 in \cite{DG12}]\label{P:PushForward}
Let $I=[0, 1/2] \times \set{0}$. The push forward of the normalized Lebesgue measure on $I$ under the semi-conjugacy $F$, which is a probability measure on $\ell_0 \cap \S$, corresponds to the free density of states measure under the identification
$$J_\lambda: E \mapsto \left(\frac{E-\lambda}{2}, \frac{E}{2}, 1 \right)$$
\end{proposition}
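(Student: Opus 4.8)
The plan is to reduce the statement to an elementary one-variable change of variables, carried out at $\lambda=0$, where the identification $J_\lambda$ of the statement is the natural map $J_0\colon E\mapsto(E/2,E/2,1)$. First I would restrict the semi-conjugacy $F$ from \eqref{E:SemiConj} to $I=[0,1/2]\times\set{0}$: since $\phi=0$ there, $F(\theta,0)=(\cos(2\pi\theta),\cos(2\pi\theta),1)$. As $\theta$ runs over $[0,1/2]$ the value $\cos(2\pi\theta)$ decreases strictly and bijectively from $1$ to $-1$, because its derivative $-2\pi\sin(2\pi\theta)$ is negative on $(0,1/2)$; hence $F|_I$ is a homeomorphism onto the segment $\set{(t,t,1):t\in[-1,1]}$. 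I would then check directly that $I(t,t,1)=0$ for every $t$ and that this segment is exactly the part of $\ell_0$ lying in the cube $\set{|x|,|y|,|z|\leq 1}$, so it coincides with $\ell_0\cap\S$; since $J_0$ carries $\Sigma_0=[-2,2]$ bijectively onto $\ell_0\cap\S$, conjugating $F|_I$ by $J_0^{-1}$ turns it into the map $g\colon[0,1/2]\to[-2,2]$, $g(\theta)=2\cos(2\pi\theta)$.

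Next I would push the normalized Lebesgue measure on $I$, which has density $2\,d\theta$, forward by $g$ and compare it with the free density of states measure $dN_0$. The cleanest route is through distribution functions: for $E\in(-2,2)$ put $\theta_0=\tfrac1{2\pi}\arccos(E/2)\in(0,1/2)$, so $g(\theta_0)=E$; as $g$ is strictly decreasing, $\set{\theta\in[0,1/2]:g(\theta)\leq E}=[\theta_0,1/2]$, whose normalized Lebesgue measure is $2(\tfrac12-\theta_0)=1-\tfrac1\pi\arccos(E/2)=\tfrac1\pi\arccos(-E/2)=N_0(E)$, using $\arccos(-x)=\pi-\arccos(x)$. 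Thus the pushforward measure has distribution function $N_0$ and is therefore $dN_0$. Equivalently, differentiating $g$ gives the pushforward density $\tfrac1{2\pi\sqrt{1-E^2/4}}\,dE$, which is precisely $N_0'(E)\,dE$.

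Every step here is a routine computation, so I do not expect a genuine obstacle; the only points needing care are bookkeeping. I would want to confirm that $F(I)$ is precisely $\ell_0\cap\S$, and not a proper sub- or super-segment of it; that $F|_I$ does not fold, which is exactly the strict monotonicity of $\cos(2\pi\theta)$ on $[0,1/2]$; and that the normalization is chosen so that the resulting measure is a probability measure, as asserted. The mildly subtle bit is matching the orientation of $g$ — it is decreasing, so $N_0(E)$ appears as the measure of the right-hand subinterval $[\theta_0,1/2]$ via the identity $\arccos(-x)=\pi-\arccos(x)$ — but this is purely mechanical.
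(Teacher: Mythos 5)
Your computation is correct. The paper does not supply a proof for this proposition; it is stated as a citation (Claim~3.1 of \cite{DG12}), so there is no in-text argument to compare against. Your proposal therefore fills in an omitted verification rather than offering an alternative to an existing one, and every step checks out: on $I$ the semi-conjugacy reduces to $\theta\mapsto(\cos 2\pi\theta,\cos 2\pi\theta,1)$, which is a strictly monotone parametrization of the diagonal segment joining the singularities $P_1=(1,1,1)$ and $P_2=(-1,-1,1)$; that segment lies on $S_0$ (since $I(t,t,1)=0$) and inside the cube, so it is exactly $\ell_0\cap\S$; conjugating by $J_0^{-1}$ gives $g(\theta)=2\cos 2\pi\theta$, and the change-of-variables for the normalized Lebesgue measure (density $2\,d\theta$) produces the distribution function $1-\tfrac1\pi\arccos(E/2)=\tfrac1\pi\arccos(-E/2)=N_0(E)$. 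Your remark that the identification in the statement should be read as $J_0$ (since the image is $\ell_0\cap\S$) resolves the notational slip in the proposition; the $\lambda$-subscript there is inherited from the general definition of $\ell_\lambda$ but is $0$ for the free case. The density cross-check $\tfrac{1}{2\pi\sqrt{1-E^2/4}}=N_0'(E)$ is a useful sanity check and also comes out right.
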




\begin{proof}[Proof of Theorem \reff{T:ExactDimIntro}]
Recall from Lemma~\reff{L:Partition} and Remark~\reff{R:Partition} that there exists a partition on $\Omega_{\lambda, \alpha}$ consisting of rectangles whose boundaries are stable and unstable manifolds. We construct family of measures $\set{\nu_{\lambda, \alpha}}_{\lambda \in [0, \lambda_0)}$ by projecting the normalized Lebesgue measure on rectangle containing $\ell_\lambda$ onto $\ell_\lambda$ along the stable lamination.

For a fixed $\alpha$ with eventually periodic continued fraction expansion, we claim $\set{\nu_{\lambda, \alpha}}_{\lambda \in [0, \lambda_0)}$ satisfies the following properties:
\begin{enumerate}
\item $\supp (\nu_{\lambda, \alpha})=\Sigma_{\lambda, \alpha}$
\item $\nu_{0, \alpha}=dN_0$
\item $\set{\nu_{\lambda, \alpha}}$ depends continuous on $\lambda$ for $\lambda \in [0, \lambda_0)$
\item for any two continuous families of gaps $\set{U_{\lambda, \alpha}}_{\lambda \in [0, \lambda_0)}$ and $\set{W_{\lambda, \alpha}}_{\lambda \in [0, \lambda_0)}$ in the spectrum $\Sigma_{\lambda, \alpha}$, the measure $\nu_{\lambda, \alpha}(E_1, E_2)$, for $E_1 \in U_{\lambda, \alpha}$ and $E_2 \in W_{\lambda, \alpha}$, is independent of $\lambda$.
\end{enumerate}
Part (1) follows from Lemma~\reff{L:InitialConditions} and Part (2) follows from Proposition~\reff{P:PushForward}. Part (3) follows from the continuous dependence of $\Omega_{\lambda, \alpha}$ on $\lambda$. Recall from Remark~\reff{R:StableLamination} that gaps open at precisely the energies in $\Sigma_0$ corresponding intersections of $\ell_0$ with the stable lamination $W^S(\Omega_{\lambda, \alpha})$. It also follows from Lemma~\reff{L:Partition} and Remark~\reff{R:Partition} that the partition on $\Omega_{\lambda, \alpha}$ that depends continuous on $\lambda$. However, by this continuity, the rectangle of the partition in which the gap lies is $\lambda$-independent for sufficiently small $\lambda$. This gives us Part (4). Finally, from this, the proof follows as in the proof of Theorem 1.1 in \cite{DG12}. \end{proof}

\begin{lemma}\label{L:ExpandingDirection}
Let $\alpha=[0;a_1...a_n\overline{a_{n+1}...a_{n+k}}]$, that is $\alpha$ is a rotation angle with eventually periodic continued fraction expansion.  Then $\displaystyle \lim_{j \to \infty} \mathbf{v}_j$ has slope $\alpha \inv$, where $$\mathbf{v}_{j}:=\left(M_{a_j} \circ ... \circ M_{a_1}\right)^T \begin{pmatrix}1\\0\end{pmatrix}.$$\end{lemma}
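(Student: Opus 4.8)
The plan is to recognize $\mathbf{v}_j$ as the image of a fixed starting vector under the product of transpose matrices $M_{a_j}^T \cdots M_{a_1}^T$, and to show that this product, after normalization, converges to the expanding eigendirection of the eventually periodic part. First I would record the elementary fact that $M_a^T = M_a$, since $M_a = \begin{pmatrix} a & 1 \\ 1 & 0 \end{pmatrix}$ is symmetric; thus $\mathbf{v}_j = M_{a_j} M_{a_{j-1}} \cdots M_{a_1} \begin{pmatrix} 1 \\ 0 \end{pmatrix}$ with no transposes at all. The key combinatorial input is the continued-fraction recursion \eqref{E:RationalApproximantP}--\eqref{E:RationalApproximantQ}: one checks by induction that $M_{a_j} \cdots M_{a_1} = \begin{pmatrix} q_j & q_{j-1} \\ p_j & p_{j-1} \end{pmatrix}$ (up to a harmless indexing shift matching the conventions in the excerpt), so that $\mathbf{v}_j = \begin{pmatrix} q_j \\ p_j \end{pmatrix}$. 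The slope of $\mathbf{v}_j$ is therefore $p_j / q_j$, the $j$-th convergent to $\alpha$, and since $p_j/q_j \to \alpha$ the direction of $\mathbf{v}_j$ converges to that of $\begin{pmatrix} 1 \\ \alpha \end{pmatrix}$, i.e. slope $\alpha$.

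I should be careful here about the discrepancy between "slope $\alpha$" and the claimed "slope $\alpha^{-1}$": this is purely a matter of whether slope is measured as (second coordinate)/(first coordinate) or the reverse, and it must be reconciled with the convention fixed in Lemma~\reff{L:Semi-conjugacy}, where $T_a|_{\S}$ is a factor of the toral map $M_a = \begin{pmatrix} a & 1 \\ 1 & 0 \end{pmatrix}$ acting on coordinates $(\theta, \phi)$, and with Lemma~\reff{L:PrimInvertSubs_Decomp}, where $\beta$ (the eigenvector slope) satisfies $\beta = [0; n_1, n_2, \dots]$ while $\alpha = [0; 1, \dots]$. The cleanest route is to invoke the eventual periodicity directly: writing $M_\alpha^{\mathrm{per}} = M_{a_{n+k}} \cdots M_{a_{n+1}}$, the Perron--Frobenius theorem (the entries are nonnegative and a power is strictly positive, since each $M_a$ with $a \geq 1$ is primitive after squaring) gives a unique dominant eigenvalue $\mu > 1$ with a positive eigenvector $\mathbf{w}$. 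Then $\mathbf{v}_{n + mk}$ is, up to the fixed factor $M_\alpha^{\mathrm{int}} = M_{a_n} \cdots M_{a_1}$ applied to $\begin{pmatrix} 1 \\ 0 \end{pmatrix}$, equal to $(M_\alpha^{\mathrm{per}})^m$ applied to a fixed nonzero vector with nonnegative entries; standard iteration of a Perron--Frobenius matrix forces the normalized iterates to converge to $\mathbf{w}$, and the intermediate terms $\mathbf{v}_{n+mk+r}$ for $0 \le r < k$ converge to $M_{a_{n+r}} \cdots M_{a_{n+1}} \mathbf{w}$, which is a positive scalar multiple of $\mathbf{w}$ by the eigenvector property, hence the same direction. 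Finally one identifies the slope of $\mathbf{w}$ with $\alpha^{-1}$ (or $\alpha$, depending on convention) by solving the fixed-point equation $M_\alpha^{\mathrm{per}} \mathbf{w} = \mu \mathbf{w}$, which unwinds to exactly the periodic continued fraction expansion of $\alpha$ as in Lemma~\reff{L:PrimInvertSubs_Decomp} and the remark following Theorem~\reff{T:SturmianInvariant}.

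The main obstacle I anticipate is not analytic but bookkeeping: matching the several competing conventions in the paper — the order of composition in $M_\alpha^{\mathrm{int}}$ and $M_\alpha^{\mathrm{per}}$, the off-by-one in the continued-fraction indices in \eqref{E:RationalApproximantP}--\eqref{E:RationalApproximantQ}, and the transpose in the definition of $\mathbf{v}_j$ — so that the stated conclusion reads "slope $\alpha^{-1}$" rather than "slope $\alpha$". Everything else is soft: once $\mathbf{v}_j$ is written in terms of convergents or in terms of $(M_\alpha^{\mathrm{per}})^m$, convergence of the direction is immediate from either $p_j/q_j \to \alpha$ or from Perron--Frobenius. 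I would present the Perron--Frobenius version as the main argument, since it is the one that transparently produces the eventually periodic continued fraction of the limiting slope and connects directly to the expanding direction of the hyperbolic toral automorphism $F(M_\alpha^{\mathrm{per}})$ used elsewhere in Section~\reff{S:IDOS}.
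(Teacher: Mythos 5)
Your first route, via convergents, is essentially the paper's proof. The paper establishes
\[
M_{a_j}\circ\cdots\circ M_{a_1}=\begin{pmatrix}q_j & p_j\\ q_{j-1} & p_{j-1}\end{pmatrix}
\]
by induction from (\reff{E:RationalApproximantP}) and (\reff{E:RationalApproximantQ}), and reads off that $\mathbf{v}_j=(q_j,p_j)^T$, whose direction converges since $p_j/q_j\to\alpha$. The Perron--Frobenius route you present as the ``main argument'' is correct and does transparently connect the limit to the expanding direction of $M_\alpha^{\mathrm{per}}$, but it is more machinery than the statement requires; the paper uses the shorter convergent computation.

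Two bookkeeping slips in your convergent computation should be fixed, even though they happen to cancel. Each $M_{a_i}$ is symmetric, but transposing a product reverses the order:
\[
(M_{a_j}\cdots M_{a_1})^T = M_{a_1}^T\cdots M_{a_j}^T = M_{a_1}\cdots M_{a_j},
\]
not $M_{a_j}\cdots M_{a_1}$, and these differ since the $M_{a_i}$ do not commute. Separately, your product formula $M_{a_j}\cdots M_{a_1}=\begin{pmatrix}q_j & q_{j-1}\\ p_j & p_{j-1}\end{pmatrix}$ is the transpose of the correct one: already at $j=2$,
\[
M_{a_2}M_{a_1}=\begin{pmatrix}a_2 a_1+1 & a_2\\ a_1 & 1\end{pmatrix}=\begin{pmatrix}q_2 & p_2\\ q_1 & p_1\end{pmatrix},
\]
which disagrees with your version unless $a_1=a_2$. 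The two slips together still deliver $\mathbf{v}_j=(q_j,p_j)^T$, since the first column of $(M_{a_j}\cdots M_{a_1})^T$ is the first row of $M_{a_j}\cdots M_{a_1}$, so the conclusion is unaffected; but with only one of the two slips corrected you would have produced $(q_j,q_{j-1})^T$, whose direction does not converge as $j\to\infty$ for general eventually periodic $\alpha$ (the ratios $q_{j-1}/q_j$ oscillate among the reversed-tail continued fractions). Your flagging of the $\alpha$ versus $\alpha^{-1}$ discrepancy is well placed: the lemma's ``slope'' is being read as $(\text{first coordinate})/(\text{second coordinate})=q_j/p_j\to\alpha^{-1}$, matching its use in the proof of Theorem~\reff{T:GapLabelingIntro}.
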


\begin{proof}
Recall that $p_n/q_n=[0;a_1,...,a_n]$ is the $n$th rational approximant to $\alpha$. It then follows from induction and Equations \reff{E:RationalApproximantP} and \reff{E:RationalApproximantQ} that $$M_{a_j} \circ ... \circ M_{a_1} = \begin{pmatrix} q_j & p_j \\ q_{j-1} & p_{j-1} \end{pmatrix}.$$
This gives us our desired result.
\end{proof}

\begin{thm:lessthandim}
For $\alpha$ with eventually periodic continued fraction expansion and for $\lambda>0$ sufficiently small, we have $$d_{\lambda, \alpha}<\dim_\text{H} \Sigma_{\lambda, \alpha}.$$
\end{thm:lessthandim}

\begin{proof}
For equality of two quantities in Theorem \reff{T:LessThanDimIntro}, the average multipliers over all periodic points of $T_\alpha|_{\Omega_{\lambda,\alpha}}$ must be equal (see the proof of Theorem 1.2 in \cite{DG12} for a detailed explanation of this claim).

Recall Lemma~\reff{L:Semi-conjugacy}, that the map on the surface $\S$ is a factor of a linear map on $\T^2$ via the semi-conjugacy $F: (\theta,\phi) \mapsto (\cos(2 \pi (\theta + \phi)), \cos(2\pi\theta), \cos(2\pi\phi))$. We claim that the expansion rate at the singularity $(1,1,1)$ is squared that of any periodic point which is not a singularity. Since the map on $\T^2$ is linear, the expansion rate on $\T^2$ is constant. Thus, at any periodic point where $DF$ is invertible, the expansion rate at a point $(\theta,\phi)$ on $\T^2$ and $F(\theta,\phi)$ on $\S$ are equal. Now let us consider the eigenline $L(t)=(t, \alpha t)$ (see Lemma~\reff{L:ExpandingDirection}) corresponding to the expanding direction on $\T^2$ under $F$. Observe that $DF$ is not invertible at $(0,0)$ and recall that the first two terms of the Taylor expansion of $\cos(x)$ are $1-\frac{x^2}{2}$. If we consider $\tilde{F}:=F-(1,1,1)$ so that $\tilde{F}$ maps $(0,0)$ to $(0,0,0)$ and $\tilde{F}$ is of order $x^2$ in each coordinate. Consider the image of $L(t)$ under $\tilde{F}$. Direct computation shows that the expansion rate at $(0,0,0)$ is squared that of the constant expansion rate on $\T^2$. This proves our claim and completes the proof of the theorem.
\end{proof}

We also prove that all gaps allowed by the gap labeling theorem are open.

\begin{thm:gaplabel}
For $\alpha$ with eventually periodic continued fraction expansion, there is $\lambda_0>0$ such that for every $\lambda \in (0, \lambda_0]$, all gaps of $H_{\lambda, \alpha}$ allowed by the gap labeling theorem are open. That is $$\set{N_{\lambda,\alpha}(E): E \in \R \bs \Sigma_\lambda}=\set{{m\alpha} : m \in \Z} \cup \set{1}.$$
\end{thm:gaplabel}

\begin{proof}[Proof of Theorem \reff{T:GapLabelingIntro}]
First, for simplicity, let $a \in \N$ and let $\alpha=[0;\overline{a}]$, that is $\alpha\inv$ is a metallic mean. Recall that $\tilde{P}_i$ for $i=1...4$, the preimages under $F$ of the singularities $P_i$, are periodic points of $M_a$ and that $T_\alpha$ is a factor of $M_a$. Recall also that $M_a$ is a $2 \times 2$ matrix acting on $\T^2$ with eigenvalues $\alpha$ and $\alpha \inv$, this follows from an examination of the characteristic polynomial of $M_a$. The stable manifolds of these points intersect the pre-image of $\ell_\lambda$ transversally at the points $\set{m \alpha}$, $\set{m \alpha + \frac{1}{2}}$, $\set{m \alpha + \frac{\alpha}{2}}$, $\set{m \alpha + \frac{1}{2} +\frac{\alpha}{2}}$ for $k \in \Z$.

The images of these points under $F$ are points on the line $\ell_0$ of the form $(\pm \cos(\pi m \alpha), \pm \cos(\pi m \alpha), 1)$. The integrated density of states for the free Laplacian takes values $\set{m \alpha}$ at the point $\set{m \alpha}$. Once we increase $\lambda$ so that it is nonzero, each point of the form $\set{m \alpha}$ gives rise to a pair of stable manifolds. Every point which is between the stable manifolds has an unbounded orbit and thus does not belong to $\Omega_{\lambda,\alpha}$. Hence, for $m \in \Z$, $\set{m \alpha}$ gives rise to an interval outside of $\Sigma_{\lambda, \alpha}$. Since $N_{\lambda, \alpha}(-\infty, E)$ depends continuously on $\lambda$ and since $N_{\lambda, \alpha}(-\infty, \cdot)$ is constant on the complement of $\Sigma_{\alpha, \lambda}$, the integrated density of states takes the same value in the gap as at the energy corresponding to the initial point of intersection of the stable manifold of the singularity with $\ell_0$.

In the general case that $\alpha=[0;a_1...a_n\overline{a_{n+1}...a_{n+k}}]$, Lemma \reff{L:ExpandingDirection} tells us that $\alpha$ is the slope of the stable direction of $T_\alpha$. The proof then follows as above.
\end{proof}

\bibliography{biblio}

\newcommand{\etalchar}[1]{$^{#1}$}
\begin{thebibliography}{LTWW02}

\bibitem[AJ09]{AJ09}
A.~Avila and S.~Jitomirskaya.
\newblock The {T}en {M}artini {P}roblem.
\newblock {\em Ann. of Math. (2)}, 170(1):303--342, 2009.

\bibitem[BBG92]{BBG92}
J.~Bellissard, A.~Bovier, and J.-M. Ghez.
\newblock Gap labelling theorems for one-dimensional discrete {S}chr\"odinger
  operators.
\newblock {\em Rev. Math. Phys.}, 4(1):1--37, 1992.

\bibitem[Bel92a]{Bel92a}
J.~Bellissard.
\newblock Gap labelling theorems for {S}chr\"odinger operators.
\newblock In {\em From number theory to physics ({L}es {H}ouches, 1989)}, pages
  538--630. Springer, Berlin, 1992.

\bibitem[Bel92b]{Bel92b}
J.~Bellissard.
\newblock Renormalization group analysis and quasicrystals.
\newblock In {\em Ideas and methods in quantum and statistical physics ({O}slo,
  1988)}, pages 118--148. Cambridge Univ. Press, Cambridge, 1992.

\bibitem[BG95]{BG95}
A.~Bovier and J.-M. Ghez.
\newblock Remarks on the spectral properties of tight-binding and
  {K}ronig-{P}enney models with substitution sequences.
\newblock {\em J. Phys. A}, 28(8):2313--2324, 1995.

\bibitem[BIST89]{BIST89}
J.~Bellissard, B.~Iochum, E.~Scoppola, and D.~Testard.
\newblock Spectral properties of one-dimensional quasi-crystals.
\newblock {\em Comm. Math. Phys.}, 125(3):527--543, 1989.

\bibitem[Bro91]{Bro91}
T.~C. Brown.
\newblock A characterization of the quadratic irrationals.
\newblock {\em Canad. Math. Bull.}, 34(1):36--41, 1991.

\bibitem[BS02]{BS02}
M.~Brin and G.~Stuck.
\newblock {\em Introduction to dynamical systems}.
\newblock Cambridge University Press, Cambridge, 2002.

\bibitem[Can09]{Can09}
S.~Cantat.
\newblock Bers and {H}\'enon, {P}ainlev\'e and {S}chr\"odinger.
\newblock {\em Duke Math. J.}, 149(3):411--460, 2009.

\bibitem[Cas86]{Cas86}
M.~Casdagli.
\newblock Symbolic dynamics for the renormalization map of a quasiperiodic
  {S}chr\"odinger equation.
\newblock {\em Comm. Math. Phys.}, 107(2):295--318, 1986.

\bibitem[CMPS93]{CMPS93}
D.~Crisp, W.~Moran, A.~Pollington, and P.~Shiue.
\newblock Substitution invariant cutting sequences.
\newblock {\em J. Th\'eor. Nombres Bordeaux}, 5(1):123--137, 1993.

\bibitem[Dam00]{Dam00}
D.~Damanik.
\newblock Substitution {H}amiltonians with bounded trace map orbits.
\newblock {\em J. Math. Anal. Appl.}, 249(2):393--411, 2000.

\bibitem[Dam07]{Dam07}
D.~Damanik.
\newblock Strictly ergodic subshifts and associated operators.
\newblock In {\em Spectral theory and mathematical physics: a {F}estschrift in
  honor of {B}arry {S}imon's 60th birthday}, volume~76 of {\em Proc. Sympos.
  Pure Math.}, pages 505--538. Amer. Math. Soc., Providence, RI, 2007.

\bibitem[Dam08]{Dam08}
D.~Damanik.
\newblock The spectrum of the almost mathieu operator.
\newblock arXiv:0908.1093, 2008.

\bibitem[DEG13]{DEG13}
D.~Damanik, M.~Embree, and A.~Gorodetski.
\newblock Spectral properties of {S}chr\"odinger operators arising in the study
  of quasicrystals.
\newblock arXiv:1210.5753, 2013.

\bibitem[DEGT08]{DEGT08}
D.~Damanik, M.~Embree, A.~Gorodetski, and S.~Tcheremchantsev.
\newblock The fractal dimension of the spectrum of the {F}ibonacci
  {H}amiltonian.
\newblock {\em Comm. Math. Phys.}, 280(2):499--516, 2008.

\bibitem[DG09a]{DG09a}
D.~Damanik and A.~Gorodetski.
\newblock Hyperbolicity of the trace map for the weakly coupled {F}ibonacci
  {H}amiltonian.
\newblock {\em Nonlinearity}, 22(1):123--143, 2009.

\bibitem[DG09b]{DG09b}
D.~Damanik and A.~Gorodetski.
\newblock The spectrum of the weakly coupled {F}ibonacci {H}amiltonian.
\newblock {\em Electron. Res. Announc. Math. Sci.}, 16:23--29, 2009.

\bibitem[DG11]{DG11}
D.~Damanik and A.~Gorodetski.
\newblock Spectral and quantum dynamical properties of the weakly coupled
  {F}ibonacci {H}amiltonian.
\newblock {\em Comm. Math. Phys.}, 305(1):221--277, 2011.

\bibitem[DG12]{DG12}
D.~Damanik and A.~Gorodetski.
\newblock The {D}ensity of {S}tates {M}easure of the {W}eakly {C}oupled
  {F}ibonacci {H}amiltonian.
\newblock {\em Geom. Funct. Anal.}, 22(4):976--989, 2012.

\bibitem[DL06]{DL06}
D.~Damanik and D.~Lenz.
\newblock Zero-measure {C}antor spectrum for {S}chr\"odinger operators with
  low-complexity potentials.
\newblock {\em J. Math. Pures Appl. (9)}, 85(5):671--686, 2006.

\bibitem[dM73]{dM73}
W.~de~Melo.
\newblock Structural stability of diffeomorphisms on two-manifolds.
\newblock {\em Invent. Math.}, 21:233--246, 1973.

\bibitem[Fog02]{Fog02}
N.~P. Fogg.
\newblock {\em Substitutions in dynamics, arithmetics and combinatorics},
  volume 1794 of {\em Lecture Notes in Mathematics}.
\newblock Springer-Verlag, Berlin, 2002.
\newblock Edited by V. Berth{\'e}, S. Ferenczi, C. Mauduit and A. Siegel.

\bibitem[Ger08]{Ger08}
L.~J. Gerstein.
\newblock {\em Basic quadratic forms}, volume~90 of {\em Graduate Studies in
  Mathematics}.
\newblock American Mathematical Society, Providence, RI, 2008.

\bibitem[Gir14]{Gir13}
A.~Girand.
\newblock Dynamical {G}reen functions and discrete {S}chr\"odinger operators
  with potentials generated by primitive invertible substitution.
\newblock {\em Nonlinearity}, 27(3):527--543, 2014.

\bibitem[HM07]{HM07}
S.~Humphries and A.~Manning.
\newblock Curves of fixed points of trace maps.
\newblock {\em Ergodic Theory Dynam. Systems}, 27(4):1167--1198, 2007.

\bibitem[Hof93]{Hof93}
A.~Hof.
\newblock Some remarks on discrete aperiodic {S}chr\"odinger operators.
\newblock {\em J. Statist. Phys.}, 72(5-6):1353--1374, 1993.

\bibitem[HPS77]{HPS77}
M.~W. Hirsch, C.~C. Pugh, and M.~Shub.
\newblock {\em Invariant manifolds}.
\newblock Lecture Notes in Mathematics, Vol. 583. Springer-Verlag, Berlin,
  1977.

\bibitem[Khi97]{Khi97}
A.~Ya. Khinchin.
\newblock {\em Continued fractions}.
\newblock Dover Publications Inc., Mineola, NY, russian edition, 1997.
\newblock With a preface by B. V. Gnedenko, Reprint of the 1964 translation.

\bibitem[KKT83]{KKT83}
M.~Kohmoto, L.~P. Kadanoff, and C.~Tang.
\newblock Localization problem in one dimension: mapping and escape.
\newblock {\em Phys. Rev. Lett.}, 50(23):1870--1872, 1983.

\bibitem[KM07]{KM07}
W.~Kirsch and B.~Metzger.
\newblock The integrated density of states for random {S}chr\"odinger
  operators.
\newblock In {\em Spectral theory and mathematical physics: a {F}estschrift in
  honor of {B}arry {S}imon's 60th birthday}, volume~76 of {\em Proc. Sympos.
  Pure Math.}, pages 649--696. Amer. Math. Soc., Providence, RI, 2007.

\bibitem[Len02]{Len01}
D.~Lenz.
\newblock Singular spectrum of {L}ebesgue measure zero for one-dimensional
  quasicrystals.
\newblock {\em Comm. Math. Phys.}, 227(1):119--130, 2002.

\bibitem[LPW07]{LPW07}
Q.-H. Liu, J.~Peyri{\`e}re, and Z.-Y. Wen.
\newblock Dimension of the spectrum of one-dimensional discrete {S}chr\"odinger
  operators with {S}turmian potentials.
\newblock {\em C. R. Math. Acad. Sci. Paris}, 345(12):667--672, 2007.

\bibitem[LS05]{LS05}
D.~Lenz and P.~Stollmann.
\newblock An ergodic theorem for {D}elone dynamical systems and existence of
  the integrated density of states.
\newblock {\em J. Anal. Math.}, 97:1--24, 2005.

\bibitem[LTWW02]{LTWW02}
Q.-H. Liu, B.~Tan, Z.-X. Wen, and J.~Wu.
\newblock Measure zero spectrum of a class of {S}chr\"odinger operators.
\newblock {\em J. Statist. Phys.}, 106(3-4):681--691, 2002.

\bibitem[Mat02]{Mat02}
Y.~Matsumoto.
\newblock {\em An introduction to {M}orse theory}, volume 208 of {\em
  Translations of Mathematical Monographs}.
\newblock American Mathematical Society, Providence, RI, 2002.
\newblock Translated from the 1997 Japanese original by Kiki Hudson and
  Masahico Saito, Iwanami Series in Modern Mathematics.

\bibitem[MH40]{MH40}
M.~Morse and G.~A. Hedlund.
\newblock Symbolic dynamics {II}. {S}turmian trajectories.
\newblock {\em Amer. J. Math.}, 62:1--42, 1940.

\bibitem[MY14]{MY14}
M.~Mei and W.~Yessen.
\newblock Tridiagonal substitution {H}amiltonians.
\newblock To appear in {\it Math. Model. Nat. Phenom.}, 2014.

\bibitem[New70]{New70}
S.~E. Newhouse.
\newblock Nondensity of axiom {${\rm A}({\rm a})$} on {$S^{2}$}.
\newblock In {\em Global {A}nalysis ({P}roc. {S}ympos. {P}ure {M}ath., {V}ol.
  {XIV}, {B}erkeley, {C}alif., 1968)}, pages 191--202. Amer. Math. Soc.,
  Providence, R.I., 1970.

\bibitem[New79]{New79}
S.~E. Newhouse.
\newblock The abundance of wild hyperbolic sets and nonsmooth stable sets for
  diffeomorphisms.
\newblock {\em Inst. Hautes \'Etudes Sci. Publ. Math.}, (50):101--151, 1979.

\bibitem[OPR{\etalchar{+}}83]{OPRSS83}
S.~Ostlund, R.~Pandit, D.~Rand, H.~J. Schellnhuber, and E.~D. Siggia.
\newblock One-dimensional {S}chr\"odinger equation with an almost periodic
  potential.
\newblock {\em Phys. Rev. Lett.}, 50(23):1873--1876, 1983.

\bibitem[PT93]{PT93}
J.~Palis and F.~Takens.
\newblock {\em Hyperbolicity and sensitive chaotic dynamics at homoclinic
  bifurcations}, volume~35 of {\em Cambridge Studies in Advanced Mathematics}.
\newblock Cambridge University Press, Cambridge, 1993.
\newblock Fractal dimensions and infinitely many attractors.

\bibitem[PWW93]{PWW93}
J.~Peyri{\`e}re, Z.Y. Wen, and Z.X Wen.
\newblock Polyn\^omes associ\'es aux endomorphismes de groupes libres.
\newblock {\em Enseign. Math. (2)}, 39(1-2):153--175, 1993.

\bibitem[Rob96]{Rob96}
J.~A.~G. Roberts.
\newblock Escaping orbits in trace maps.
\newblock {\em Phys. A}, 228(1-4):295--325, 1996.

\bibitem[SBGC84]{SBGC84}
D.~Shechtman, I.~Blech, D.~Gratias, and J.W. Cahn.
\newblock Metallic phase with long-range orientational order and no
  translational symmetry.
\newblock {\em Phys. Rev. Lett.}, 53(20):1951--1953, 1984.

\bibitem[S{\"u}t87]{Sut87}
A.~S{\"u}t{\H{o}}.
\newblock The spectrum of a quasiperiodic {S}chr\"odinger operator.
\newblock {\em Comm. Math. Phys.}, 111(3):409--415, 1987.

\bibitem[S{\"u}t89]{Sut89}
A.~S{\"u}t{\H{o}}.
\newblock Singular continuous spectrum on a {C}antor set of zero {L}ebesgue
  measure for the {F}ibonacci {H}amiltonian.
\newblock {\em J. Statist. Phys.}, 56(3-4):525--531, 1989.

\bibitem[S{\"u}t95]{Sut95}
A.~S{\"u}t{\H{o}}.
\newblock Schr\"odinger difference equation with deterministic ergodic
  potentials.
\newblock In {\em Beyond quasicrystals ({L}es {H}ouches, 1994)}, pages
  481--549. Springer, Berlin, 1995.

\bibitem[TW03]{TW03}
B.~Tan and Z.~Y. Wen.
\newblock Invertible substitutions and {S}turmian sequences.
\newblock {\em European J. Combin.}, 24(8):983--1002, 2003.

\bibitem[WW94]{WW94}
Z.~X. Wen and Z.~Y. Wen.
\newblock Local isomorphisms of invertible substitutions.
\newblock {\em C. R. Acad. Sci. Paris S\'er. I Math.}, 318(4):299--304, 1994.

\end{thebibliography}
\bibliographystyle{alpha}
\end{document}